\newtheorem{theorem}{Theorem}
\newtheorem{remark}{Remark}
\newtheorem{proposition}{Proposition}
\newtheorem{lemma}{Lemma}
\begin{document}
  \title{ QoS-aware Stochastic Spatial PLS Model for Analysing  Secrecy Performance under Eavesdropping and Jamming  } 
\author 
  {Bhawna Ahuja,~\IEEEmembership{Student~Member,~IEEE,}  Deepak Mishra,~\IEEEmembership{Member,~IEEE,}  and Ranjan Bose,~\IEEEmembership{Senior~Member,~IEEE} 
  \thanks{B. Ahuja  and R. Bose are  with the Bharti School of  Telecommunication Technology and Management, IIT Delhi,110016 New Delhi.  R. Bose is also affiliated with IIIT Delhi.   e-mail: (bhawna.ahuja,rbose)@iitd.ac.in}
  \thanks{D. Mishra is with the Department of
  School of Electrical Engineering and Telecommunications, UNSW Sydney, NSW 2052, Australia
   (e-mail: d.mishra@unsw.edu.au).} 
    \thanks{This work has been supported by the Department of Electron and Inform. Technol., Govt. of India under Vishvesvaraya PhD Fellowship scheme.}
}
\maketitle
\begin{abstract}
    Securing wireless communication, being inherently vulnerable to eavesdropping and jamming attacks, becomes more challenging in resource-constrained networks like Internet-of-Things. Towards this, physical layer security (PLS) has gained significant attention due to its low complexity. In this paper, we address the issue of random inter-node distances in secrecy analysis and develop a comprehensive quality-of-service (QoS) aware PLS framework for the analysis of both eavesdropping and jamming  capabilities of attacker. The proposed solution covers spatially stochastic deployment of legitimate nodes and  attacker. We characterise the secrecy outage performance against both attacks using inter-node distance based probabilistic distribution functions. The model takes into account the practical limits arising out of underlying QoS requirements, which include the maximum distance between legitimate users driven by transmit power and receiver sensitivity. A novel concept of eavesdropping zone is introduced, and relative impact of jamming power is investigated. Closed-form expressions for asymptotic secrecy outage probability are derived offering insights into design of optimal system parameters for desired security level against the attacker's capability of both attacks.  Analytical framework, validated by numerical results, establishes that the proposed solution offers potentially accurate characterisation of the PLS performance and key design perspective from point-of-view of both legitimate user and  attacker.
\end{abstract} 

\maketitle
\section{Introduction}\label{sec:intro}
\let\thefootnote\relax\footnote{A preliminary version of this paper has been  presented at IEEE ICC in  Shanghai, China, May 2019\cite{myICC}.}
Wireless communication owing to its open and broadcast nature is highly vulnerable to eavesdropping and jamming attacks. Recently, physical layer security (PLS) has recently drawn remarkable attention of the researchers in  resources constrained wireless networks like Internet of Things (IoT)  \cite{IoTPLS15}.
 {Various PLS techniques have been reported in the literature with a major focus on eavesdropping attack 
 and friendly jamming. Here, legitimate nodes leverage the interference to secure the communication and 
 the attacker is restricted to act as a mere passive adversary \cite{E1_TWC, E1_TVT, R1_TcomJam, R1_IET, R1_TWC2PLsurvey}.} However, the attacker may also use the interference to its advantage by jamming the legitimate reception other than eavesdropping. It is, therefore, of interest to study the attacker in both  eavesdropping and jamming modes. 
Furthermore,  spatial configurations of legitimate nodes and  attackers are, in general, modelled in deterministic manner. This assumption is only valid  if the location of the node is known. 
Therefore, appropriate  modelling of a stochastic network is needed  to investigate the critical role of random inter-node separations on secrecy performance when the exact location information is not available. 
\subsection{Related Art}\vspace{-1mm}
 PLS is extensively explored to analyse secrecy capacity (SC) and secrecy outage probability (SOP) in various scenarios including High SNR regime \cite{ Liu2013, R32_noma,IET_R21, IET_R22}.
  {In \cite{R32_noma}, artificial noise is exploited to enhance security in
 multiple-input single-output non-orthogonal multiple access (MISO-NOMA) systems by developing a secrecy beamforming scheme. 
Impact of imperfect channel state information (CSI) is also widely explored in analysing the network performance \cite{IET_R21, IET_R22} when perfect CSI is not available.}
However, these works  ignored the randomness caused by large-scale propagation losses.
There is another line of research exploring the secure communication in spatial stochastic networks  \cite{Pinto08,Pinto1,distancefading2016,secrecyguard18}.   
These works explore the secrecy performance under a passive eavesdropper assuming the spatial distribution of the nodes as a homogeneous  Poisson Point Process (PPP) which may be  a good approximation for large-scale networks with known network density. 
{The stationary and isotropic  properties of homogeneous PPP consider that characteristics of network   as viewed from a node's aspect are similar for all the  nodes. However,  this assumption is not valid in  practical networks, especially having a finite number of nodes within a given area \cite{finite}}. 
Further, it is incapable of  analysing  the average performance measures at a randomly deployed node which has significant importance  in realistic networks. Some scenarios showing limitations of above models under   Device-to-Device (D2D) or sensor networks  has been examined in \cite{nodaldistance,D2D}. 

  Apart from aforementioned works,  studies in 
  \cite{Wyner_Encoding,Garnaev,MyWCL} exhibit a recent research interest on hybrid attackers that can either eavesdrop or jam. 
{The authors have investigated the full duplex attacker  in \cite{R31_noma,MyICC20} where  passive eavesdropping and active jamming are performed simultaneously.}
In \cite{TWC19Hybrid} secrecy performance of a wireless network with randomly deployed hybrid attackers is analysed  using stochastic geometry tools and random matrix theory.   Recently, much attention is being paid to  the utilisation  of distance distributions as a complement to PPP models for performance analysis in wireless networks \cite{nodaldistance,D2D} without considering secrecy requirements. 
\vspace{-2mm}
\subsection{Research Gap and Motivation}
\vspace{-1mm}
As noted in the literature survey, PLS  in spatial stochastic networks is mainly studied under the assumption of homogeneous PPP  based deployment of nodes \cite{Pinto08,Pinto1,distancefading2016, secrecyguard18,TWC19Hybrid}
  which are non-viable in several  practical scenarios \cite{nodaldistance,D2D}.  
 Moreover, most  studies of stochastic networks consider a passive attacker that can only eavesdrop. Recently, works in \cite{Wyner_Encoding,Garnaev,MyWCL} have  shown the growing interest in hybrid attackers. But, above-mentioned works  consider the deterministic path-loss. The authors in  \cite{TWC19Hybrid}  have  studied the hybrid  attacker considering random path-loss,  however, analysis is done under the PPP assumption and available perfect CSI of the users to the source.    These observations reveal that  secrecy analysis  in a spatial stochastic network for  a general scenario  under eavesdropping as well as jamming mode of the attacker is still an open research problem. 
   In practical networks including IoT and D2D, legitimate nodes  as well as attacker  may be deployed randomly in the deployment regions; consequently, the exact location of any node may not be  available. 
    Furthermore, connection between two nodes will be set up  when  the geographical separation between them is smaller than a predefined threshold for retaining Quality-of-Service (QoS). In a similar way, attacker's capabilities to  eavesdrop and jam  also influence the secrecy performance;  hence they are indirect measure of QoS attributes from attacker's point of view. Eavesdropping capability is practically constrained by hardware limitations while jamming capability is constrained by power consumption. It is to be mentioned here that these QoS-governing  parameters did not get due attention in literature.  These facts motivate the study of this work for developing a practical and comprehensive stochastic model  so as to quantify secrecy performance in a realistic manner.  
\subsection{Key Contributions}
This work, aimed at filling the mentioned research gap, has the following key contributions:
\begin{itemize}
 \item 	{We have proposed a novel generalised QoS-aware stochastic spatial PLS to investigate secrecy analysis in the presence of an attacker with both capabilities- eavesdropping and jamming.} Here, the key aspect is that our model allows any arbitrary location for nodes within the deployment region \textit{(Section 2)}.%
\item First-time characterisation of attacker's eavesdropping capability is provided in terms of novel eavesdropping zone concept. The proposed model also incorporates the restriction being imposed  on the maximum distance between legitimate nodes arising out of the underlying QoS requirement. Considering these practical constraints, the distribution functions of the distances and SNRs of legitimate, eavesdropping and jamming links are derived. Closed-form expressions are also derived for ratio distributions of  legitimate-to-attacker link SNRs. This is a vital figure of merit being applied to derive the secrecy performance measures \textit{(Section 3)}.
\item Using probabilistic distance based distributions, 
we obtain expression for secrecy outage probability 
having considered attacker's eavesdropping capability.  As such distribution functions could be used in a class of generalised practical scenarios.   We also derived closed-form expressions  for  SOP in asymptotic scenario \textit{(Section 4)}.
\item  Probabilistic characterisation of  secrecy outage  is also provided  having considered attacker's jamming capability. For this case also, SOP is derived in closed-form for asymptotic case \textit{(Section 4)}.
 \item A generalised framework is provided  for different possible deployment configurations for legitimate nodes and attacker. Also, novel and significant analytical insights  about designing of different system parameters  are provided 
under eavesdropping  and jamming attacks.
Maximum allowed separation between legitimate nodes are determined for achieving desired  secrecy performance  from the user's perspective, while the optimal value of eavesdropping range and jamming power are investigated from the attacker's perspective. \textit{(Section 5)}.
 \item Numerical results validate the proposed analysis  and present secrecy-aware key design perspective  by analysing the impact of inter-node distances on secrecy performance.  The relative severity of eavesdropping and jamming capabilities of attacker are also compared. It is shown that attacker must  have more transmit power  than legitimate source to  create more adverse conditions during jamming as compared to eavesdropping provided that it is capable to eavesdrop in entire  deployment range
\textit{(Section 6)}.
\end{itemize}
\subsection{Novelty and Scope}
 
To the best of our knowledge, this is the first work  \textit{ 
 that adopts a  probabilistic distance-based practical model with consideration of the real-time constraints in terms of QoS-controlling parameters  for secrecy analysis under eavesdropping as well as jamming. }
   We also propose \textit{eavesdropping zone concept} to incorporate the effect caused by eavesdropping capability of attacker
 and  present 
   novel analysis on the impact of stochastic inter-node distances on secrecy outage.

  Scope of this work includes, though not limited to, development of a system with desired secrecy performance  by exploiting the randomness of inter-node distances. This work, providing insights on designing of parameters of legitimate nodes and attacker, can be extended to multiple-input multiple-output (MIMO) model.  Directional  beam-forming can  be applied to enhance the secrecy further. Additional insights can also be obtained by considering different fading environments. {Another important extension of the work may include the selection of appropriate attacking mode and optimisation of designing parameters of attacker  with the objective of minimising the achievable secrecy transmission rate.}
Though the widespread utility of the derived closed-form expressions for secrecy analysis is constrained by high SNR, they provide lower bounds on SOP  with the positive SC to give useful insights. There is a class of practical applications that can benefit from proposed designs. These include small cell networks, IoT networks, ad-hoc networks where  nodes are deployed within a small cell area. Hence, the impacts of leakage due to wiretapping and  interference due to jamming  are much greater than noise in the channel.
\section{ Proposed System Model}\label{sec:systm model}
   {In this section, we present the  proposed system model including the spatially stochastic network   topology, QoS-aware PLS model, channel model along with link SNR and SC  definitions for background setup of the problem for both eavesdropping and jamming  modes of attacker. }
 \subsection{Network topology}
{We consider a secure communication IoT scenario comprising of uniformly deployed multiple source-user pairs, where  each pair is allocated orthogonal resources in terms of a dedicated time slot, or set of sub-carriers. The ongoing communication is assumed to be under security threat from a hybrid  attacker with half-duplex capability. 
It  can over-hear ongoing transmission as an eavesdropper through wiretapping or can act as a jammer causing interference to the information flow \cite{Wyner_Encoding,Garnaev, MyWCL}.
It  is further assumed to be present within deployment region with some statistical information about its location. However, 
the  relative distances between source, user and attacker are  unknown to each other exactly.}
{Adopting this orthogonal multi-access IoT setup, we can focus on any randomly chosen source-user pair  and investigate the impact of attacker’s presence on the  ongoing secure transmission. It may be noted that this discussion and the proceeding analysis holds for any randomly-selected  source-user pair. Here, such an information source $\mathcal{S}$ and legitimate user $\mathcal{U}$  are assumed to be spatially distributed with uniform distribution in deployment region of radius $R$. Whereas, the attacker $\mathcal{A}$ can be randomly located anywhere in the deployment region. }{ Without loss of generality, we  explore secrecy performance 
    assuming $\mathcal{A}$ to be located at the origin, 
    as depicted in Fig. \ref{fig: system_model}.}  {Novel investigations are  carried out under eavesdropping and jamming mode separately.
    Here, $\mathcal{A}$ is also considered as an IoT device so as to invade the network  with malicious intentions by disguising itself. Considering form-factor constraints and lower hardware complexity of IoT devices, all nodes, including attacker, are assumed to be equipped with a single antenna \cite{HW_R23,HW_R24,HW_R25,HW_R26}.}
   
 \begin{figure*}[!t]    
      \begin{minipage}{.48\textwidth}    
        \centering
        \centering
        {{\includegraphics[width=1.7in]{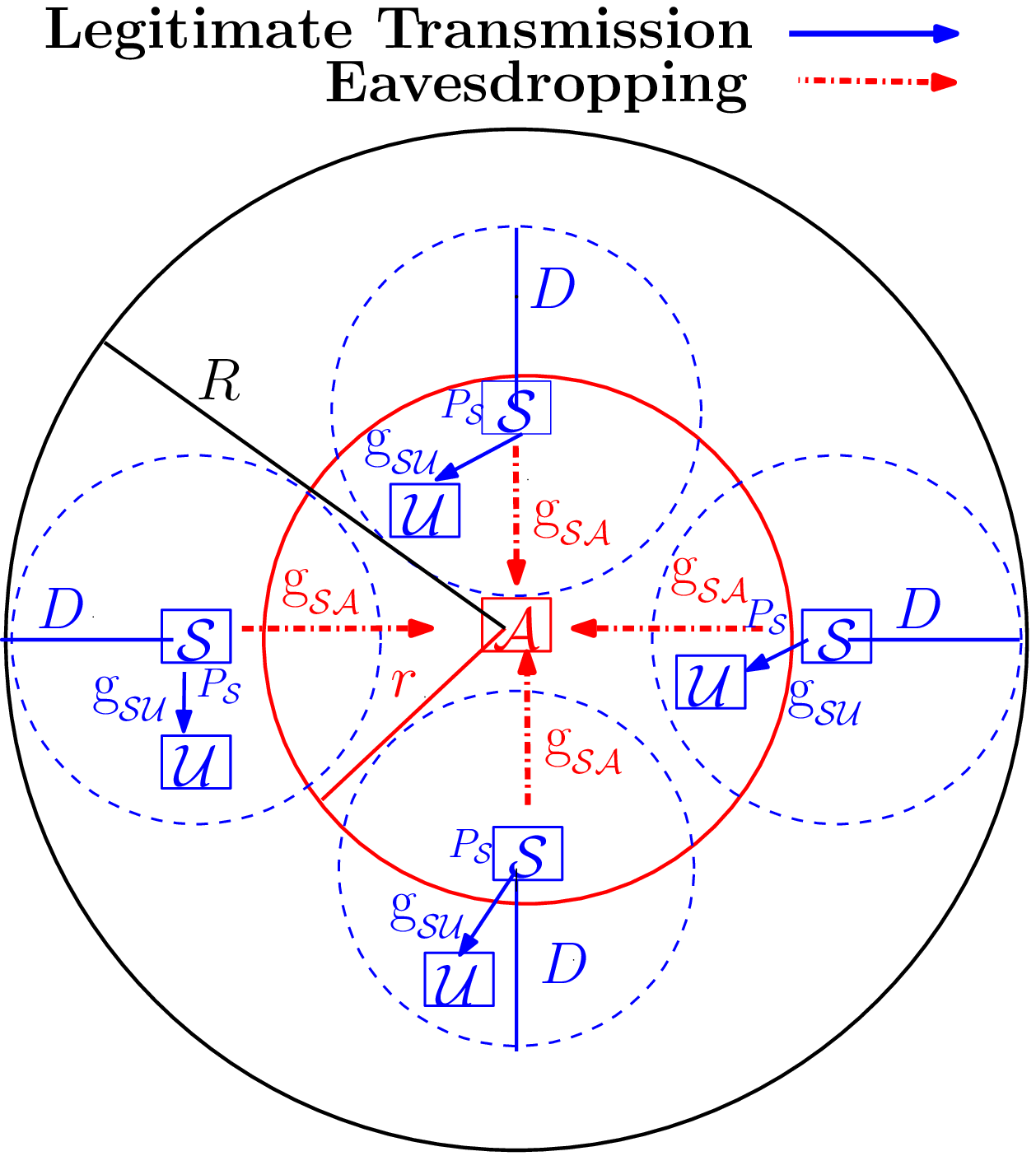} }}
         \end{minipage}
    \begin{minipage}{.48\textwidth}        
        \centering
          {{\includegraphics[width=1.7in]{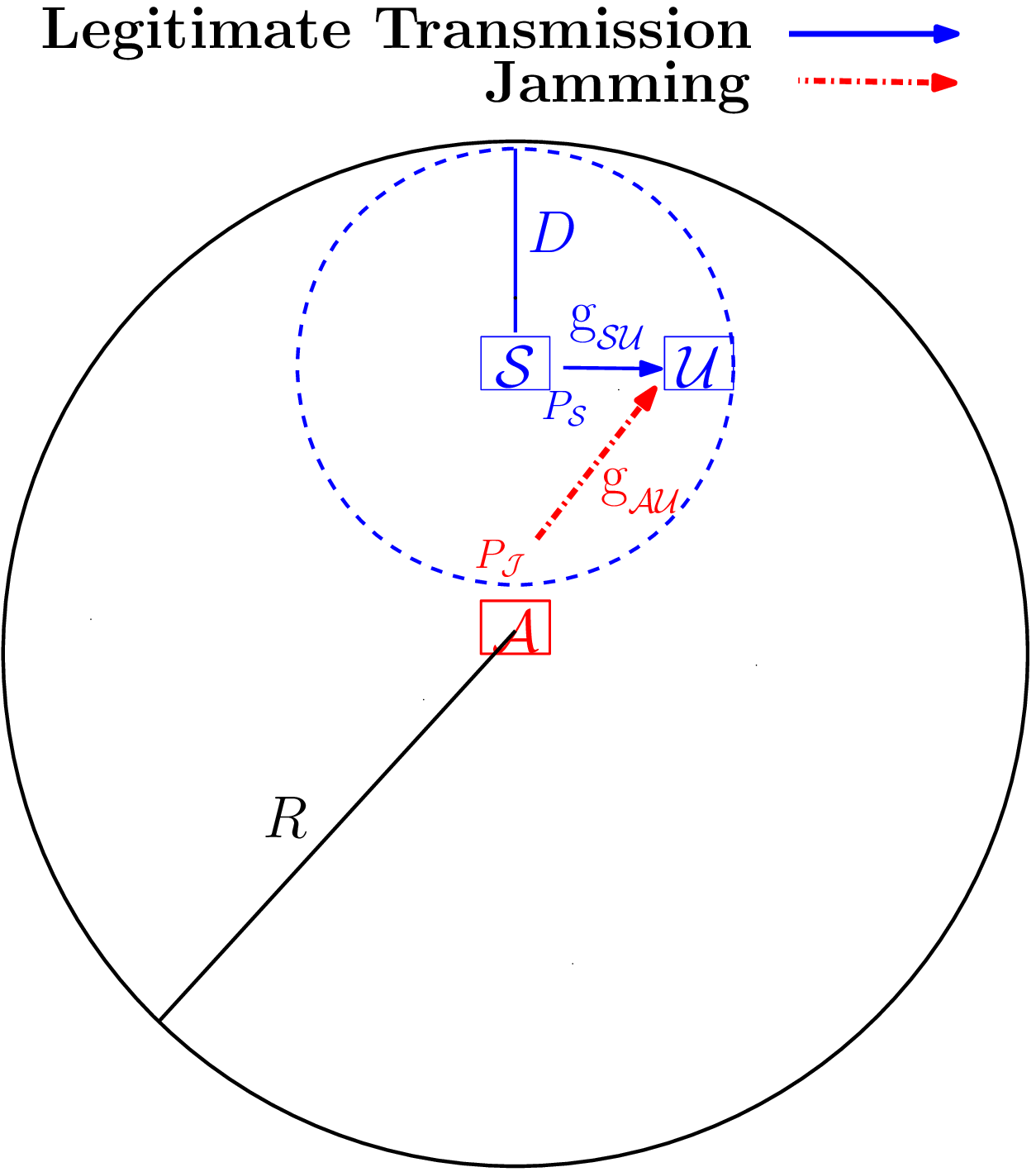} }}
       \end{minipage} 
     \caption{\footnotesize \textit{ QoS-aware stochastic spatial PLS model  with attacker in eavesdropping and jamming mode.} }
       \label{fig: system_model} \vspace{-4mm}
\end{figure*}  
\subsection{QoS-Aware PLS  Model}
{The proposed  model investigates the  PLS  based secrecy metrics under eavesdropping as well as jamming. Here the phrase  ``QoS-aware" 	is associated with the obtaining of secrecy QoS requirements through
	critical design parameters of user as well as attacker from their respective point of view. These parameters include the distance threshold $D$, deployment range $R$ and source power 	$P_\mathcal{S}$ from the user's perspective. On the other hand, attacker's parameters like eavesdropping range $r $ and  jamming power $ P_{\mathcal{J}} $ also represent the capabilities to impact the network secrecy performance. Hence, they reflect the indirect measure of QoS attribute from attacker’s perspective.
	The proposed model first time characterises the secrecy performance by considering the real-world constraints in terms of these  parameters defined by the underlying QoS requirements. Following this, 
 the maximum possible geographical distance between $\mathcal{S}$  and $\mathcal{U}$ has been restricted  to $ D$ determined  by the required QoS.  The distance parameter $D$ is utilised to ensure the minimum QoS requirement  over $\mathcal{S}$- $\mathcal{U}$ link. Further, for taking hardware limitations of $\mathcal{A}$ into account, we consider that under eavesdropping mode, the $\mathcal{A}$ remains effective in the disc area defined as eavesdropping zone  with centre  $\mathcal{A}$ and radius $r$. Here, $r$ is defined as the eavesdropping range of $\mathcal{A}$. It represents the maximum distance, to which $\mathcal{A}$ can eavesdrop the legitimate signal.} We have categorised all possible deployment scenario  of  $\mathcal{S}$ and $\mathcal{U}$  with respect to  $\mathcal{A}$'s eavesdropping zone   into the  four cases as illustrated  in Fig. \ref{fig: system_model}. These are:
 \begin{itemize}
  \item \textit{Case 1:} Both $\mathcal{S}$ and $\mathcal{U}$  are deployed within  the $\mathcal{A}$'s eavesdropping zone, i.e., ${d_{_{\mathcal{S}\mathcal{A}}}} <r $ and ${d_{_{\mathcal{U}\mathcal{A}}}} <r. $

    \item \textit{Case 2 :} $\mathcal{S}$  is deployed within   $\mathcal{A}$'s eavesdropping zone  while  $\mathcal{U}$ is  outside, i.e., ${d_{_{\mathcal{S}\mathcal{A}}}} <r $ and $ r \leq {d_{_{\mathcal{U}\mathcal{A}}}} <R. $
{
    \item \textit{Case 3 :}  $\mathcal{U}$  is deployed within   $\mathcal{A}$'s eavesdropping zone  while  $\mathcal{S}$ is outside,  i.e., $ r \leq {d_{_{\mathcal{S}\mathcal{A}}}} < R $ and ${d_{_{\mathcal{U}\mathcal{A}}}} <r. $
    \item \textit{Case 4 :} Both $\mathcal{S}$ and $\mathcal{U}$ both are deployed outside   $\mathcal{A}$'s eavesdropping zone, i.e., $ r\leq {d_{_{\mathcal{S}\mathcal{A}}}} <R $ and $r\leq {d_{_{\mathcal{U}\mathcal{A}}}} <R. $}\\
      \end{itemize} 
      Here,  the distance between node $i$ and $j$ is represented by  $d_{_{ij}}, \;\forall\; i,j=\left\lbrace\mathcal{S},\mathcal{U},\mathcal{A}\right\rbrace$. 
      
          On the other hand, under jamming mode, $\mathcal{A}$ sends its signal with jamming  power  $ P_{\mathcal{J}} $ as interference to degrade the legitimate reception. 
{In analogy with eavesdropping mode, where $r$ controls the strength of attacker, $P_{\mathcal{J}}$ plays the same role and regulates the attacking capability in case of jamming. Hence, the effective region of the $\mathcal{A}$ depends upon $P_{\mathcal{J}}$ in jamming mode.}

\subsection {Channel Model}
All the links and hence their SNR are  assumed to be independent due to different transmitting and receiving antenna gains, polarisation losses and small-scale fading at legitimate nodes and attacker \cite{Pinto08}.
The  channel gain  of  the link between node $i$ and  $j$ is represented by $\mathrm{g}_{ij}$ 
and  modelled as:
 \begin{equation}\label{eq: gain}
 \left|\mathrm{g}_{_{ij}}\right|^2= \frac{a_{_{ij}}}{\left(d_{_{ij}}\right)^\theta}, \qquad\forall\; i,j=\left\lbrace\mathcal{S},\mathcal{U},\mathcal{A}\right\rbrace.
\end{equation}
Here $\theta$ is path-loss exponent,
and $a_{_{ij}}$ accounts for the channel parameters like fading and antenna
 gains of the link between node $i$ and  $j$.
{ It is to emphasise that 
being an external entity, 
$\mathcal{A}$ does not cooperate with IoT setup towards revealing its location. It is also a well-known fact that when the exact distance between communicating nodes $d_{_{ij}}$ is  unknown, the large-scale
fading has a dominating impact whereas small-scale fading revolves around path-loss \cite[Fig. 2.1]{Goldsmith}. Therefore, for analytical tractability,
 we consider mean channel fading gain of the respective link to incorporate the small-scale  fading \cite{Pinto08, Pinto1} and focus on random long term fading in the development of analytical framework in detail leaving the joint  analysis of both as a separate investigation for future work. While  for comprehensive performance evaluation,  we analyse the impact of  small-scale fading randomness on designing of system parameter via simulation in result section.}
{For realistic analysis, we consider that  only statistical information is available about CSI for all the links.}
\subsection {Background Setup of the Problem}
For eavesdropping mode, the  received signals   ${y}_{_{\mathcal{S}\mathcal{U}}}$ and ${y}_{_{\mathcal{S}\mathcal{A}}}$ by $\mathcal{U}$ and  $ \mathcal{A}$ are represented by:
\begin{equation}\label{eq: signal_E}
{y}_{_{\mathcal{S}\mathcal{N}}}={x}_{\mathcal{S}}\sqrt{P_{\mathcal{S}}}\mathrm{g}_{_{\mathcal{S}\mathcal{N}}}+{w}{_{\mathcal{N}}},\qquad \mathcal{N}\in\left\lbrace\mathcal{U},\mathcal{A}\right\rbrace,
\end{equation}
where ${x}_{{\mathcal{S}}}$ is the zero mean and unit variance  signal transmitted by $\mathcal{S}$, and  $P_{{\mathcal{S}}}$ is the transmit power of $\mathcal{S}$. 
In \eqref{eq: signal_E}, $\mathrm{g}_{_{\mathcal{S}\mathcal{N}}}$ represents channel gain for legitimate  and eavesdropping links respectively for ${\mathcal{N}} \in \left\lbrace\mathcal{U},\mathcal{A}\right\rbrace $ where legitimate link refers for $\mathcal{S}$-$\mathcal{U}$ link and eavesdropping  link refers for $\mathcal{S}$-$\mathcal{A}$ link. Lastly, ${w}{_{\mathcal{N}}}$ represents zero mean additive white Gaussian noise with variance $\sigma^2$ as received at node ${{\mathcal{N}}}$. Without loss of generality, we simply assume that they are identical.
 The corresponding SNR of legitimate link and eavesdropping link can be obtained as:  
\begin{align}\label{eq: snr_E}
\gamma_{_{\mathcal{S}\mathcal{N}}}=\frac{P_{\mathcal{S}}\vert \mathrm{g}_{_{\mathcal{S}\mathcal{N}}}\vert ^{2}}{\sigma^{2}}
= \frac{\kappa_{_{\mathcal{S}\mathcal{N}}}} {({d_{_{\mathcal{S}\mathcal{N}}}})^{\theta}},  
\;\text{with}\;  \kappa_{_{\mathcal{S}\mathcal{N}}}\triangleq \frac{P_{\mathcal{S}}a_{_{\mathcal{S}\mathcal{N}}}}{\sigma^{2}}.
\end{align}
 The maximum achievable rate for the transmission in the presence of  the $\mathcal{A}$  as an eavesdropper is  given by the SC and defined for eavesdropping attack  as \cite {Wyner}:
 \begin{align}\label{eq: C_E}
 \hspace{-3mm}C_{s}^{{{\mathcal{E}}}}  \hspace{-2mm}=\left[\log_{2}(1+ \gamma_{_{\mathcal{S}\mathcal{U}}})- \log_{2}(1+ \gamma_{_{\mathcal{S}\mathcal{A}}})\right]^{+}   \hspace{-1mm}=\left[\log_{2}\frac{1+ \gamma_{_{\mathcal{S}\mathcal{U}}}}{1+ \gamma_{_{\mathcal{S}\mathcal{A}}}}\right]^{+}\hspace{-2mm},
 \end{align}
 {where} $[x]^+ = \max [{x, 0}].$
Alternatively, when  the $\mathcal{A}$ works as a jammer to disrupt the legitimate channel, the signal ${y}_{_{\mathcal{S}\mathcal{U}}}$ received by $ \mathcal{U}$  can be represented as:
 \begin{equation}\label{eq: signal_J}
{y}_{_{\mathcal{S}\mathcal{U}}}={x}_{\mathcal{S}}\sqrt{P_{\mathcal{S}}}\mathrm{g}_{_{\mathcal{S}\mathcal{U}}}+{x}_{\mathcal{J}}\sqrt{P_{\mathcal{J}}}\mathrm{g}_{_{\mathcal{A}\mathcal{U}}}+{w}{_{\mathcal{U}}},
\end{equation}
 where ${x}_{{\mathcal{J}}}$ is the zero mean and unit variance  signal transmitted by $\mathcal{A}$, and  $P_{{\mathcal{J}}}$ is the jamming power of $\mathcal{A}$. The channel gain associated with jamming link is represented as:   $\mathrm{g}_{_{\mathcal{A}\mathcal{U}}}$  as defined above. Here, jamming link refers to  the $\mathcal{A }- \mathcal{U}$ link. When $\mathcal{A}$ is in jamming mode, the maximum achievable rate  is defined as SC and given as \cite{Poor11, Ryu}:  
    \begin{align}\label{eq: C_J} \hspace{-3mm}C_{s}^{{{\mathcal{J}}}}  \hspace{-1mm}=\log_{2}\left(1+ \frac{P_{\mathcal{S}}  \left|\mathrm{g}_{_{\mathcal{S}\mathcal{U}}}\right|^2}{P_{J} \left|\mathrm{g}_{_{\mathcal{A}\mathcal{D}}}\right|^2 +\sigma_{0}^{2}}\right) \hspace{-1mm}=\log_{2}\left(1+ \frac{\gamma_{_{\mathcal{S}\mathcal{U}}}}{1+ \gamma_{_{\mathcal{A}\mathcal{U}}}}\right)\hspace{-1mm}. \hspace{-2mm}\end{align}
     where $\gamma_{_{\mathcal{A}\mathcal{U}}}$ is SNR of jamming link. It can be obtained as:  
\begin{align}\label{eq: snr_J}
\gamma_{_{\mathcal{A}\mathcal{U}}}=\frac{P_{\mathcal{J}}\vert \mathrm{g}_{_{\mathcal{A}\mathcal{U}}}\vert ^{2}}{\sigma^{2}}
= \frac{\kappa_{_{\mathcal{A}\mathcal{U}}}} {({d_{_{\mathcal{A}\mathcal{U}}}})^{\theta}},\; \text{with}\; \kappa_{_{\mathcal{A}\mathcal{U}}}= \frac{P_{\mathcal{J}}a_{_{\mathcal{A}\mathcal{U}}}}{\sigma^{2}}.
\end{align} 

\section{ Distribution Functions of Stochastic Distances and SNRs }
 \label{sec:RatioDistribution} 
{ This section presents the   novel QoS-aware distance  and SNR distributions for legitimate link $\mathcal{S}$-$\mathcal{U}$,   eavesdropping link $\mathcal{S}$-$\mathcal{A}$, and jamming link $\mathcal{A}$-$\mathcal{U}$ using the disk point picking and disk line picking  distributions given by a new  geometric probability technique \cite{Fischbach2000}. These distributions enable us to derive corresponding distance  and SNR distributions for any deployment configuration in our generalised model. 
 For instance,  if $\mathcal{S}$ is considered at origin, and $\mathcal{U}$ and $\mathcal{A}$ are randomly deployed, then the distances of  $\mathcal{S}$-$\mathcal{U}$ and  $\mathcal{S}$-$\mathcal{A}$ links  follow disk point picking distribution while   distance of $\mathcal{A}$-$\mathcal{U}$ link follows disk line picking distribution.  Alternatively, if all the three nodes are randomly deployed, distances of all the links  follow the  disk line picking distributions. This discussion is more elaborated in Section 5. Thus, here we derive the corresponding distributions for configuration considered in the proposed system model.
Additionally, ratio distributions of $\mathcal{S}$-$\mathcal{U}$ -to- $\mathcal{S}$-$\mathcal{A}$ link SNRs and  $\mathcal{S}$-$\mathcal{U}$ -to- $\mathcal{A}$-$\mathcal{U}$ link SNRs are also derived with no loss of generality. }
\subsection {Distance Distributions  for Legitimate, Eavesdropping and Jamming Links }
 To obtain the distributions of SNR for the legitimate, eavesdropping and jamming links, we first investigate the corresponding distance distributions defined as  PDF $ f_{d_{_{\mathcal{S}\mathcal{U}}}}(l)$ of distance  ${d_{_{\mathcal{S}\mathcal{U}}}}$, PDF $ f_{d_{_{\mathcal{S}\mathcal{A}}}}(l)$ of distance  ${d_{_{\mathcal{S}\mathcal{A}}}}$ and PDF $ f_{d_{_{\mathcal{A}\mathcal{U}}}}(l)$ of distance  ${d_{_{\mathcal{A}\mathcal{U}}}}$ as follows: 
  \subsubsection{$f_{d_{_{\mathcal{S}\mathcal{U}}}}(l)$} It is  provided by Proposition 1 with the consideration of practical constraint  on maximum distance between $\mathcal{S}$  and $\mathcal{U}$.
    \begin{proposition}  The PDF $ f_{d_{_{\mathcal{S}\mathcal{U}}}}(l)$ of distance  ${d_{_{\mathcal{S}\mathcal{U}}}}$ is given below subject to the condition that the maximum distance between $\mathcal{S}$  and $\mathcal{U}$  is restricted to $D$.
   \begin{align}\label{eq:PDFSU}
   & f_{d_{_{\mathcal{S}\mathcal{U}}}}(l|l\leq D )\nonumber\\\triangleq 
    &\begin{cases}\frac{2l}{F_{d_{\mathcal{S}\mathcal{U}}}(D)R^2} \left({1- \frac{B_{\frac{l^2}{4 R^2}}\left(\frac{1}{2},\frac{3}{2}\right)}{{B\left(\frac{3}{2},\frac{1}{2}\right)}}}\right), &
     \text{$ l<D\leq 2R,  $}\\
0,   & \text{otherwise.}
    \end{cases}
    \end{align} 
     where $ F_{d_{_{\mathcal{S}\mathcal{U}}}(D)}= \text{Pr}(l<D)=\frac{D^2}{R^2}-\frac{D^2}{{R^2B\left(\frac{3}{2},\frac{1}{2}\right)}}B_{\frac{D^2}{R^2}}\left(\frac{3}{2},\frac{1}{2}\right)-B_{\frac{D^2}{R^2}}\left(\frac{3}{2},\frac{3}{2}\right)$, $\mathrm{B}_x(p,q)  =\int_0^x t^{p-1} (1-t)^{q-1} \mathrm{d}t $ is an incomplete beta function  and $\mathrm{B}(p,q)= \int_0^1 t^{p-1} (1-t)^{q-1} \mathrm{d}t $  is an complete beta function.
\end{proposition}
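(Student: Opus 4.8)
The plan is to reduce the claim to the classical disk line picking distribution and then impose the QoS truncation at $D$. First I would observe that $d_{_{\mathcal{S}\mathcal{U}}}$ depends only on the relative positions of $\mathcal{S}$ and $\mathcal{U}$, so the placement of $\mathcal{A}$ at the origin is irrelevant here: the problem is exactly that of two points drawn independently and uniformly from a disk of radius $R$, whose separation is the quantity of interest. The unconditional density $\tilde f(l)$ of $l=d_{_{\mathcal{S}\mathcal{U}}}$ is the disk line picking density supplied by the geometric probability technique of \cite{Fischbach2000}. If a self-contained derivation is wanted, I would fix $\mathcal{S}$ at radial distance $\rho$ from the origin, compute the arc length of the circle of radius $l$ centred at $\mathcal{S}$ lying inside the deployment disk (equivalently, differentiate the circle--circle overlap area), and average over $\rho$ against its density $2\rho/R^2$. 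This yields $\tilde f(l)=\frac{2l}{R^2}\left[\frac{2}{\pi}\arccos\!\left(\frac{l}{2R}\right)-\frac{l}{\pi R}\sqrt{1-\frac{l^2}{4R^2}}\right]$ on $0\le l\le 2R$.

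Next I would recast $\tilde f$ in the incomplete beta form appearing in the statement. Using $t=\sin^2\phi$ in $\mathrm{B}_x(\tfrac12,\tfrac32)=\int_0^x t^{-1/2}(1-t)^{1/2}\,\mathrm{d}t$ with $x=l^2/(4R^2)$ gives $\mathrm{B}_{l^2/4R^2}(\tfrac12,\tfrac32)=\arcsin\!\frac{l}{2R}+\frac{l}{2R}\sqrt{1-\frac{l^2}{4R^2}}$; since $\mathrm{B}(\tfrac32,\tfrac12)=\mathrm{B}(\tfrac12,\tfrac32)=\tfrac{\pi}{2}$ and $\arcsin u=\tfrac{\pi}{2}-\arccos u$, the bracket above equals $1-\mathrm{B}_{l^2/4R^2}(\tfrac12,\tfrac32)/\mathrm{B}(\tfrac32,\tfrac12)$, identifying $\tilde f$ with the parenthetical factor in \eqref{eq:PDFSU}. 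The QoS constraint $d_{_{\mathcal{S}\mathcal{U}}}\le D$ is then imposed by elementary conditioning: $f_{d_{_{\mathcal{S}\mathcal{U}}}}(l\mid l\le D)=\tilde f(l)/\Pr(l\le D)$ for $0\le l<D$ and $0$ otherwise, with normaliser $F_{d_{_{\mathcal{S}\mathcal{U}}}}(D)=\Pr(l\le D)=\int_0^D\tilde f(l)\,\mathrm{d}l$.

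Finally I would evaluate $F_{d_{_{\mathcal{S}\mathcal{U}}}}(D)$ in closed form. Writing $\int_0^D\tilde f=\frac{D^2}{R^2}-\frac{2}{R^2\,\mathrm{B}(\frac32,\frac12)}\int_0^D l\,\mathrm{B}_{l^2/4R^2}(\tfrac12,\tfrac32)\,\mathrm{d}l$ and integrating the last integral by parts, using $\frac{\mathrm{d}}{\mathrm{d}x}\mathrm{B}_x(\tfrac12,\tfrac32)=x^{-1/2}(1-x)^{1/2}$ and $v=l^2/2$, reduces it after the substitution $t=l^2/(4R^2)$ to incomplete beta functions of the forms $\mathrm{B}_{D^2/4R^2}(\tfrac12,\tfrac32)$ and $\mathrm{B}_{D^2/4R^2}(\tfrac32,\tfrac32)$, and the normalisation check $F_{d_{_{\mathcal{S}\mathcal{U}}}}(2R)=1$ (via $\mathrm{B}(\tfrac12,\tfrac32)=\tfrac{\pi}{2}$, $\mathrm{B}(\tfrac32,\tfrac32)=\tfrac{\pi}{8}$) fixes the coefficients. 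I expect the main obstacle to be exactly this closed-form evaluation of the normaliser: keeping all incomplete-beta arguments within $[0,1]$ (which $D\le 2R$ guarantees), collapsing the $\arccos$ boundary term into incomplete beta form, and correctly tracking the signs and coefficients when assembling the pieces into the stated $F_{d_{_{\mathcal{S}\mathcal{U}}}}(D)$. The geometric derivation of $\tilde f$ is routine once the overlap arc length is set up; the delicate part is the algebra certifying that $\int_0^D\tilde f$ reproduces the claimed closed form.
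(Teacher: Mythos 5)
Your proposal follows the same basic route as the paper's own proof: identify $d_{\mathcal{S}\mathcal{U}}$ with the classical disk line picking distance, then impose the QoS constraint by right-truncation and renormalisation, $f_{d_{\mathcal{S}\mathcal{U}}}(l\mid l\le D)=\tilde f(l)/F_{d_{\mathcal{S}\mathcal{U}}}(D)$. The differences are in self-containedness. The paper simply cites the $n=2$ case of \cite[eq.~(28)]{Fischbach2000} to obtain $\tilde f(l)=\frac{2l}{R^2}I_{1-l^2/(4R^2)}\left(\frac{3}{2},\frac{1}{2}\right)$ and then converts it to the form in \eqref{eq:PDFSU} via the reflection identity $\mathrm{B}_z(a,b)=\mathrm{B}(a,b)-\mathrm{B}_{1-z}(b,a)$; you instead derive $\tilde f$ from scratch (arc length of the circle of radius $l$ inside the deployment disk, averaged over the radial density $2\rho/R^2$) and establish the beta-function form by explicit evaluation with $t=\sin^2\phi$. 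Both of your computations are correct: your arccos expression is the standard disk line picking density, and your identity $\mathrm{B}_{l^2/(4R^2)}\left(\frac{1}{2},\frac{3}{2}\right)=\arcsin\frac{l}{2R}+\frac{l}{2R}\sqrt{1-\frac{l^2}{4R^2}}$ together with $\mathrm{B}\left(\frac{3}{2},\frac{1}{2}\right)=\frac{\pi}{2}$ does identify the two forms. Your route costs more work but buys independence from the cited reference; the paper's route is shorter but rests entirely on the external result.

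The place where you go beyond the paper is also where you should be alerted. The paper's proof never evaluates the normaliser: it only defines $F_{d_{\mathcal{S}\mathcal{U}}}(D)=\Pr(l<D)$, and the closed form printed in the proposition is asserted without derivation. Carrying out your plan (substitute $x=l^2/(4R^2)$ and use the paper's own identity $\int\mathrm{B}_x(a,b)\,\mathrm{d}x=x\,\mathrm{B}_x(a,b)-\mathrm{B}_x(a+1,b)$) gives
\begin{equation*}
F_{d_{\mathcal{S}\mathcal{U}}}(D)=\frac{D^2}{R^2}-\frac{D^2}{R^2\,\mathrm{B}\left(\frac{3}{2},\frac{1}{2}\right)}\,\mathrm{B}_{\frac{D^2}{4R^2}}\left(\tfrac{1}{2},\tfrac{3}{2}\right)+\frac{4}{\mathrm{B}\left(\frac{3}{2},\frac{1}{2}\right)}\,\mathrm{B}_{\frac{D^2}{4R^2}}\left(\tfrac{3}{2},\tfrac{3}{2}\right),
\end{equation*}
which correctly satisfies $F_{d_{\mathcal{S}\mathcal{U}}}(2R)=1$. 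This does \emph{not} coincide with the formula printed in the proposition, whose incomplete beta functions carry the argument $D^2/R^2$, the swapped parameter order $\left(\frac{3}{2},\frac{1}{2}\right)$, and a negative third term without the $4/\mathrm{B}\left(\frac{3}{2},\frac{1}{2}\right)$ factor. The printed form cannot be right as written: at $D=R$ it evaluates to $1-1-\frac{\pi}{8}<0$, impossible for a probability, and for $R<D\le 2R$ its argument $D^2/R^2$ exceeds $1$, outside the domain of $\mathrm{B}_x(\cdot,\cdot)$. So the ``delicate algebra'' you flag at the end will not certify the claimed closed form --- it will correct it. This is not a gap in your argument; it is a defect in the statement that the paper's own proof, by never deriving $F_{d_{\mathcal{S}\mathcal{U}}}(D)$, leaves undetected. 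Note also that this typo propagates: $F_{d_{\mathcal{S}\mathcal{U}}}(D)$ appears as a normalising constant throughout \eqref{eq:PDFSNRSU}, Lemma~\ref{lemma: lemma1}, and both theorems.
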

 \begin{proof}
 {Given an $n$-dimensional ball of radius $R$, disk line picking distribution i.e., the distribution of the distances between  two points chosen at random within the ball is given by a  new  geometric  probability  technique  \cite[eq.(28)]{Fischbach2000}.  For circle of radius $R$, i.e., a special case of $n=2$,  PDF for the distance between  two   random points representing nodes $\mathcal{S}$ and $\mathcal{U}$  with uniform node distribution,    is reduced to  
       \begin{align}\label{eq:PDFSU1}
    f_{d_{_{\mathcal{S}\mathcal{U}}}}(l)= \frac{2l I_{1-\frac{l^2}{4 R^2}}\left(\frac{3}{2},\frac{1}{2}\right) }{ R^2}, \;\;\;\;\;
     \text{$ 0<l\leq 2R,$}
    \end{align} 
      where $  I_x(a,b)=\frac{B_x(a,b) }{B(a,b)}$  is defined as a regularised beta function \cite[eq.(29)]{Fischbach2000}. }
In this work, the maximum distance between $\mathcal{S}$ and $\mathcal{U}$  is limited to $D$. Thus, the required distribution turns into a right truncated distribution  which can be obtained by limiting the domain of ${d_{_{\mathcal{S}\mathcal{U}}}} $ to $D$  
  and re-normalising the $f_{d_{_{\mathcal{S}\mathcal{U}}}} $ to satisfy $\int^D_0 f_{d_{\mathcal{S}\mathcal{U}}}(l,R | l\leq D ) dl=1 $. Hence, the desired truncated function results in  $f_{d_{\mathcal{S}\mathcal{U}}}(l|l\leq D )\triangleq \frac{f_{d_{\mathcal{S}\mathcal{U}}}(l)}{F_{d_{\mathcal{S}\mathcal{U}}}(D)}$ where   $ F_{d_{\mathcal{S}\mathcal{U}}}(D) =\text{Pr}(l<D)$, is normalising  factor. Lastly, by using   identity of  beta function  $\mathrm{B}_z(a,b)=\mathrm{B}(a,b)-\mathrm{B}_{1-z}(b,a) $ \cite[eq. (06.19.17.0008.01)]{betaintegral},
  we find   $f_{d_{_{\mathcal{S}\mathcal{U}}}}$ as given by \eqref{eq:PDFSU}.
    \end{proof}
 \subsubsection{$f_{d_{_{\mathcal{S}\mathcal{A}}}}(l)$ and  $f_{d_{_{\mathcal{A}\mathcal{U}}}}(l)$}

{Given a circle having radius $R$ with uniform node distribution,  the PDF for the distance of a point from the
centre is well known in literature \cite[eq.(20)]{Omiyi}  as disk point picking distribution. Following that, The PDF $ f_{d_{_{\mathcal{S}\mathcal{A}}}}(l)$ of distance  ${d_{_{\mathcal{S}\mathcal{A}}}}$ and $ f_{d_{_{\mathcal{A}\mathcal{U}}}}(l)$ of distance  ${d_{_{\mathcal{A}\mathcal{U}}}}$   is given by:
  \begin{align}\label{eq:PDFAU}
{ f_{d_{_{\mathcal{N}\mathcal{A}}}}(l) \triangleq   \frac{2 l}{R^2}},  \quad\quad \  & \text{$ 0<l\leq  R \ $},
\end{align}
where $\mathcal{N}  \in \left\lbrace\mathcal{S},\mathcal{U}\right\rbrace$.
It is to note that  ${d_{_{\mathcal{N}\mathcal{A}}}}$ and ${d_{_{\mathcal{A}\mathcal{N}}}}$ are the same in this work. }
\subsection {SNR Distributions  for Legitimate, Eavesdropping and Jamming Links }
To obtain the  ratio distribution of SNR of legitimate link and attacker link, we first investigate the  SNR distribution for individual link including  PDF $ f_{\gamma_{_{\mathcal{S}\mathcal{U}}}}(x)$ for legitimate link SNR, PDF $ f_{\gamma_{_{\mathcal{S}\mathcal{A}}}}(y)$ for eavesdropping link SNR and PDF $ f_{\gamma_{_{\mathcal{A}\mathcal{U}}}}(y)$ for jamming link SNR.
\subsubsection{$f_{\gamma_{_{\mathcal{S}\mathcal{U}}}}\left(x\right)$}
As observed from \eqref{eq: snr_E}, $\gamma_{_{\mathcal{S}\mathcal{U}}}$ 
is a function of  random variable ${d_{_{\mathcal{S}\mathcal{U}}}}$, its PDF is derived by 
applying random variable transformation on \eqref{eq:PDFSU}   provided  ${d_{_{\mathcal{S}\mathcal{U}}}}$ is less than $D$ as follows:
 \begin{align}\label{eq:PDFSNRSU}
\text{$f_{\gamma_{_{\mathcal{S}\mathcal{U}}}}\left(x\right)\triangleq \begin{cases}
 \frac{8u \left(\pi-2 \mathrm{B}_{u}\left(\frac{1}{2},\frac{3}{2}\right)\right)}{ F_{d_{_{\mathcal{S}\mathcal{U}}}}(D) x \pi \theta   },\; & \text{$x>\frac{\kappa_{_{_{\mathcal{S}\mathcal{U}}}}}{{D}^{\theta}}, $}\\
 0,  & \text{otherwise, }
\end{cases}$}
\end{align} 
where $u= {\frac {\kappa_{_{_{\mathcal{S}\mathcal{U}}}}^{\frac{2}{\theta}}}{4R^2{x}^{\frac{2}{\theta}}}}$.

\subsubsection{$f_{\gamma_{_{\mathcal{S}\mathcal{A}}}}\left(y\right)$}
Under  eavesdropping mode, $\mathcal{A}$ tends to wiretap the signal transmitted by $\mathcal{S}$. Therefore, Proposition 2 presents the  distribution of SNR for  ${\mathcal{S}}- {\mathcal{A}}$ link with the consideration of restriction on eavesdropping capability.
\begin{proposition} When $\mathcal{A}$ is in eavesdropping mode, The PDF $ f_{\gamma_{_{\mathcal{S}\mathcal{A}}}}(x)$ of SNR  ${\gamma_{_{\mathcal{S}\mathcal{A}}}}$  is given below.
 \begin{align}\label{eq:PDFSNRSA}  
\text{{$f_{\gamma_{_{\mathcal{S}\mathcal{A}}}}\left(y\right) \triangleq \begin{cases}
\frac{2\alpha y^{-\frac{2}{\theta}-1} }{{\theta \kappa_{_{_{\mathcal{S}\mathcal{A}}}}^{-\frac{2}{\theta}} r^2}}, &\ \text{$y>\frac{\kappa_{_{_{\mathcal{S}\mathcal{A}}}}}{{r}^{\theta}},\ $}\\
(1-\alpha)\delta(y), &\ \text{$ y \leq \frac{\kappa_{_{_{\mathcal{S}\mathcal{A}}}}}{{r}^{\theta}},\ $} \\
\end{cases}$}}
\end{align}
 where $\alpha \triangleq r^2/R^2$ represents the probability that  $ \mathcal{S}$ lies within the eavesdropping zone  and $\delta(y)$ is used to denote the Dirac delta function.
 \end{proposition}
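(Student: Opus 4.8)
The plan is to obtain the mixed distribution of $\gamma_{_{\mathcal{S}\mathcal{A}}}$ by pairing a standard change-of-variables argument on the continuous part with a point-mass term that encodes the eavesdropping-zone restriction. The two ingredients already available are the disk point picking density $f_{d_{_{\mathcal{S}\mathcal{A}}}}(l)=\tfrac{2l}{R^2}$ on $0<l\leq R$ from \eqref{eq:PDFAU}, and the monotone decreasing SNR map $\gamma_{_{\mathcal{S}\mathcal{A}}}=\kappa_{_{\mathcal{S}\mathcal{A}}}/(d_{_{\mathcal{S}\mathcal{A}}})^{\theta}$ of \eqref{eq: snr_E}. The derivation reduces to transforming the former through the latter while respecting the cutoff at $d_{_{\mathcal{S}\mathcal{A}}}=r$.

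First I would partition the event space according to whether $\mathcal{S}$ falls inside the eavesdropping zone. Integrating the distance density gives $\Pr(d_{_{\mathcal{S}\mathcal{A}}}<r)=\int_0^r \tfrac{2l}{R^2}\,dl = r^2/R^2 = \alpha$, which fixes both mixture weights. When $\mathcal{S}$ lies outside the zone, $\mathcal{A}$ cannot wiretap, so the effective eavesdropping SNR is identically zero; this event, of probability $1-\alpha$, contributes the atom $(1-\alpha)\delta(y)$ at $y=0$, giving the second branch of \eqref{eq:PDFSNRSA}.

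For the continuous branch, corresponding to $d_{_{\mathcal{S}\mathcal{A}}}<r$, I would apply the transformation $f_{\gamma_{_{\mathcal{S}\mathcal{A}}}}(y)=f_{d_{_{\mathcal{S}\mathcal{A}}}}(d(y))\,\lvert d'(y)\rvert$ with inverse map $d(y)=\kappa_{_{\mathcal{S}\mathcal{A}}}^{1/\theta}\,y^{-1/\theta}$, operating directly on the defective density $\tfrac{2l}{R^2}$ restricted to $(0,r)$ so that the transformed continuous mass equals $\alpha$ by construction. The Jacobian is $\lvert d'(y)\rvert=\tfrac{\kappa_{_{\mathcal{S}\mathcal{A}}}^{1/\theta}}{\theta}\,y^{-1/\theta-1}$, and multiplying by $f_{d_{_{\mathcal{S}\mathcal{A}}}}(d(y))=\tfrac{2\kappa_{_{\mathcal{S}\mathcal{A}}}^{1/\theta}y^{-1/\theta}}{R^2}$ yields $\tfrac{2\kappa_{_{\mathcal{S}\mathcal{A}}}^{2/\theta}}{\theta R^2}\,y^{-2/\theta-1}$. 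Substituting $R^2=r^2/\alpha$ recasts this as the stated $\tfrac{2\alpha\,y^{-2/\theta-1}}{\theta\,\kappa_{_{\mathcal{S}\mathcal{A}}}^{-2/\theta}\,r^2}$, while the image of $0<d_{_{\mathcal{S}\mathcal{A}}}<r$ under the decreasing map is the half-line $y>\kappa_{_{\mathcal{S}\mathcal{A}}}/r^{\theta}$, pinning down the support of the first branch.

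I expect the only genuine subtlety to be the bookkeeping of the mixture rather than any hard computation: one must recognise that ``$\mathcal{S}$ outside the zone'' has to be modelled as a true point mass at zero SNR, not merely as a vanishing tail, and then verify self-consistency by checking $\int_{\kappa_{_{\mathcal{S}\mathcal{A}}}/r^{\theta}}^{\infty} f_{\gamma_{_{\mathcal{S}\mathcal{A}}}}(y)\,dy=\alpha$, so that the atom and the continuous branch together integrate to unity. The change of variables and the simplification through $\alpha=r^2/R^2$ are then entirely routine.
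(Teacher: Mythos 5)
Your proposal is correct and follows essentially the same route as the paper's proof: splitting into the in-zone event of probability $\alpha=r^2/R^2$ (handled by a change of variables on the disk point picking density restricted to $d_{_{\mathcal{S}\mathcal{A}}}<r$) and the out-of-zone event modelled as the atom $(1-\alpha)\delta(y)$. Your explicit Jacobian computation and the check that the continuous branch integrates to $\alpha$ are just more detailed versions of the steps the paper performs implicitly.
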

  \begin{proof} Generalised proposed model enables the legitimate nodes to be positioned  at any place in the circular deployment region having radius $R$. But  $\mathcal{A}$ has capability to  eavesdrop the legitimate transmission only when  $ \mathcal{S}$ lies within the eavesdropping zone  subject to ${d_{_{\mathcal{S}\mathcal{A}}}} <r $ (  illustrated  under cases 1 and 2). The probability of $ \mathcal{S}$ lying  inside the
  eavesdropping zone is denoted by $\alpha.$
As observed from \eqref{eq: snr_E}, being   a function of  random variable ${d_{_{\mathcal{S}\mathcal{A}}}}$,    PDF of $\gamma_{_{\mathcal{S}\mathcal{A}}}$ can, therefore, be  derived with the help of  random variable transformation and \eqref{eq:PDFAU} satisfying $d_{_{_{\mathcal{S}\mathcal{A}}}}<r$  as follows:
  \begin{align}\label{eq:PDFSNRSAstd}
 f_{\gamma_{_{\mathcal{S}\mathcal{A}}}}(y) \triangleq \frac{2 \alpha}{{\theta y r^2}} \left(\frac{y}{\kappa_{_{_{\mathcal{S}\mathcal{A}}}}}\right)^{-\frac{2}{\theta}},   & \quad \quad \quad \text{$ y>\frac{\kappa_{_{_{\mathcal{S}\mathcal{A}}}}}{{r}^{\theta}}. \ $}
\end{align}
 In contrast, under the cases 3 and 4, 
 satisfying ${d_{_{\mathcal{S}\mathcal{A}}}} > r $, $\mathcal{A}$ is unable for eavesdropping the legitimate signal. For the purpose of realistic analysis, we may consider this scenario equivalent to the one when ${\gamma_{_{\mathcal{S}\mathcal{A}}}}$ approaches zero. This means that $\mathcal{A}$ has become ineffective.
   Intuitively, this equivalence  is also supported by a matter of  fact that when  $ \mathcal{S}$  lies beyond the eavesdropping zone,  $\mathcal{A}$ will be unable to tap the signal transmitted by $\mathcal{S}$  regardless of the distance and SNR between them. We can express this equivalence mathematically as follows:
\begin{eqnarray}\label{eq:PDFSAstdelta}
f_{\gamma_{_{\mathcal{S}\mathcal{A}}}}(y) \triangleq  (1-\alpha)  \delta(y),   & \quad\quad\quad \text{$ y \leq \frac{\kappa_{_{_{\mathcal{S}\mathcal{A}}}}}{{r}^{\theta}}.  \ $}
 \end{eqnarray}
 By combining \eqref{eq:PDFSNRSAstd} and  \eqref{eq:PDFSAstdelta}, PDF of SNR of $\mathcal{S}$-$\mathcal{A}$ link can be obtained  as given in \eqref{eq:PDFSNRSA}.
 \end{proof}
 \subsubsection{$f_{\gamma_{_{\mathcal{A}\mathcal{U}}}}\left(y\right)$}
Under jamming mode, $\mathcal{A}$ transmits a signal with power $ P_{\mathcal{J}}$ to interfere with legitimate reception at $\mathcal{U}$. Therefore, we next present distribution of   SNR for  ${\mathcal{A}}-{\mathcal{U}}$ link.
$P_{\mathcal{J}} $ reflects the attacker capability of $\mathcal{A}$ as a jammer. Since $\mathcal{A}$ is not getting any feedback from $\mathcal{U}$,  it is not able to find that $\mathcal{U}$ is in its jamming range or not. 
Therefore, we are assuming that $\mathcal{A}$ can affect $\mathcal{U}$  located anywhere in the  region of deployment having radius $R$ based on its jamming power $ P_{\mathcal{J}}$.
 The PDF $ f_{\gamma_{_{\mathcal{A}\mathcal{U}}}}$(y) of SNR  ${\gamma_{_{\mathcal{A}\mathcal{U}}}}$ in jamming  attack is given by:
 \begin{align}\label{eq:PDFSNRAU}
\text{{$f_{\gamma_{_{\mathcal{A}\mathcal{U}}}}\left(y\right) \triangleq \begin{cases}
\frac{2 }{{\theta y R^2}}\left(\frac{y}{\kappa_{_{_{\mathcal{A}\mathcal{U}}}}}\right)^{-\frac{2}{\theta}}, & \text{$y>\frac{\kappa_{_{_{\mathcal{A}\mathcal{U}}}}}{{R}^{\theta}},\ $}\\
0,   & \text{otherwise.}
\end{cases}$}}
\end{align} 
\subsection {Ratio  distribution of  Legitimate to Attacker Link SNRs }
In this section, the  closed-form expressions for PDF ${{f}}_{_{\bar{\gamma}^{^ {\mathcal{S}\mathcal{U}}}_{_{\mathcal{S}\mathcal{A}}}}}\left(z\right)$ of the ratio  ${\bar{\gamma}^{^ {\mathcal{S}\mathcal{U}}}_{_{\mathcal{S}\mathcal{A}}}}\triangleq \frac{\gamma_{_{\mathcal{S}\mathcal{U}}}}{\gamma_{_{\mathcal{S}\mathcal{A}}}}$  for legitimate-to-eavesdropping link SNRs    and  PDF ${f}_{_{\bar{\gamma}^{^ {\mathcal{S}\mathcal{U}}}_{_{\mathcal{A}\mathcal{U}}}}}\left(z\right)$ of the ratio  ${\bar{\gamma}^{^ {\mathcal{S}\mathcal{U}}}_{_{\mathcal{A}\mathcal{U}}}} \triangleq \frac{\gamma_{_{\mathcal{S}\mathcal{U}}}}{\gamma_{_{\mathcal{A}\mathcal{U}}}}$ of legitimate-to-jamming link SNRs  and corresponding logarithmic transformations are provided. It is to be noted that $\gamma_{_{\mathcal{S}\mathcal{U}}}$ in \eqref{eq:PDFSNRSU} is derived  using underlying distance distribution given by a  distinct geometric probability technique \cite{Fischbach2000}. It  facilitates to obtain the proposed solutions in the closed-form, which was  challenging otherwise. These PDFs can be utilised as an important tool to obtain analytical tractable expressions for SOP through logarithmic transformations under eavesdropping as well as jamming   as described in next section. Therefore, the corresponding logarithmic transformation of ratio  PDFs are also provided as $f_{\log_2\left({\bar{\gamma}^{^ {\mathcal{S}\mathcal{U}}}_{_{\mathcal{S}\mathcal{A}}}}\right)} \left(c\right)$ in eavesdropping and $f_{\log_2\left(1+{\bar{\gamma}^{^ {\mathcal{S}\mathcal{U}}}_{_{\mathcal{A}\mathcal{U}}}}\right)} \left(c\right)$ in jamming.
\subsubsection{${{f}}_{_{\bar{\gamma}^{^ {\mathcal{S}\mathcal{U}}}_{_{\mathcal{S}\mathcal{A}}}}}\left(z\right)$}
Under  eavesdropping mode of $\mathcal{A}$, the cases 3 and 4 may exist with a probability of $(1-\alpha)$ 
 where  $\mathcal{S}$ may appear outside the  eavesdropping zone of $\mathcal{A}$. In this scenario,  ${\gamma_{_{\mathcal{S}\mathcal{A}}}}$ goes beyond an acceptable threshold to eavesdrop the signal;  $\mathcal{A}$'s channel, therefore, ceases to exist.
This is mathematically represented by the \eqref{eq:PDFSNRSA}. Consequently, ratio  $\frac{\gamma_{_{\mathcal{S}\mathcal{U}}}}{{\gamma_{_{\mathcal{S}\mathcal{A}}}}}$  does not exist for the cases 3 and 4. Next, we introduce Lemma 1 to provide the PDF of ratio   ${\bar{\gamma}^{^ {\mathcal{S}\mathcal{U}}}_{_{\mathcal{S}\mathcal{A}}}}$ for the  cases 1 and 2.
\begin{lemma}\label{lemma: lemma1}
 The  PDF ${{f}}_{_{\bar{\gamma}^{^ {\mathcal{S}\mathcal{U}}}_{_{\mathcal{S}\mathcal{A}}}}}\left(z\right)$  of ratio ${\bar{\gamma}^{^ {\mathcal{S}\mathcal{U}}}_{_{\mathcal{S}\mathcal{A}}}}$ of  legitimate-to-eavesdropping link SNRs  under the conditions that $\mathcal{S}$ lies within the  eavesdropping zone of $\mathcal{A}$  is given by:
  \begin{align}\label{eq:PDFratio}
\begin{split}
{{f}}_{_{\bar{\gamma}^{^ {\mathcal{S}\mathcal{U}}}_{_{\mathcal{S}\mathcal{A}}}}}\left(z\right)\triangleq \begin{cases}
  \frac{ ({\lambda_{\mathcal{E}}})^{-\frac{2}{\theta}}D^4z^{\frac{2}{\theta}-1} }{ F_{d_{_{\mathcal{S}\mathcal{U}}}}(D) \theta R^2r^2} -\frac{ 32({{\lambda_{\mathcal{E}}}})^{-\frac{2}{\theta}}R^2 z^{\frac{2}{\theta}-1}\mathrm{\beta}_1(\frac{D^2}{4R^2})}{ F_{d_{_{\mathcal{S}\mathcal{U}}}}(D)  \pi \theta r^2},  \\\qquad \qquad \qquad \qquad \forall\;\text{ $ 0<z<\frac{{\lambda_{\mathcal{E}}}r^\theta}{{D}^{\theta}},\ $}\\
  \frac {({\lambda_{\mathcal{E}}})^{\frac{2}{\theta}} r^2z^{-\frac{2}{\theta}-1}}{  F_{d_{_{\mathcal{S}\mathcal{U}}}}(D)  \theta R^2} -\frac{32({\lambda_{\mathcal{E}}})^{-\frac{2}{\theta}}R^2 z^{\frac{2}{\theta}-1}\mathrm{\beta}_1 \left({\frac{({\lambda_{\mathcal{E}}})^{\frac{2}{\theta}} r^2}{z^{\frac{2}{\theta}}4R^2}}\right) }{ F_{d_{_{\mathcal{S}\mathcal{U}}}}(D)  \pi \theta r^2},\\ \qquad \qquad \qquad \qquad \forall\;  \text{  $   \frac{{\lambda_{\mathcal{E}}}r^\theta}{{D}^{\theta}}\ \leq  z <\infty   $},\\
\end{cases}
\end{split}
\end{align}
where $ {\lambda_{\mathcal{E}}}=\left(\frac{\kappa_{_{_{\mathcal{S}\mathcal{U}}}}}{\kappa_{_{_{\mathcal{S}\mathcal{A}}}}}\right)$ and
$ \mathrm{\beta}_1(a)= a^2 \mathrm{B}_{a}(\frac{1}{2},\frac{3}{2})-\mathrm{B}_{a}\left(\frac{5}{2},\frac{3}{2}\right)$. \\
\end{lemma}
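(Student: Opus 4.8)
The plan is to treat $\bar{\gamma}^{\mathcal{S}\mathcal{U}}_{\mathcal{S}\mathcal{A}}$ as the quotient of two independent positive random variables and apply the standard ratio-density formula $f_{\bar{\gamma}^{\mathcal{S}\mathcal{U}}_{\mathcal{S}\mathcal{A}}}(z) = \int_0^\infty y\, f_{\gamma_{\mathcal{S}\mathcal{U}}}(zy)\, f_{\gamma_{\mathcal{S}\mathcal{A}}}(y\mid \mathcal{S}\in\mathrm{zone})\, dy$. Because we condition on $\mathcal{S}$ lying inside the eavesdropping zone (cases 1 and 2), only the absolutely continuous branch of \eqref{eq:PDFSNRSA} is relevant, renormalised by $1/\alpha$; the Dirac mass $(1-\alpha)\delta(y)$ attached to cases 3 and 4 is discarded, which is why $\alpha$ does not survive in the final expression. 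Substituting this conditional density together with $f_{\gamma_{\mathcal{S}\mathcal{U}}}$ from \eqref{eq:PDFSNRSU} reduces everything to a single integral in $y$.

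First I would fix the limits of integration. The factor $f_{\gamma_{\mathcal{S}\mathcal{U}}}(zy)$ is supported only on $zy > \kappa_{\mathcal{S}\mathcal{U}}/D^\theta$ and the conditional eavesdropping density only on $y > \kappa_{\mathcal{S}\mathcal{A}}/r^\theta$, so the effective lower limit is $\max\{\kappa_{\mathcal{S}\mathcal{U}}/(D^\theta z),\, \kappa_{\mathcal{S}\mathcal{A}}/r^\theta\}$. The two thresholds coincide precisely at $z = \lambda_{\mathcal{E}} r^\theta/D^\theta$, which is exactly the breakpoint separating the two branches in \eqref{eq:PDFratio}: for $0<z<\lambda_{\mathcal{E}} r^\theta/D^\theta$ the $\mathcal{S}$-$\mathcal{U}$ constraint dominates (lower limit $\kappa_{\mathcal{S}\mathcal{U}}/(D^\theta z)$), call this case A, while for $z\ge \lambda_{\mathcal{E}} r^\theta/D^\theta$ the eavesdropping constraint dominates (lower limit $\kappa_{\mathcal{S}\mathcal{A}}/r^\theta$), call this case B.

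Next I would split $f_{\gamma_{\mathcal{S}\mathcal{U}}}(zy)$ into its two additive pieces: the elementary term proportional to $\pi$ and the term carrying the incomplete beta function $\mathrm{B}_{u}(\tfrac12,\tfrac32)$ with $u = \kappa_{\mathcal{S}\mathcal{U}}^{2/\theta}/(4R^2(zy)^{2/\theta})$. The $\pi$-piece multiplies a pure power of $y$, so its integral is elementary; evaluating $\int y^{-1-4/\theta}\,dy$ over the appropriate range reproduces the first summand of each case, namely the $(\lambda_{\mathcal{E}})^{-2/\theta} D^4 z^{2/\theta-1}$ term for case A and the $(\lambda_{\mathcal{E}})^{2/\theta} r^2 z^{-2/\theta-1}$ term for case B, after collecting $\kappa_{\mathcal{S}\mathcal{U}}^{\pm2/\theta}\kappa_{\mathcal{S}\mathcal{A}}^{\mp2/\theta}=(\lambda_{\mathcal{E}})^{\pm2/\theta}$.

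The hard part is the incomplete-beta piece. Here I would apply the change of variable $t = u = \kappa_{\mathcal{S}\mathcal{U}}^{2/\theta}/(4R^2(zy)^{2/\theta})$, which sends $y\to\infty$ to $t=0$ and carries the two lower limits to $t = D^2/(4R^2)$ and $t = (\lambda_{\mathcal{E}})^{2/\theta} r^2/(z^{2/\theta}4R^2)$, precisely the arguments of $\beta_1$ appearing in \eqref{eq:PDFratio}. The resulting integrand is a power of $t$ times $\mathrm{B}_t(\tfrac12,\tfrac32)$, and I would integrate by parts using $\tfrac{d}{dt}\mathrm{B}_t(p,q)=t^{p-1}(1-t)^{q-1}$: differentiating the incomplete beta and integrating the power of $t$ generates the two terms $a^2\mathrm{B}_a(\tfrac12,\tfrac32)$ and $-\mathrm{B}_a(\tfrac52,\tfrac32)$, whose combination is exactly $\beta_1(a)$. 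I expect the main obstacle to be the bookkeeping in this step, namely confirming that the boundary contribution at $t\to0$ vanishes and that the prefactors $32R^2/(\pi\theta r^2)$ assemble correctly, rather than any conceptual difficulty; once the integration by parts is verified, combining the elementary and beta pieces yields \eqref{eq:PDFratio}.
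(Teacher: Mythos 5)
Your proposal is correct and follows essentially the same route as the paper: the standard ratio-density formula with the conditional (renormalised) eavesdropping-link density, the same piecewise lower limit $\max\{\kappa_{_{\mathcal{S}\mathcal{U}}}/(D^\theta z),\,\kappa_{_{\mathcal{S}\mathcal{A}}}/r^\theta\}$ with breakpoint $z=\lambda_{\mathcal{E}}r^\theta/D^\theta$, and the same split into an elementary power-law term plus an incomplete-beta term. Your explicit change of variable $t=u_0$ followed by integration by parts is just a hands-on derivation of the tabulated antiderivative $\int \mathrm{B}_z(a,b)\,\mathrm{d}z = z\mathrm{B}_z(a,b)-\mathrm{B}_z(a+1,b)$ that the paper invokes, so the two arguments coincide in substance.
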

 \begin{proof}
 Using \cite[(eq. 6.60)]{populis} PDF of ${\bar{\gamma}^{^ {\mathcal{S}\mathcal{U}}}_{_{\mathcal{A}\mathcal{U}}}}$  for the cases 1 and 2 can be expressed as:
\begin{align}\label{eq: PDF_ratio2}
 &{f}_{\bar{\gamma}^{^ {\mathcal{S}\mathcal{U}}}_{_{\mathcal{S}\mathcal{A}}}}\left(z\right)
 = \int\limits_{y_{l}}^{\infty}yf_{\gamma_{_{\mathcal{S}\mathcal{U}}}}(zy)f_{\gamma_{_{\mathcal{S}\mathcal{A}}}}\left(y|y>\frac{\kappa_{_{_{\mathcal{S}\mathcal{A}}}}}{{r}^{\theta}}\right)\mathrm{d}y\nonumber\\ &\stackrel{(a)}{=} \int\limits_{y_l}^{\infty}\frac{ 16 {\kappa_{_{_{\mathcal{S}\mathcal{A}}}}}^{\frac{2}{\theta}}y^{-\frac{2}{\theta}-1} u_0}{ F_{d_{_{\mathcal{S}\mathcal{U}}}}(D)\pi z r^2 \theta^2 }\left(\pi-2 B_{u_0}\left(\frac{1}{2},\frac{3}{2}\right)\right)\mathrm{d}y,
\end{align} 
\text{where  $u_0\triangleq\frac {\kappa_{_{_{\mathcal{S}\mathcal{U}}}}^{\frac{2}{\theta}}}{4R^2{(zy)}^{\frac{2}{\theta}}}$}.
Here, $(a)$ is obtained from  definition of $f_{\gamma_{_{\mathcal{S}\mathcal{U}}}}\left(x\right)$ and  $f_{\gamma_{_{\mathcal{S}\mathcal{A}}}}\left(y\right )$ given in \eqref{eq:PDFSNRSU} and  \eqref{eq:PDFSNRSA}  for the cases 1 and 2 respectively.  Lower limit $ y_{_l}$ of ratio integral
is given by $\max\left (\frac{\kappa_{_{_{\mathcal{S}\mathcal{U}}}}}{{D}^\theta z},\frac{\kappa_{_{_{\mathcal{S}\mathcal{A}}}}}{r^\theta}\right)$ and  represented in   piece-wise  form as: 
\begin{align}
{y}_l=\begin{cases}
\frac{\kappa_{_{_{\mathcal{S}\mathcal{U}}}}}{{D}^\theta z},   &  \text{$0<z<\frac{\kappa_{_{_{\mathcal{S}\mathcal{U}}}}r^\theta}{{\kappa_{_{_{\mathcal{S}\mathcal{A}}}}D}^{\theta}},\ $}\\
\frac{\kappa_{_{_{\mathcal{S}\mathcal{A}}}}}{r^\theta}, & \text{$z \geq \frac{\kappa_{_{_{\mathcal{S}\mathcal{U}}}}r^\theta}{{\kappa_{_{_{\mathcal{S}\mathcal{A}}}}D}^{\theta}}.\ $}
\end{cases}
\end{align}
 Using  integral of incomplete  beta function \cite[eq. (06.19.21.0002.01)]{betaintegral}, $ \int \mathrm{B}_z(a,b) \mathrm{d}z= z \mathrm{B}_z(a,b)- \mathrm{B}_z(a+1,b)$ and further simplifying, we can found   ${{f}}_{_{\bar{\gamma}^{^ {\mathcal{S}\mathcal{U}}}_{_{\mathcal{S}\mathcal{A}}}}}$ as a piece-wise expression given in \eqref{eq:PDFratio}.
\end{proof}
\subsubsection{ $f_{\log_2\left({\bar{\gamma}^{^ {\mathcal{S}\mathcal{U}}}_{_{\mathcal{S}\mathcal{A}}}}\right)} \left(c\right)$} 
To draw critical insights about SC, PDF of logarithmic function of SNRs ratio is given as:
\begin{align}\label{eq:PDFSC1}
{{f}}_{\log_2\left({\bar{\gamma}^{^ {\mathcal{S}\mathcal{U}}}_{_{\mathcal{S}\mathcal{A}}}}\right)}\left(c\right) \triangleq \ln({2}){f}_{\bar{\gamma}^{^ {\mathcal{S}\mathcal{U}}}_{_{\mathcal{S}\mathcal{A}}}}\left(2^c\right)
\hspace{-3mm}
\end{align}
\subsubsection{${{f}}_{_{\bar{\gamma}^{^ {\mathcal{S}\mathcal{U}}}_{_{\mathcal{A}\mathcal{U}}}}}\left(z\right)$}  
Lemma 2 provides the PDF of SNRs ratio under jamming.
\begin{lemma}\label{lemma:Lemma2}
The PDF ${{f}}_{_{\bar{\gamma}^{^ {\mathcal{S}\mathcal{U}}}_{_{\mathcal{A}\mathcal{U}}}}}(z)$   of  legitimate-to-jamming link SNRs ratio ${\bar{\gamma}^{^ {\mathcal{S}\mathcal{U}}}_{_{\mathcal{A}\mathcal{U}}}}$ is given below.
\begin{align}\label{eq:PDFratio_jamming}
{f}_{_{\bar{\gamma}^{^ {\mathcal{S}\mathcal{U}}}_{_{\mathcal{A}\mathcal{U}}}}}\left(z\right)\triangleq \begin{cases}
  \frac{ (\lambda_\mathcal{J})^{-\frac{2}{\theta}}D^4 z^{\frac{2}{\theta}-1}}{ F_{d_{_{\mathcal{S}\mathcal{U}}}}(D)R^4} -\frac{ 32({\lambda_\mathcal{J}})^{-\frac{2}{\theta}}z^{\frac{2}{\theta}-1}\mathrm{\beta}_1\left(\frac{D^2}{4R^2}\right)}{ F_{d_{_{\mathcal{S}\mathcal{U}}}}(D)},\\\qquad \qquad \qquad \qquad \qquad \forall\;
    \text{  $0<z<\frac{\kappa_{_{_{\mathcal{S}\mathcal{U}}}}R^\theta}{{\kappa_{_{_{\mathcal{S}\mathcal{A}}}}D}^{\theta}},\ $}\\
\frac {({\lambda_{\mathcal{J}}})^{\frac{2}{\theta}} z^{-\frac{2}{\theta}-1}}{  F_{d_{_{\mathcal{S}\mathcal{U}}}}(D)  \theta } -\frac{32({\lambda_{\mathcal{J}}})^{-\frac{2}{\theta}} z^{\frac{2}{\theta}-1}\mathrm{\beta}_1 \left({\frac{({\lambda_{\mathcal{J}}})^{\frac{2}{\theta}} }{4z^{\frac{2}{\theta}}}}\right) }{ F_{d_{_{\mathcal{S}\mathcal{U}}}}(D)  \pi \theta },\\ \text{  $ \qquad \qquad \qquad \qquad \qquad \forall\; \frac{\kappa_{_{_{\mathcal{S}\mathcal{U}}}}R^\theta}{{\kappa_{_{_{\mathcal{S}\mathcal{A}}}}D}^{\theta}}\ \leq  z <\infty,   $}
\end{cases}
\end{align} 
where $ {\lambda_\mathcal{J}}=\left(\frac{\kappa_{_{_{\mathcal{S}\mathcal{U}}}}}{\kappa_{_{_{\mathcal{A}\mathcal{U}}}}}\right)$.
\end{lemma}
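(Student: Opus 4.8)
The plan is to compute the quotient PDF exactly as in the proof of Lemma~1, via the standard ratio-of-random-variables formula. First I would invoke \cite[eq.~6.60]{populis} to write
\begin{equation}
{f}_{\bar{\gamma}^{\mathcal{S}\mathcal{U}}_{\mathcal{A}\mathcal{U}}}(z)=\int_{y_l}^{\infty} y\, f_{\gamma_{\mathcal{S}\mathcal{U}}}(zy)\, f_{\gamma_{\mathcal{A}\mathcal{U}}}(y)\,\mathrm{d}y ,
\end{equation}
substituting $f_{\gamma_{\mathcal{S}\mathcal{U}}}$ from \eqref{eq:PDFSNRSU} and $f_{\gamma_{\mathcal{A}\mathcal{U}}}$ from \eqref{eq:PDFSNRAU}. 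The key simplification relative to Lemma~1 is that the jammer is assumed effective over the entire deployment disc of radius $R$, so $f_{\gamma_{\mathcal{A}\mathcal{U}}}$ carries no point mass $(1-\alpha)$ at the origin and needs no conditioning; the quotient then exists for every realisation and the integrand is \emph{formally identical} to the eavesdropping integrand of Lemma~1 under the replacements $r\to R$ and $\kappa_{\mathcal{S}\mathcal{A}}\to\kappa_{\mathcal{A}\mathcal{U}}$ (i.e.\ $\lambda_{\mathcal{E}}\to\lambda_{\mathcal{J}}$), the defining factor $u_0$ being unchanged since it depends only on $\kappa_{\mathcal{S}\mathcal{U}}$, $R$, $z$ and $y$.

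Next I would pin down the lower limit. Since $f_{\gamma_{\mathcal{S}\mathcal{U}}}(zy)$ is supported on $zy>\kappa_{\mathcal{S}\mathcal{U}}/D^{\theta}$ and $f_{\gamma_{\mathcal{A}\mathcal{U}}}(y)$ on $y>\kappa_{\mathcal{A}\mathcal{U}}/R^{\theta}$, the integration begins at $y_l=\max\!\left(\tfrac{\kappa_{\mathcal{S}\mathcal{U}}}{D^{\theta}z},\tfrac{\kappa_{\mathcal{A}\mathcal{U}}}{R^{\theta}}\right)$; the two candidates coincide at $z=\lambda_{\mathcal{J}}R^{\theta}/D^{\theta}$, which produces exactly the two branches of \eqref{eq:PDFratio_jamming}. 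On each branch I would substitute $t=u_0\triangleq\tfrac{\kappa_{\mathcal{S}\mathcal{U}}^{2/\theta}}{4R^2(zy)^{2/\theta}}$, which is monotone decreasing in $y$ and maps $y\in[y_l,\infty)$ onto $t\in(0,t_l]$ with $t_l=u_0(y_l)$. After including the Jacobian this collapses all $y$-dependence into a single endpoint and leaves an integral proportional to $z^{2/\theta-1}\int_0^{t_l} t\bigl(\pi-2\mathrm{B}_t(\tfrac12,\tfrac32)\bigr)\mathrm{d}t$, the $z^{2/\theta}$ factor arising from the Jacobian.

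The remaining task is the $t$-integral. The term $\pi t$ integrates to $\pi t_l^2/2$, while for $\int_0^{t_l}t\,\mathrm{B}_t(\tfrac12,\tfrac32)\,\mathrm{d}t$ I would integrate by parts using $\tfrac{\mathrm{d}}{\mathrm{d}t}\mathrm{B}_t(a,b)=t^{a-1}(1-t)^{b-1}$ together with the incomplete-beta identity \cite[eq.~(06.19.21.0002.01)]{betaintegral}; the boundary term vanishes at $t=0$ and the result is exactly $\tfrac12\bigl(t_l^2\mathrm{B}_{t_l}(\tfrac12,\tfrac32)-\mathrm{B}_{t_l}(\tfrac52,\tfrac32)\bigr)=\tfrac12\beta_1(t_l)$, so the integral equals $\tfrac{\pi t_l^2}{2}-\beta_1(t_l)$. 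Finally I would insert the branch-specific endpoints, $t_l=D^2/(4R^2)$ for $0<z<\lambda_{\mathcal{J}}R^{\theta}/D^{\theta}$ and $t_l=\lambda_{\mathcal{J}}^{2/\theta}/(4z^{2/\theta})$ for $z\ge\lambda_{\mathcal{J}}R^{\theta}/D^{\theta}$ — precisely the arguments of $\beta_1$ in \eqref{eq:PDFratio_jamming} — reintroduce the overall constant $\tfrac{16\kappa_{\mathcal{A}\mathcal{U}}^{2/\theta}}{F_{d_{\mathcal{S}\mathcal{U}}}(D)\,z\,\pi\theta^2R^2}$, and simplify using $\kappa_{\mathcal{A}\mathcal{U}}^{2/\theta}/\kappa_{\mathcal{S}\mathcal{U}}^{2/\theta}=\lambda_{\mathcal{J}}^{-2/\theta}$ to recover the stated two-piece PDF. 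I expect the main obstacle to be the integration-by-parts step and the careful bookkeeping of the piecewise endpoints $t_l$ — in particular checking that the $y\to\infty$ limit sends $t\to0$ so the lower boundary contributes no term — rather than the prefactor algebra, which is routine.
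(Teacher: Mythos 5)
Your proposal is correct and takes essentially the same route as the paper: the paper's own proof of Lemma~2 simply repeats the Lemma~1 computation (Papoulis ratio formula, piecewise lower limit $y_l = \max\left(\frac{\kappa_{\mathcal{S}\mathcal{U}}}{D^{\theta}z},\frac{\kappa_{\mathcal{A}\mathcal{U}}}{R^{\theta}}\right)$, incomplete-beta integration yielding $\beta_1$) with $r\to R$, $\kappa_{\mathcal{S}\mathcal{A}}\to\kappa_{\mathcal{A}\mathcal{U}}$ and no point-mass conditioning, which is exactly what you flesh out. One remark: your derivation, carried through correctly, produces $\theta$ and $\pi\theta$ in the denominators of the first branch (i.e., Lemma~1's expression evaluated at $r=R$) and a breakpoint $\lambda_{\mathcal{J}}R^{\theta}/D^{\theta}$ involving $\kappa_{\mathcal{A}\mathcal{U}}$, so the absence of those factors and the appearance of $\kappa_{\mathcal{S}\mathcal{A}}$ in the paper's stated \eqref{eq:PDFratio_jamming} are typos in the statement (as confirmed by consistency with the integration constants in Theorem~2), not a defect in your argument.
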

 \begin{proof}
 PDF of ${\bar{\gamma}^{^ {\mathcal{S}\mathcal{U}}}_{_{\mathcal{A}\mathcal{U}}}}$  can be derived similar to Lemma~\ref{lemma: lemma1} by using
 $f_{\gamma_{_{\mathcal{S}\mathcal{U}}}}\left(x\right)$  and  $f_{\gamma_{_{\mathcal{A}\mathcal{U}}}}\left(y\right )$ defined in  \eqref{eq:PDFSNRSU} and  \eqref{eq:PDFSNRAU} respectively. The only difference lies in case of jamming that, $\frac{\gamma_{_{\mathcal{S}\mathcal{U}}}}{{\gamma_{_{\mathcal{A}\mathcal{U}}}}}$ exists for entire region of interest of radius $R$. 
\end{proof}
\subsubsection{ $f_{\log_2\left(1+{\bar{\gamma}^{^ {\mathcal{S}\mathcal{U}}}_{_{\mathcal{A}\mathcal{U}}}}\right)} \left(c\right)$} Under jamming mode, SC depends upon the  logarithmic function of 
$ (1+{\bar{\gamma}^{^ {\mathcal{S}\mathcal{U}}}_{_{\mathcal{A}\mathcal{U}}}})$ under high SNR regime as discussed in Section 4.  Therefore,  PDF of  ${\log_2\left(1+{\bar{\gamma}^{^ {\mathcal{S}\mathcal{U}}}_{_{\mathcal{A}\mathcal{U}}}}\right)} $ is derived to provide analytical insights about high SNR SC.
\begin{align}\label{eq:PDFratio_jamming_SC}
\begin{split}
{{f}}_{\log_2\left(1+{\bar{\gamma}^{^ {\mathcal{S}\mathcal{U}}}_{_{\mathcal{A}\mathcal{U}}}}\right)}\left(c\right) 
\triangleq  \ln(2){f}_{_{\bar{\gamma}^{^ {\mathcal{S}\mathcal{U}}}_{_{\mathcal{A}\mathcal{U}}}}}\left(2^c-1\right) 
\end{split}
\end{align} 

\section{Secrecy Performance Analysis }\label{sec:SecracyRate} 
In this section,  the impact of random inter-node distances on the secrecy performance  of the proposed system is presented by adopting  secrecy  outage probability SOP  as a performance metric. Firstly,  SOP for the proposed system model is derived by considering both modes eavesdropping as well as jamming  of  the $\mathcal{A}$.  Later,  the proposed analysis is extended in high SNR regime to obtain the closed-form expressions for the SOP.
\subsection{Secrecy Outage Probability}
 Secrecy outage is  a state, when instantaneous SC  falls below a target  secrecy rate $ C_{st}>0 $. The probability of secrecy outage, SOP,  is defined \cite{Bloch} as: 
\begin{align}\label{eq:outage_Exact}
{p}^i_{o}(C_{st})=\text{Pr}(C^i<C_{st}) \qquad{i}\in\left\lbrace\mathcal{E},\mathcal{J}\right\rbrace,
\end{align}
where $\mathcal{E}$ and $\mathcal{J}$ represents eavesdropping and jamming mode of $\mathcal{A}$ respectively.
\subsubsection{Eavesdropping}
In the proposed model,  under the cases 1 and 2 when  $ \mathcal{S}$  is deployed inside the eavesdropping zone as followed from Section 2, SC    will follow the definition provided by \eqref{eq: C_E}.  It is denoted by  $C^\mathcal{E}_{_{1,2}}$.  On the other hand, under the cases 3 and  4, when  $ \mathcal{S}$ may remain outside the eavesdropping zone,  $\mathcal{A}$ is not able to eavesdrop the signal. Hence, it  becomes  ineffective. Consequently, conventional SC given by \eqref{eq: C_E} is reduced to capacity of legitimate channel and denoted by $C^\mathcal{E}_{_{3,4}}$. Thus,  SC for the proposed system under eavesdropping mode of $\mathcal{A}$ is defined as:  
 \begin{align}\label{eq:C_e_exact_total}
\text{{${C}^{\mathcal{E}} \triangleq \begin{cases} 
C^\mathcal{E}_{_{1,2}}=\left[\log_{2}\left(\frac{ {1+\gamma_{_{\mathcal{S}\mathcal{U}}}}}{ {1+\gamma_{_{\mathcal{S}\mathcal{A}}}}}\right )\right]^+, & \text{$\gamma_{_{\mathcal{S}\mathcal{A}}}>\frac{\kappa_{_{_{\mathcal{S}\mathcal{A}}}}}{{(r)}^{\theta}},\ $}\\
 C^\mathcal{E}_{_{3,4}}=\log_2 \left(1+\gamma_{_{\mathcal{S}\mathcal{U}}}\right),   & \text{otherwise.}
\end{cases}$}}
\end{align} 
Consequently,  the  SOP under eavesdropping mode of $\mathcal{A}$ will be the weighted sum of two outage probabilities   given below.
\begin{align}\label{eq:outage_weighted}
{p}_{o}^{\mathcal{E}}=\alpha p^E_{o}+(1-\alpha)p^I_{o},
\end{align} 
 where $ p^E_{o}= \text{Pr}\left(\log_2\left(\frac{1+\gamma_{_{\mathcal{S}\mathcal{U}}}}{1+\gamma_{_{\mathcal{S}\mathcal{A}}}}\right)<C_{st}\right)$  represents outage probability  when eavesdropper is effective i.e., the cases 1 and 2 and $ p^I_{o}=\text{Pr}(\log_2({1+\gamma_{_{\mathcal{S}\mathcal{U}}}})<C_{st})$ represents outage probability when  eavesdropper becomes ineffective  i.e., the cases 3 and 4. The ${p}_{o}^{\mathcal{E}}$ can be derived as:
\begin{align}\label{eq:outage_E_Th1}
   &p_o^{\mathcal{E}}= \alpha\text{Pr}\left(\frac{1+\gamma_{_{\mathcal{S}\mathcal{U}}}}{1+\gamma_{_{\mathcal{S}\mathcal{A}}}}<2^{C_{st}}\right)\nonumber
  + (1-\alpha)   \text{Pr}\left({1+\gamma_{_{\mathcal{S}\mathcal{U}}}}<2^{C_{st}}\right)\nonumber\\
 & = \alpha \int \limits_0^{2^{C_{st}}} f_{_{\frac{1+\gamma_{_{\mathcal{S}\mathcal{U}}}}{1+\gamma_{_{\mathcal{S}\mathcal{A}}}}}}\left(z\right)\mathrm{d}z+ (1-\alpha) \int \limits_0^{2^{C_{st}}} f_{_{{1+\gamma_{_{\mathcal{S}\mathcal{U}}}}}}\left(z\right)\mathrm{d}z \nonumber\\
   &\stackrel{(b)}{=} 
   \int\limits_0^{2^{C_{st}}} \left(\alpha\int\limits_{y_{l}}^{\infty} yf_{(1+\gamma_{_{\mathcal{S}\mathcal{U}}})}(zy)f_{(1+\gamma_{_{\mathcal{S}\mathcal{A}}})}((y|y>\frac{\kappa_{_{_{\mathcal{S}\mathcal{A}}}}}{{r}^{\theta}}+1))\mathrm{d}y\right)\mathrm{d}z \nonumber\\&
   +(1-\alpha)\int\limits_0^{2^{C_{st}}}f_{(1+\gamma_{_{\mathcal{S}\mathcal{U}}})}\mathrm{d}z\nonumber\\
  &\stackrel{(c)}{=}\alpha \int\limits_0^{2^{C_{st}}} \int\limits_{y_l}^{\infty}\frac{ 16u_1y{(y-1)}^{-\frac{2}{\theta}-1}\left(\pi-2 B_{u_1}\left(\frac{1}{2},\frac{3}{2}\right)\right)}{ F_{d_{_{\mathcal{S}\mathcal{U}}}}(D) \pi (z y-1)r^2 \theta^2 {\kappa_{_{_{\mathcal{S}\mathcal{A}}}}}^{-\frac{2}{\theta}}}\mathrm{d}y \mathrm{d}z \nonumber  \\&+{(1-\alpha)} \left(\frac{D^2-u_2}{F_{d_{_{\mathcal{S}\mathcal{U}}}}(D) R^2}-\frac{8\left(\beta_1\left(\frac{x_1}{4R^2}\right)-\beta_1\left(\frac{D^2}{4R^2}\right)\right)}{F_{d_{_{\mathcal{S}\mathcal{U}}}}(D) \pi}\right),  
    \end{align} 
Here, $(b)$ is obtained by \cite[(eq. 6.60)]{populis}  for the cases 1 and 2 whereas   $(c )$ is obtained by finding the PDF $f_{(1+\gamma_{_{\mathcal{S}\mathcal{U}}})}$ and  $f_{(1+\gamma_{_{\mathcal{S}\mathcal{A}}})}$  using random variable transformations on  \eqref{eq:PDFSNRSU}, and \eqref{eq:PDFSNRSA}. The $ y_{_l}$ is lower limits of internal integral which is maximum of lower supports of $(1+{\gamma_{_{\mathcal{S}\mathcal{U}}}})$ and $ ({1+{\gamma_{_{\mathcal{S}\mathcal{A}}}}})$ Hence, $ y_l $ is  $\max\left (\frac{{D}^\theta+\kappa_{_{_{\mathcal{S}\mathcal{U}}}}}{{D}^\theta z},\frac{\kappa_{_{_{\mathcal{S}\mathcal{A}}}}+r^\theta}{{r}^\theta}\right)$ and can be given in piece-wise form as follows:
\begin{align}
 {y}_l=\begin{cases}
 \frac{\kappa_{_{_{\mathcal{S}\mathcal{U}}}}}{{D}^\theta z},   &  \text{$0<z<\frac{\kappa_{_{_{\mathcal{S}\mathcal{U}}}}r^\theta}{{\kappa_{_{_{\mathcal{S}\mathcal{A}}}}D}^{\theta}},\ $}\\
\frac{\kappa_{_{_{\mathcal{S}\mathcal{A}}}}}{r^\theta}, & \text{$z \geq \frac{\kappa_{_{_{\mathcal{S}\mathcal{U}}}}r^\theta}{{\kappa_{_{_{\mathcal{S}\mathcal{A}}}}D}^{\theta}}.\ $}
\end{cases}
\end{align}
{The  first term in $(c)$ represents  integral expression for $p_o^E$.  Though it cannot be solved analytically due to involvement of the form $\int (A+Bx)^n(C+Dx)^m\mathrm{d}x$ \cite[(2.151)]{Gradshteyn}, it can be solved numerically.}
It is to be mentioned here that the integral in the second term of  $(b)$ represents $p_o^I$ which is solved in $(c)$ by using \cite[eq. (06.19.21.0002.01)]{betaintegral} and algebraic calculations.
\subsubsection{Jamming}
For the proposed system, the SOP $p_o^{\mathcal{J}}$  under jamming  mode of $\mathcal{A}$ can be given using the equation \eqref{eq: C_J} and  \eqref{eq:outage_Exact} as follows:
\begin{align}\label{eq:outage_J_exact}
p_o^{\mathcal{J}} = &\text{Pr}\left(\log_2\left(1+\frac{\gamma_{_{\mathcal{S}\mathcal{U}}}}{1+\gamma_{_{\mathcal{S}\mathcal{A}}}}\right)<C_{st}\right)\nonumber\\&=\text{Pr}\left(\left(\frac{\gamma_{_{\mathcal{S}\mathcal{U}}}}{1+\gamma_{_{\mathcal{S}\mathcal{A}}}}\right)<2^{C_{st}}-1\right)=
 \int \limits_0^{2^{C_{st}}-1} f_{_{\frac{\gamma_{_{\mathcal{S}\mathcal{U}}}}{1+\gamma_{_{\mathcal{S}\mathcal{A}}}}}}\left(z\right)\mathrm{d}z \nonumber\\
 &\stackrel{(d)}{=} \displaystyle {\int\limits_0^{2^{C_{st}}-1} \int\limits_{y_{l}}^{\infty} yf_{\gamma_{_{\mathcal{S}\mathcal{U}}}}(zy)f_{(1+\gamma_{_{\mathcal{S}\mathcal{A}}})}(y)\mathrm{d}y \mathrm{d}z\nonumber}\\ &\stackrel{(e)}{=} \displaystyle
  \int\limits_0^{2^{C_{st}}-1} \int\limits_{y_l}^{\infty}\frac{ 16u_1{(y-1)}^{-\frac{2}{\theta}}\left(\pi-2 B_{u_1}\left(\frac{1}{2},\frac{3}{2}\right)\right)}{ F_{d_{_{\mathcal{S}\mathcal{U}}}}(D) \pi (z y) R^2 \theta^2 {\kappa_{_{_{\mathcal{S}\mathcal{A}}}}}^{-\frac{2}{\theta}}}\mathrm{d}y \mathrm{d}z,
\end{align}
where, $(d)$ is obtained by \cite[(eq. 6.60)]{populis} and   $(e )$ is obtained by using \eqref{eq:PDFSNRSU} and finding the PDF   $f_{(1+\gamma_{_{\mathcal{S}\mathcal{A}}})}$  using random variable transformations on  \eqref{eq:PDFSNRSA}. The $ y_{_l}$ is lower limits of internal integral which is  maximum of lower supports of ${\gamma_{_{\mathcal{S}\mathcal{U}}}}$ and $ ({1+{\gamma_{_{\mathcal{S}\mathcal{A}}}}})$ hence, $ y_l $ is  $\max\left (\frac{\kappa_{_{_{\mathcal{S}\mathcal{U}}}}}{{D}^\theta z},\frac{\kappa_{_{_{\mathcal{S}\mathcal{A}}}}+R^\theta}{{r}^\theta}\right)$ and can be given in piece-wise form as:
 \begin{align} {y}_l=\begin{cases}
 \frac{\kappa_{_{_{\mathcal{S}\mathcal{U}}}}}{{D}^\theta z},   &  \text{$0<z<\frac{\kappa_{_{_{\mathcal{S}\mathcal{U}}}}R^\theta}{\left({\kappa_{_{_{\mathcal{S}\mathcal{A}}}}+R^\theta}\right)D^{\theta}},\ $}\\
\frac{R^\theta + \kappa_{_{_{\mathcal{S}\mathcal{A}}}}}{R^\theta}, & \text{$z \geq \frac{\kappa_{_{_{\mathcal{S}\mathcal{U}}}}R^\theta}{\left({\kappa_{_{_{\mathcal{S}\mathcal{A}}}}+R^\theta}\right)D^{\theta}}.\ $}\\
\end{cases}   
\end{align}
\subsection{Closed-Form Approximation}
The  analysis presented above provides the integral-based SOP expressions. To provide additional analytical insights, we consider high SNR regime and deduce closed-form expressions for SOP for  tractable analytical results.
This analysis also  provides lower bound on SOP in jamming whereas under eavesdropping, lower bound on SOP is provided with a condition of  positive secrecy.
\subsubsection{Eavesdropping}
For high SNR regime, received signals in the $\mathcal{S}$-$\mathcal{U}$ and $\mathcal{S}$-$\mathcal{A}$ links are relatively higher  than noise power. Hence, $\gamma_{_{\mathcal{S}\mathcal{U}}}\gg1$ and $\gamma_{_{\mathcal{S}\mathcal{A}}}\gg1$ are satisfied when $\mathcal{A}$ is able to eavesdrop the legitimate signal i.e., the cases 1 and 2 for the proposed model and corresponding SC is denoted as ${\hat{C}}^{\mathcal{E}}_{_{1,2}}$. Therefore, SC for proposed system \eqref{eq:C_e_exact_total} is reduced to asymptotic SC defined as:
 \begin{align}\label{eq:C_e_hi_total}
\text{{${\hat{C}}^{\mathcal{E}} \triangleq \begin{cases} {\hat{C}}^{\mathcal{E}}_{_{1,2}}=
\left[\log_{2}\left(\frac{ {\gamma_{_{\mathcal{S}\mathcal{U}}}}}{ {\gamma_{_{\mathcal{S}\mathcal{A}}}}}\right )\right]^+, & \text{$\gamma_{_{\mathcal{S}\mathcal{A}}}>\frac{\kappa_{_{_{\mathcal{S}\mathcal{A}}}}}{{r}^{\theta}},\ $}\\
 {{C}}^{\mathcal{E}}_{_{3,4}}=\log_2 \left(1+\gamma_{_{\mathcal{S}\mathcal{U}}}\right),   & \text{otherwise.}
\end{cases}$}}
\end{align} 
   It is to be mentioned that  SOP expression derived for SOP considering the cases 3 and 4 is analytically tractable. The respective exact expression is, therefore, retained  aiming  to enhance the accuracy of closed-form solutions.
 
 For a special case of positive SC i.e., ${\gamma_{_{\mathcal{S}\mathcal{U}}}}>  {\gamma_{_{\mathcal{S}\mathcal{A}}}}$,  the proposed asymptotic SC gives the upper bound on SC given by \eqref{eq:C_e_exact_total} as $ C^{{\mathcal{E}}}   \leq  {\hat{C}}^{\mathcal{E}}  $.  
   Consequently, the corresponding SOP  $\hat{p}^\mathcal{E}_{o}$ provides lower bound on outage probability with positive SC.   Theorem 1 is introduced to provide closed-form expression for SOP $\hat{p}^\mathcal{E}_{o}$.
\begin{theorem}
Under eavesdropping attack, SOP $\hat{p}^\mathcal{E}_{o}$ for a target secrecy rate $C_{st}>0 $ is given by:  \begin{align}\label{eq:outage_E_Th2_highsnr}
\begin{split}
 \hat{p}^\mathcal{E}_{o} \triangleq
 \begin{cases}
 \alpha\left(\frac{ (\lambda_{\mathcal{E}})^{-\frac{2}{\theta}}D^4 2^{\frac{2 C_{st}}{\theta}}}{ 2 F_{d_{_{\mathcal{S}\mathcal{U}}}}(D)  R^2r^2} -\frac{ 32(\lambda_{\mathcal{E}})^{-\frac{2}{\theta}}R^2 2^{\frac{2 C_{st}}{\theta}}\beta_1(\frac{D^2}{4R^2})}{  2 F_{d_{_{\mathcal{S}\mathcal{U}}}}(D)   r^2} \right)\\
 +{(1-\alpha)} \left(\frac{D^2-u_2}{F_{d_{_{\mathcal{S}\mathcal{U}}}}(D) R^2}-\frac{8\left(\beta_1\left(\frac{u_2}{  4R^2}\right)-\beta_1\left(\frac{D^2}{4R^2}\right)\right)}{ F_{d_{_{\mathcal{S}\mathcal{U}}}}(D) \pi}\right),\\  \qquad \qquad \qquad \qquad \qquad 
 \text{$ \forall\; 0<C_{st}< \log_2 \left(\frac{\lambda_{\mathcal{E}} r^\theta}{D^{\theta}}\right)\ $}\\
{\alpha} \left( \frac{4D^2-u_3}{2R^2F_{d_{_{\mathcal{S}\mathcal{U}}}}(D) }-\frac{16 R^2\beta_2(\frac{D^2}{4R^2})}{\pi F_{d_{_{\mathcal{S}\mathcal{U}}}}(D)  D^2 } + \frac{4(\beta_3(\frac{u_3}{4R^2})-\beta_3(\frac{D^2}{4R^2}))}{\pi F_{d_{_{\mathcal{S}\mathcal{U}}}}(D) } \right)\\
 + {(1-\alpha)} \left(\frac{D^2-u_2}{F_{d_{_{\mathcal{S}\mathcal{U}}}}(D) R^2}-\frac{8\left(\beta_1\left(\frac{u_2}{  4R^2}\right)-\beta_1\left(\frac{D^2}{4R^2}\right)\right)}{ F_{d_{_{\mathcal{S}\mathcal{U}}}}(D) \pi}\right),\\ \qquad \qquad \qquad \qquad \qquad \text{  $ \forall \;\log_2 \left (\frac{\lambda_{\mathcal{E}} r^\theta}{D^{\theta}}\right) \leq C_{st} $},  \\
\end{cases}
 \end{split}
\end{align} 
 with $u_3=\frac{r^2  (\lambda_{\mathcal{E}})^  {\frac{2}{\theta}}2^{\frac{-2C_{st}}{\theta}}}{2R^2}$, $\beta_2(a)= a \mathrm{B}_a(\frac{1}{2},\frac{3}{2})- \mathrm{B}_a(\frac{5}{2},\frac{3}{2}) $, and 
$\beta_3(a)= \beta_1(a)+ \frac{1}{a} \mathrm{B}_a(\frac{5}{2},\frac{3}{2})- \mathrm{B}_a(\frac{3}{2},\frac{3}{2}).$ 
\end{theorem}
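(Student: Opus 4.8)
The plan is to obtain $\hat{p}^\mathcal{E}_o$ as the same weighted sum $\alpha\,\hat{p}^E_o+(1-\alpha)\,p^I_o$ that appears in \eqref{eq:outage_weighted}, but with the effective-eavesdropper term now evaluated under the high-SNR capacity $\hat{C}^\mathcal{E}_{1,2}=\log_2(\gamma_{_{\mathcal{S}\mathcal{U}}}/\gamma_{_{\mathcal{S}\mathcal{A}}})$ taken from \eqref{eq:C_e_hi_total}. First I would write $\hat{p}^E_o=\text{Pr}(\gamma_{_{\mathcal{S}\mathcal{U}}}/\gamma_{_{\mathcal{S}\mathcal{A}}}<2^{C_{st}})=\int_0^{2^{C_{st}}} {{f}}_{_{\bar{\gamma}^{^ {\mathcal{S}\mathcal{U}}}_{_{\mathcal{S}\mathcal{A}}}}}(z)\,\mathrm{d}z$, recognising that this is simply the CDF of the ratio variable of Lemma~\ref{lemma: lemma1}, evaluated at $2^{C_{st}}$. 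The ineffective-eavesdropper contribution $p^I_o$ is untouched by the high-SNR approximation (it already involves no $\gamma_{_{\mathcal{S}\mathcal{A}}}$), so it can be imported verbatim as the closed-form expression established in the second term of step $(c)$ of \eqref{eq:outage_E_Th1}, carrying along the quantity $u_2$ defined there.

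The key structural observation is that the ratio PDF in \eqref{eq:PDFratio} is piecewise with breakpoint $z_0=\lambda_\mathcal{E} r^\theta/D^\theta$, so the value of $\int_0^{2^{C_{st}}}$ depends on whether $2^{C_{st}}$ falls below or above $z_0$. This dichotomy is exactly the case split $0<C_{st}<\log_2(\lambda_\mathcal{E} r^\theta/D^\theta)$ versus $C_{st}\ge\log_2(\lambda_\mathcal{E} r^\theta/D^\theta)$ in the theorem. In the first regime the whole integration range sits inside the first branch of \eqref{eq:PDFratio}, whose only $z$-dependence is the single power $z^{2/\theta-1}$ multiplying constants; integrating from $0$ produces the factor $\tfrac{\theta}{2}2^{2C_{st}/\theta}$, whose $\theta$ cancels and yields the leading $\alpha$-term of the first case.

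In the second regime I would split $\int_0^{2^{C_{st}}}=\int_0^{z_0}+\int_{z_0}^{2^{C_{st}}}$, using the first branch on $[0,z_0]$ and the second branch on $[z_0,2^{C_{st}}]$. The first branch and the $z^{-2/\theta-1}$ part of the second branch integrate elementarily. The hard part will be the remaining term of the second branch, namely $\int z^{2/\theta-1}\,\mathrm{\beta}_1\!\big((\lambda_\mathcal{E})^{2/\theta}r^2/(4R^2 z^{2/\theta})\big)\,\mathrm{d}z$, where the incomplete-beta argument itself depends on $z$. I would treat it via the substitution $a=(\lambda_\mathcal{E})^{2/\theta}r^2/(4R^2 z^{2/\theta})$, which reduces it to integrals of $\mathrm{B}_a(\tfrac12,\tfrac32)$ and $\mathrm{B}_a(\tfrac52,\tfrac32)$ in $a$, and then repeatedly invoke the antiderivative identity $\int \mathrm{B}_z(p,q)\,\mathrm{d}z=z\mathrm{B}_z(p,q)-\mathrm{B}_z(p+1,q)$ from \cite{betaintegral} already used in Lemma~\ref{lemma: lemma1}. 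Collecting the resulting boundary terms defines precisely the auxiliary functions $\beta_2$ and $\beta_3$ stated in the theorem and fixes the upper-limit quantity $u_3$.

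Finally I would assemble $\hat{p}^\mathcal{E}_o=\alpha\,\hat{p}^E_o+(1-\alpha)\,p^I_o$ in each regime to recover \eqref{eq:outage_E_Th2_highsnr}. The lower-bound interpretation needs no separate argument: since $C^\mathcal{E}\le\hat{C}^\mathcal{E}$ whenever the SC is positive, as noted just before the theorem, the event $\{\hat{C}^\mathcal{E}<C_{st}\}$ is contained in $\{C^\mathcal{E}<C_{st}\}$, so the closed-form $\hat{p}^\mathcal{E}_o$ bounds the true SOP from below on the positive-secrecy region.
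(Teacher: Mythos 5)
Your proposal is correct and follows essentially the same route as the paper's proof: the weighted decomposition $\alpha\,\hat{p}^E_o+(1-\alpha)\,p^I_o$ from \eqref{eq:outage_weighted}, computing $\hat{p}^E_o$ as the CDF of the ratio variable from Lemma~\ref{lemma: lemma1} at $2^{C_{st}}$ with the case split dictated by the breakpoint $\lambda_{\mathcal{E}}r^\theta/D^\theta$ of \eqref{eq:PDFratio}, retaining the exact $p^I_o$ from \eqref{eq:outage_E_Th1}, and closing the integrals via the incomplete-beta antiderivative identity. Your explicit substitution for the $z$-dependent beta argument and the final containment argument for the lower bound are just slightly more detailed write-ups of steps the paper leaves implicit.
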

\begin{proof} It is clear from \eqref{eq:outage_weighted} that SOP  $\hat{p}^\mathcal{E}_{o}$ is weighted sum of two outages probabilities $p_o^E$ (for cases 1 and 2) and $p_o^I$ ( for the cases 3 and 4) obtained as below.\\
Cases 1 and 2: $\mathcal{S}$ lies within  eavesdropping zone of $\mathcal{A}$ with  probability $\alpha$, and  $p_o^E$ is defined as:
\begin{align}\label{eq:outage_E_Th1_proof}
&p_o^E \approx \text{Pr}\left (\log_2\left({\bar{\gamma}^{^ {\mathcal{S}\mathcal{U}}}_{_{\mathcal{S}\mathcal{A}}}}\right)<C_{st}\right)\nonumber\\&= \text{Pr}\left ({\bar{\gamma}^{^ {\mathcal{S}\mathcal{U}}}_{_{\mathcal{S}\mathcal{A}}}}<2^{C_{st}}\right)\nonumber =  \int\limits_0^{2^{C_{st}}} f_{_{{\bar{\gamma}^{^ {\mathcal{S}\mathcal{U}}}_{_{\mathcal{S}\mathcal{A}}}}}}\left(z\right)\mathrm{d}z \nonumber\\
   &\stackrel{(f)}{=}
   \int\limits_0^\frac{\kappa_{_{_{\mathcal{S}\mathcal{U}}}}r^\theta}{{\kappa_{_{_{\mathcal{S}\mathcal{A}}}}D}^{\theta}} \hspace{-1.5mm} \left( \frac{ (\lambda_\mathcal{E})^{-\frac{2}{\theta}}D^4 z^{\frac{2}{\theta}-1}}{ F_{d_{_{\mathcal{S}\mathcal{U}}}}(D) R^2r^2} \right)\mathrm{d}z+ \int\limits_\frac{\kappa_{_{_{\mathcal{S}\mathcal{U}}}}r^\theta}{{\kappa_{_{_{\mathcal{S}\mathcal{A}}}}D}^{\theta}}^ {2^{C_{st} }}\hspace{-2mm} \left(\frac {(\lambda_\mathcal{E})^{\frac{2}{\theta}} r^2}{  F_{d_{_{\mathcal{S}\mathcal{U}}}}(D) R^2}\nonumber\right)\mathrm{d}z\nonumber\\&
   - \int\limits_0^\frac{\kappa_{_{_{\mathcal{S}\mathcal{U}}}}r^\theta}{{\kappa_{_{_{\mathcal{S}\mathcal{A}}}}D}^{\theta}}\left(\frac{ 32(\lambda_\mathcal{E})^{-\frac{2}{\theta}}R^2z^{\frac{2}{\theta}-1}\beta_1(\frac{D^2}{4R^2})}{ F_{d_{_{\mathcal{S}\mathcal{U}}}}(D) r^2} \right)\mathrm{d}z\nonumber \\&
+\int\limits_\frac{\kappa_{_{_{\mathcal{S}\mathcal{U}}}}r^\theta}{{\kappa_{_{_{\mathcal{S}\mathcal{A}}}}D}^{\theta}}^ {2^{C_{st} }}\left(\frac{32(\lambda_\mathcal{E})^{-\frac{2}{\theta}}R^2z^{\frac{2}{\theta}-1}\mathrm{\beta}_1 \left({\frac{({\lambda_{\mathcal{E}}})^{\frac{2}{\theta}} r^2}{z^{\frac{2}{\theta}}4R^2}}\right)}{ F_{d_{_{\mathcal{S}\mathcal{U}}}}(D) r^2} \right)\mathrm{d}z
 \end{align}
Here, $(f)$ is obtained  using Lemma 1.\\ 
Cases 3 and 4: $\mathcal{S}$ remains beyond the eavesdropping zone  with $(1-\alpha)$ probability.  Under these cases, we consider $p_o^I$ remains the same as in  \eqref{eq:outage_E_Th1}. 

Solving  $p_o^E$, \cite[eq. (06.19.21.0002.01)]{betaintegral} and putting values of  $p_o^I$ and  $p_o^E$ in \eqref{eq:outage_weighted}, a closed-form expression for SOP   in  \eqref{eq:outage_E_Th2_highsnr} is obtained. \end{proof}
\subsubsection{Jamming} In high SNR regime,  interference produced by jamming power has a much greater impact than noise in the main channel. Hence, for this interference limited case,  the SC given by \eqref{eq: C_J}  is reduced to asymptotic SC denoted as follows:
 {  \begin{align}\label{eq:C_J_asym2} {\hat{C}}^{\mathcal{J}} & \triangleq
  \log_{2}\left(1+ \frac{\gamma_{_{\mathcal{S}\mathcal{U}}}}{ \gamma_{_{\mathcal{A}\mathcal{U}}}}\right)\end{align}
     }
The $ {\hat{C}}^{\mathcal{J}}$ also provides the upper bound on SC given by \eqref{eq: C_J} as  ${C}^{\mathcal{J}}  \leq  {\hat{C}}^{\mathcal{J}}$. Consequently, asymptotic SOP $\hat{{p}}^\mathcal{J}_{o}$ corresponding to $ {\hat{C}}^{\mathcal{J}}$ provides  the lower bound on SOP  $\hat{{p}}^\mathcal{J}_{o}$ in closed-form as given by Theorem 2.
\begin{theorem} Under jamming attack, SOP for a target secrecy rate $C_{st} $  is lower bounded by: 
\begin{align}\label{eq:outage_J_Th3}
\begin{split}
\hat{p}^\mathcal{J}_{o} \triangleq
\begin{cases}
\frac{ \lambda_{\mathcal{J}}^{-\frac{2}{\theta}}D^4 (2^{C_{st}}-1)^{\frac{2 }{\theta}}}{ 2 F_{d_{_{\mathcal{S}\mathcal{U}}}}(D)  R^4} -\frac{ 32(\lambda_{\mathcal{J}})^{-\frac{2}{\theta}} (2^{C_{st}}-1)^{\frac{2 }{\theta}}\beta_1(\frac{D^2}{4R^2})}{  2 F_{d_{_{\mathcal{S}\mathcal{U}}}}(D)},\\  \qquad \qquad \text{ $ \forall \;0<C_{st}< \log_2 \left(1+\frac{\lambda_{\mathcal{E}} r^\theta}{D^{\theta}}\right)\ $}\\
  \frac{D^2-4u_4}{R^2F_{d_{_{\mathcal{S}\mathcal{U}}}}(D) }-\frac{16 R^2\beta_2(\frac{D^2}{4R^2})}{\pi F_{d_{_{\mathcal{S}\mathcal{U}}}}(D)  D^2 } + \frac{4(\beta_3({u_4})-\beta_3(\frac{D^2}{4R^2}))}{\pi F_{d_{_{\mathcal{S}\mathcal{U}}}}(D) },\\  \qquad \text{  $ \forall \; \log_2 \left (1+\frac{\lambda_{\mathcal{E}} r^\theta}{D^{\theta}}\right) \leq C_{st} $ where $u_4={\frac{(\lambda_\mathcal{J})^{\frac{2}{\theta}}}{4(2^{C_{st}}-1)^{\frac{2 }{\theta}}}}.$} \\
\end{cases}
\end{split}
\end{align} 
  \end{theorem}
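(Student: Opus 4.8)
The plan is to read off $\hat{p}^{\mathcal{J}}_o$ as the cumulative distribution function of the SNR ratio $\bar{\gamma}^{^ {\mathcal{S}\mathcal{U}}}_{_{\mathcal{A}\mathcal{U}}}$, using the density already established in Lemma~\ref{lemma:Lemma2}. Starting from the asymptotic capacity \eqref{eq:C_J_asym2} and the outage definition \eqref{eq:outage_Exact}, I would write
\begin{align}
\hat{p}^{\mathcal{J}}_o = \text{Pr}\left(\log_2\left(1+\bar{\gamma}^{^ {\mathcal{S}\mathcal{U}}}_{_{\mathcal{A}\mathcal{U}}}\right)<C_{st}\right) = \text{Pr}\left(\bar{\gamma}^{^ {\mathcal{S}\mathcal{U}}}_{_{\mathcal{A}\mathcal{U}}}<2^{C_{st}}-1\right) = \int\limits_0^{2^{C_{st}}-1} {f}_{_{\bar{\gamma}^{^ {\mathcal{S}\mathcal{U}}}_{_{\mathcal{A}\mathcal{U}}}}}(z)\,\mathrm{d}z. \nonumber
\end{align}
Since $\hat{C}^{\mathcal{J}}$ upper-bounds the exact capacity \eqref{eq: C_J} (i.e.\ ${C}^{\mathcal{J}}\leq\hat{C}^{\mathcal{J}}$, as noted before the statement), this integral is a lower bound on the true SOP, which is exactly the assertion of the theorem.

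The density of Lemma~\ref{lemma:Lemma2} is piecewise with its break at $z^{\star}=\lambda_{\mathcal{J}}R^{\theta}/D^{\theta}$, the point where the lower supports of $\gamma_{_{\mathcal{S}\mathcal{U}}}$ and $\gamma_{_{\mathcal{A}\mathcal{U}}}$ cross. The next step is to compare the upper limit $2^{C_{st}}-1$ against $z^{\star}$, which generates the two regimes of \eqref{eq:outage_J_Th3}: when $C_{st}<\log_2(1+z^{\star})$ only the first branch of the density contributes, whereas for larger $C_{st}$ the integral splits at $z^{\star}$ and collects both branches. Within each regime I would integrate term by term. The monomial factors $z^{\frac{2}{\theta}-1}$ and $z^{-\frac{2}{\theta}-1}$ integrate elementarily and yield the $D^4(2^{C_{st}}-1)^{\frac{2}{\theta}}$-type leading terms; the terms carrying the \emph{constant}-argument $\beta_1(D^2/4R^2)$ are likewise elementary in $z$.

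The delicate contribution is the second-branch term in which the incomplete beta function carries an argument that itself depends on $z$ through $\lambda_{\mathcal{J}}^{2/\theta}/(4z^{2/\theta})$. For this term I would apply the incomplete-beta integration identity $\int \mathrm{B}_z(a,b)\,\mathrm{d}z = z\,\mathrm{B}_z(a,b)-\mathrm{B}_z(a+1,b)$ \cite[eq. (06.19.21.0002.01)]{betaintegral} together with the change of variable matching the power $z^{2/\theta}$. Carrying this out is precisely what forces the auxiliary functions $\beta_2$ and $\beta_3$ and the substitution $u_4=\lambda_{\mathcal{J}}^{2/\theta}/\!\big(4(2^{C_{st}}-1)^{2/\theta}\big)$ to appear, since $\beta_3$ absorbs the extra beta terms $\tfrac1a\mathrm{B}_a(\tfrac52,\tfrac32)-\mathrm{B}_a(\tfrac32,\tfrac32)$ produced by differentiating the nested argument. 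The derivation mirrors the eavesdropping computation of Theorem~1 but drops the $\alpha/(1-\alpha)$ weighting, because in jamming mode the ratio is supported over the entire deployment disc of radius $R$ rather than being partitioned by the eavesdropping zone. I expect the bookkeeping of these nested beta functions, and verifying that the boundary terms evaluated at $z^{\star}$ match across the two regimes, to be the main obstacle; once those cancellations are confirmed the remaining algebra is routine and the closed form \eqref{eq:outage_J_Th3} follows.
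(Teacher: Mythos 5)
Your proposal is correct and takes essentially the same route as the paper's proof: both write $\hat{p}^{\mathcal{J}}_{o}$ as the CDF of ${\bar{\gamma}^{^ {\mathcal{S}\mathcal{U}}}_{_{\mathcal{A}\mathcal{U}}}}$ evaluated at $2^{C_{st}}-1$, substitute the piecewise density of Lemma~\ref{lemma:Lemma2} with the integral split at the support-crossing point, and obtain the closed form via the incomplete-beta identity \cite[eq. (06.19.21.0002.01)]{betaintegral}. Incidentally, your break point $\lambda_{\mathcal{J}}R^{\theta}/D^{\theta}$ is the one consistent with Lemma~\ref{lemma:Lemma2}; the $\lambda_{\mathcal{E}}r^{\theta}/D^{\theta}$ appearing in the theorem's case boundaries (and the corresponding limits in the paper's own proof) is evidently a notational slip in the paper.
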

   \begin{proof}  By using \eqref{eq:outage_Exact} and  \eqref{eq:C_J_asym2}, we obtain:
  \begin{align}
 &\hat{p_{o}}^\mathcal{J} 
 =\text{Pr}\left(\log_2\left(1+{\bar{\gamma}^{^ {\mathcal{S}\mathcal{U}}}_{_{\mathcal{A}\mathcal{U}}}}\right)<C_{st}\right)= \int\limits_0^{2^{C_{st}}-1} f_{_{{\bar{\gamma}^{^ {\mathcal{S}\mathcal{U}}}_{_{\mathcal{A}\mathcal{U}}}}}}\left(z \right)\mathrm{d}z
 \nonumber\\
 &\hspace{-1mm}\stackrel{\left(g\right)}{=}\hspace{-2mm} \int\limits_0^\frac{\kappa_{_{_{\mathcal{S}\mathcal{U}}}}r^\theta}{{\kappa_{_{_{\mathcal{S}\mathcal{A}}}}D}^{\theta}}  \left( \frac{ (\lambda_\mathcal{J})^{-\frac{2}{\theta}}D^4 z^{\frac{2}{\theta}-1}}{ F_{d_{_{\mathcal{S}\mathcal{U}}}}(D)R^4} -\frac{ 32({\lambda_\mathcal{J}})^{-\frac{2}{\theta}}z^{\frac{2}{\theta}-1}\mathrm{\beta}_1(\frac{D^2}{4R^2})}{ F_{d_{_{\mathcal{S}\mathcal{U}}}}(D)}\right) \mathrm{d}z\nonumber\\&
  \hspace{-1mm}+\hspace{-2mm}\int\limits_\frac{\kappa_{_{_{\mathcal{S}\mathcal{U}}}}r^\theta}{{\kappa_{_{_{\mathcal{S}\mathcal{A}}}}D}^{\theta}}^{2^{C_{st}}-1}  \left(\frac {({\lambda_{\mathcal{J}}})^{\frac{2}{\theta}} z^{-\frac{2}{\theta}-1}}{  F_{d_{_{\mathcal{S}\mathcal{U}}}}(D)  \theta } -\frac{32({\lambda_{\mathcal{J}}})^{-\frac{2}{\theta}} z^{\frac{2}{\theta}-1}\mathrm{\beta}_1 \left({\frac{({\lambda_{\mathcal{J}}})^{\frac{2}{\theta}} }{z^{4\frac{2}{\theta}}}}\right) }{ F_{d_{_{\mathcal{S}\mathcal{U}}}}(D)  \pi \theta }\right)\mathrm{d}z
\end{align}
Here, $\left(g\right)$ is obtained with use of Lemma 2.  Using \cite[eq. (06.19.21.0002.01)]{betaintegral} and algebraic manipulation, closed-form  asymptotic SOP $\hat{p_{o}}^\mathcal{J}$   can be  obtained as given in \eqref{eq:outage_J_Th3}.
\end{proof}
\section{Discussion: Model Generalisation and Parameter Designing}
In this section, a general discussion about other spatial configuration of nodes is presented, and it is illustrated that the proposed analysis represents a generalised framework as the corresponding secrecy analysis can be derived in a similar manner. 
In addition, we discuss the impact of the system design parameters of legitimate nodes as well as  attacker on secrecy performance. We observe that determining the appropriate values  of parameters utilising the analytical expressions derived in Section 4 offers the desired secrecy performance for legitimate nodes and attacker. 
\subsection{Generalisation of the proposed System Model}
As discussed in Section 4, analytically tractable secrecy outage analysis can be performed through having determined the  ratio distribution of SNRs of legitimate to attacker link. In this section, we illustrate that ratio distribution of SNRs of legitimate to attacker link can be derived for other possible configurations on the similar lines. We examine two prominent configurations considered in the literature, however other possible configurations can also be analysed with the proposed generalised framework.
            \subsubsection{$\mathcal{S}$ at the origin}
    In this configuration, $\mathcal{A}$ and  $\mathcal{U}$ are located stochastically  within the deployment region  with $\mathcal{S}$  at the centre. Distances  ${d_{_{\mathcal{S}\mathcal{U}}}}$ and ${d_{_{\mathcal{S}\mathcal{A}}}}$ follow  the same distribution and can be calculated by  \eqref{eq:PDFAU}. On the other hand, distance of link $\mathcal{A}$-$\mathcal{U}$  follows the different distribution  and can be computed by \eqref{eq:PDFSU1}. As  ${\gamma_{_{\mathcal{S}\mathcal{U}}}}$, ${\gamma_{_{\mathcal{S}\mathcal{A}}}}$ and ${\gamma_{_{\mathcal{A}\mathcal{U}}}}$ are the functions of random variables ${d_{_{\mathcal{S}\mathcal{U}}}}$, ${d_{_{\mathcal{S}\mathcal{A}}}}$, and ${d_{_{\mathcal{A}\mathcal{U}}}}$ respectively as defined by \eqref{eq: snr_E} and \eqref{eq: snr_J}, their PDFs can be obtained using random variable transformation.
  We can now obtain the  PDF of the ratio of ${\gamma_{_{\mathcal{S}\mathcal{U}}}}$ to ${\gamma_{_{\mathcal{S}\mathcal{A}}}}$ as follows:
 \begin{align} 
 \label{eq:PDFratio_E_Scentre}
{{f}}_{_{\bar{\gamma}^{^ {\mathcal{S}\mathcal{U}}}_{_{\mathcal{S}\mathcal{A}}}}}\left(z\right)&
=\begin{cases}
 { \frac{({\lambda_{\mathcal{E}}})^{-\frac{2}{\theta}}z^{\frac{2}{\theta}-1}} {\theta}},  &   \text{$0<z<{{\lambda_{\mathcal{E}}}}{},\ $}\\
  \frac{{({\lambda_{\mathcal{E}}})^{\frac{2}{\theta}}  z^{-\frac{2}{\theta}-1}}}{\theta}, & \text{$   {\lambda_{\mathcal{E}}}\ \leq  z <\infty.   $}\\
\end{cases}
\end{align}
Similarly,  the  PDF of ratio of  ${\gamma_{_{\mathcal{S}\mathcal{U}}}}$ and ${\gamma_{_{\mathcal{A}\mathcal{U}}}}$ can be  given as:
\begin{align}\label{eq:PDFratio_jamming_Scentre}
\begin{split}
&{f}_{_{\bar{\gamma}^{^ {\mathcal{S}\mathcal{U}}}_{_{\mathcal{A}\mathcal{U}}}}}\hspace{-1mm}\left(z\right)\\&= 
\begin{cases}
  \frac{ (\lambda_\mathcal{J})^{-\frac{2}{\theta}}z^{\frac{2}{\theta}-1} }{ \theta } -\frac{ 32(\lambda_\mathcal{J})^{-\frac{2}{\theta}}z^{-\frac{2}{\theta}-1} \mathrm{\beta}_1\left(\frac{(\lambda_\mathcal{J} )^{-\frac{2}{\theta}} }{4z^{-\frac{2}{\theta}}}\right)}{  \pi \theta },\\ \qquad \qquad \qquad \qquad \qquad \text{ $\forall \; 0<z<{2^\theta\lambda_\mathcal{J}}{},\ $}\\
 \frac {16(\lambda_\mathcal{J})^{\frac{2}{\theta}}z^{-\frac{2}{\theta}-1}R^2 }{   \theta D^2}- \frac{32({\lambda_\mathcal{J}})^{\frac{2}{\theta}}z^{-\frac{2}{\theta}-1}\left(  \mathrm{B}(\frac{1}{2},\frac{3}{2})-\mathrm{B}(\frac{5}{2},\frac{3}{2})\right)}{  \pi \theta }, 
 \\ \qquad \qquad \qquad \qquad \qquad 
 \text{$ \forall\; {2^\theta\lambda_\mathcal{J}}{}\ \leq  z <\infty.   $}
 \end{cases}
\end{split}
\end{align}
 \subsubsection{$\mathcal{U}$ at the origin}
    In this configuration, $\mathcal{S}$ and  $\mathcal{A}$ are located stochastically  within the deployment region  with $\mathcal{U}$ considered at centre. Here,
    distances  ${d_{_{\mathcal{S}\mathcal{U}}}}$ and ${d_{_{\mathcal{A}\mathcal{U}}}}$  follow  the same distribution and can be calculated with the help of  \eqref{eq:PDFAU}.  Distance distribution of link $\mathcal{S}$-$\mathcal{A}$  can be calculated with the help of \eqref{eq:PDFSU1}.     As explained above,  $f_{{\gamma_{_{\mathcal{S}\mathcal{U}}}}}$, $f_{{\gamma_{_{\mathcal{S}\mathcal{A}}}}}$ and $f_{{\gamma_{_{\mathcal{A}\mathcal{U}}}}}$  can be obtained using random variable transformation.
    Thus,  PDF of ratio of ${\gamma_{_{\mathcal{S}\mathcal{U}}}}$ and ${\gamma_{_{\mathcal{S}\mathcal{A}}}}$  is given by:
\begin{align}\label{eq:PDFratio_eve_Ucentre}
\begin{split}
{f}_{_{\bar{\gamma}^{^ {\mathcal{S}\mathcal{U}}}_{_{\mathcal{S}\mathcal{A}}}}}\left(z\right)& =\begin{cases}
  \frac{ (\lambda_\mathcal{E})^{-\frac{2}{\theta}} z^{\frac{2}{\theta}-1} } {\theta } -\frac{ 32(\lambda_\mathcal{E})^{-\frac{2}{\theta}}z^{-\frac{2}{\theta}-1}\mathrm{\beta}_{1}\left(\frac{(\lambda_\mathcal{E} )^{-\frac{2}{\theta}} }{z^{-\frac{2}{\theta}}}\right) }{  \pi \theta },\\ \qquad \qquad \qquad \qquad \qquad \qquad \text{$ \forall \; 0<z<{{2}^{\theta}\lambda_\mathcal{E}} $},\\
 \frac {32(\lambda_\mathcal{E})^{\frac{2}{\theta}}z^{-\frac{2}{\theta}-1} }{  4 \theta } -\frac{32({\lambda_\mathcal{E}})^{\frac{2}{\theta}}z^{-\frac{2}{\theta}-1}}{  \pi \theta }\mathrm{\beta}_{4}, \\ \qquad \qquad \qquad \qquad \qquad \qquad \text{$ \forall \; {{2}^{\theta}\lambda_\mathcal{E}}\ \leq  z <\infty   $}.
\end{cases}
\end{split}
\end{align}
Similarly,  PDF of ratio of ${\gamma_{_{\mathcal{S}\mathcal{U}}}}$ and ${\gamma_{_{\mathcal{A}\mathcal{U}}}}$ can be  given as:
\begin{align}\label{eq:PDFratio_Jam_Ucentre}
\begin{split}
{{f}}_{_{\bar{\gamma}^{^ {\mathcal{S}\mathcal{U}}}_{_{\mathcal{A}\mathcal{U}}}}}\left(z\right) =\begin{cases}
  \frac{ ({\lambda_{\mathcal{J}}})^{-\frac{2}{\theta}} z^{\frac{2}{\theta}-1} }{\theta},  &   \text{$0<z<{{\lambda_{\mathcal{J}}}}{}\ $},\\
  \frac {{4(\lambda_{\mathcal{J}}})^{\frac{2}{\theta}}  z^{-\frac{2}{\theta}-1}}{\theta}, & \text{$   {{\lambda_{\mathcal{J}}}}\ \leq  z <\infty   $}.\\
\end{cases}
\end{split}
\end{align}
To realise a practical system, QoS constraint of legitimate nodes such as maximum separation between  $\mathcal{S}$-$\mathcal{U}$ can be considered with these configurations as  explained in   Section 2 and  Section 3. Similarly, the concept of eavesdropping zone can be implemented to consider hardware constraints of attacker in eavesdropping mode.
As it is clear from \eqref{eq:PDFratio_E_Scentre} to  \eqref{eq:PDFratio_Jam_Ucentre}  that 
 ratio PDF ${f}_{_{\bar{\gamma}^{^ {\mathcal{S}\mathcal{U}}}_{_{\mathcal{S}\mathcal{A}}}}} $ and ${f}_{_{\bar{\gamma}^{^ {\mathcal{S}\mathcal{U}}}_{_{\mathcal{A}\mathcal{U}}}}} $ is available in closed-form for both configurations.
  Therefore, SOP  for the above-stated  configurations under eavesdropping and jamming mode respectively can be calculated in closed-form in high SNR regime as discussed in Section 4.  Moreover, system parameters can be designed  to get desired secrecy performance, as explained in the next section.
\subsection{Secrecy QoS Aware Perspective of Design Parameters}
{In the proposed work, the critical design parameters including $D$,
	$P_\mathcal{S}$, $R$, $r$ and $P_\mathcal{J}$ are leveraged to achieve the desired secrecy performance with SOP as the QoS parameter. 	 There is a minimum acceptable level of SOP used as performance metric, which needs to be maintained for the communication.}
The proposed framework can be utilised to attain the intended secrecy performance through the above-mentioned parameters classified into two categories: 
  \subsubsection{ Distance Dependent Parameters}
   These parameters include  the maximum separation between ${\mathcal{S}}$ and ${\mathcal{U}}$ denoted by $D$,  the eavesdropping range $r$  and the deployment range $R$.
    \begin{itemize}
    \item    {The distance parameter $D$:} It characterises the ${\mathcal{S}}$ to ${\mathcal{U}}$  communication range.     
    Larger communication range or say network coverage needs a greater value of $D$ which may result in degraded QoS arising out of increasing path-loss.
       This condition seeks a trade-off between the conflicting requirements of achieving higher communication  range while meeting the required QoS.
        For the given system model, the maximum value of $D$ is restricted to $D_{max}=2R$, the farthest possible distance between $\mathcal{S}$ and $\mathcal{U}$ in a circular coverage area.
    The key insights leveraging to  designing of $D$ for attaining desired secrecy performance are given as follows:
            
     \textit{a)} 
         {Increasing path-loss with larger $D$ results  in reduced capacity of $\mathcal{S}$-$\mathcal{U}$ link, which in turn yields degraded SC and higher SOP. Hence, $\hat{p}^i_{o}$ is a monotonically increasing function of  $D$. } 
           Therefore, an acceptable value of SOP ${{p}}_{o_{_{th}}}$ as per the desired QoS requirements can be obtained from \eqref{eq:outage_E_Th2_highsnr} by setting  $D=D_{_{{th}}}$  
   which is determined as    $D_{_{{th}}}=\left\lbrace D\mathrel{}\middle|\mathrel{} \left( \hat{p}^i_{o}={{p}}_{o_{_{th}}}\right) \wedge\left(0 \le D \le  D_{max} \right )\right \rbrace$. Here, $D_{_{{th}}}$ represents the maximum allowed $ D $ for maintaining  an acceptable SOP ${{p}}_{o_{_{th}}}$. However, it is not possible to obtain the explicit analytic solution for $D_{_{{th}}}$  because of   highly non-linear terms in \eqref{eq:outage_E_Th2_highsnr} and \eqref{eq:outage_J_Th3}. 
        Thus,  Golden-section (GS) based linear search [39] technique is used to find the solution within  $(0< D\leq D_{\max})$. {For $D<D_{_{{th}}}$, SOP will fall within the acceptable range.} Here, $\hat{p}^i_{o}$ represents lower bound on SOP ${p}^i_{o}$  under jamming and eavesdropping, with the condition of  positive SC, and   corresponding $D_{_{{th}}}$ thus provides upper bound on the maximum allowed distance between ${\mathcal{S}}- {\mathcal{U}} $ pair.
    
     \textit{b)}  Since, the maximum distance between $\mathcal{S}$  and $\mathcal{U}$  is upper-bounded by $D$, this results into a lower bound for SNR and respective capacity  $C=\log_2({1+\frac{\kappa_{_{_{\mathcal{S}\mathcal{U}}}}}{{D}^{\theta}}})$ which is guaranteed  for legitimate channel in the absence of $\mathcal{A}$. 
     Therefore, for a given  ${C_{st}}$,  there exists a unique threshold point $D_o = \left({\frac{\kappa_{_{_{\mathcal{S}\mathcal{U}}}}  }{2^{C_{st}}-1}}\right)^{\frac{1}{\theta}}$ such that for $ D \leq D_o$,  outage is attributed by the $\mathcal{A}$ only.
     For $ D> D_o$, SOP increases at a faster rate due to  the  joint contribution  $\mathcal{A} $ as well as increased propagation losses caused by larger $D$ towards  outage.
    Thus, the appropriate setting of  $D$ affects the SOPs ${p}^\mathcal{E}_{o}$ and ${p}^\mathcal{J}_{o}$ and asymptotic SOPs $\hat{p}^\mathcal{E}_{o}$ and $\hat{p}^\mathcal{J}_{o}$ significantly.
                            $ D_o$ being a function of ${C_{st}}$ presents  the QoS-aware design threshold.\\
                    \textit{c):} Under eavesdropping, it is  observed further  that there exists another threshold    $D_{_{sat}}=\left({\frac{\lambda_{\mathcal{E}} r^\theta }{2^{C_{st}}}}\right)^{\frac{1}{\theta}}$ 
            for a given $C_{st}$ and $r$ such that  for  $D < \min(D_{_{sat}},D_o)$, $\hat{p}^{\mathcal{E}}_{o}$ does not vary with $r$.
              The reason behind this behaviour lies in the fact that for   $p_o^I$ gets vanish for $D \leq D_o$ and { $\alpha p_o^E$ becomes constant with respect to $r$  for $D <D_{_{sat}}$
              as depicted from the first case of \eqref{eq:outage_E_Th2_highsnr} after substituting the value of $\alpha$ from \eqref{eq:PDFSNRSA}}. 
              Hence, by keeping $D < \min(D_{_{sat}},D_o)$,  the impact of $r$ can be neutralised on the security performance.
   
     \item {The eavesdropping range $r$:} It  
         reflects the capability of $\mathcal{A}$ to eavesdrop the communication. The larger value of $r$  improves the likelihood of eavesdropping but also results in a reduction of eavesdropping link capacity caused by increased path-loss which creates unfavourable conditions for $\mathcal{A}$. It is also constrained by  overheads in hardware required  to decode the tapped signal.             It, therefore, becomes crucial for $\mathcal{A}$  to determine a suitable value of $r$ to maintain a trade-off between its contradictory nature.
              For desired settings of  $r$, significant inferences are drawn from analytic expression derived in $\eqref{eq:outage_E_Th2_highsnr}$ as follows:\\
       \textit{a):} 
       Since the presence of effective  $\mathcal{A}$ increases the outage probability, it results in $p^E_{o}>p^I_o$. We also note that $p^E_{o}$ is a decreasing function of $r$ because increased average path-loss of $\mathcal{S}$-$\mathcal{A}$ link  with increased $r$ leads to degraded $\mathcal{S}$-$\mathcal{A}$ link capacity and reduced outage probability or improved secrecy performance. Further, eavesdropping probability $\alpha$ is an increasing function of $r$. Using all these observations, it is revealed that first term of right-hand side is always positive while second one is always negative in 
           $\frac{\partial \hat{p}^\mathcal{E}_{o}}{\partial r}=\frac{2r(p^E_{o}-p^I_o)}{R ^2}+\frac{r^2}{R ^2}\frac{\partial p^E_{o}}{\partial r}$. 
        This shows non-monotonic nature of  $\hat{p}^\mathcal{E}_{o}$ in $r$.  We have provided more insights with numerical investigations later in the result section. 
        \textit {b):} The above observation also  reveal existence of a 
        threshold point for $r$ determined  as $r_{sat}\triangleq \left({\frac{2^{C_{st}} D^\theta }{\lambda_{\mathcal{E}}}}\right)^{\frac{1}{\theta}}$, above which, $\hat{p}^\mathcal{E}_{o}$ will exhibit a definite   trend with $r$ that is unfavourable for $\mathcal{A}$.
       For  $r>r_{sat}$, or  {equivalently $C_{st}< \log_2 \left(\frac{\lambda_{\mathcal{E}} r_{sat}^\theta}{D^{\theta}}\right)$,  $\alpha p^E_{o} $ becomes constant with respect to $r$ as clear from the first case of  $\eqref{eq:outage_E_Th2_highsnr}$. %
       Additionally, $p^I_o$ becomes zero with $r>r_{sat}$ and $D\leq D_o$ as  legitimate channel (in the absence of effective $\mathcal{A}$) can ensure the given secrecy rate  as discussed above,  $\hat{p}^\mathcal{E}_{o}$, therefore, becomes independent of $r$ under these conditions.
                  In other case, with $r>r_{sat}$ and $D> D_o$, $\hat{p}^\mathcal{E}_{o}$  exhibits a negative trend with $r$ due to decrease in $(1-\alpha)p^I_{o}$  with increasing $r$.  Here, $p^I_{o}$ is constant in $r$, but $(1-\alpha)$ is decreasing function of the same.}
                From this inference, it can be concluded that $\mathcal{A}$  can  reduce   its eavesdropping cost by restricting its $r$ to $r_{sat}$.
        \item { Deployment range $R$:}  It characterises of deployment region with radius $R$ in which  $\mathcal{S}$ and $\mathcal{U}$ may be  randomly positioned anywhere. A larger value of $R$ results in reduced capacity for legitimate link caused by the  value of average path-loss. It eventually results in degraded SC. Furthermore, in the presented system model, $R$ is  a radius  of a region where $\mathcal{A}$ positioned at the origin.  The larger value of $R $, therefore, will decrease the likelihood of eavesdropping for a given $r$ which   results in improved SC under eavesdropping mode. Under  jamming mode also, the average path-loss of $\mathcal{A}$-$\mathcal{U}$ channel increases with increasing $R $ resulting in enhanced secrecy.
        Thus, the relation of ${p}_{o}$ may not be monotonic with $R$. 
                 We also notice that $\hat{p}^{\mathcal{E}}_{o}$ is function of $\frac{r}{R}$ and $\frac{D}{R}$ for $ D \leq D_o$ whereas  $\hat{p}^{\mathcal{J}}_{o}$  is a function of  $\frac{D}{R}$. We,  therefore, conclude that $R$ acts as a normalising parameter for $ D\leq D_o$ under eavesdropping and for entire range of $D$ under jamming.  Consequently, for a given SOP requirement,   $R$ influences the maximum possible communication range covering the legitimate nodes. 
     \end{itemize}
    \subsubsection{ Power Controlling Parameters}
            This category include transmission power of $\mathcal{S}$ denoted by $P_{\mathcal{S}}$ and transmission power of              $\mathcal{J}$ denoted by $P_{\mathcal{J}}$. 
    
       \begin{itemize}
                 \item    {$P_{\mathcal{S}} :$}  
   Under eavesdropping mode of $ {\mathcal{A}}$, $P_{\mathcal{S}}$ enhances the capacity of both $\mathcal{S}$-$\mathcal{U}$ and $\mathcal{S}$-$\mathcal{A}$ links by enhancing decoding rate. Being the difference of these two capacities, SC shows different behaviour with $P_{\mathcal{S}}$.  In the high SNR regime, SC saturates with $P_s$ and controlled by only channel gain ratio of $\mathcal{S}$-$\mathcal{U}$ to $\mathcal{S}$-$\mathcal{A}$ links as clear from \eqref{eq:C_e_hi_total}. Therefore, the impact of $ P_{\mathcal{S}}$ on secrecy performance   vanishes under eavesdropping mode in high SNR regime. On the other hand,  $P_{\mathcal{S}}$ enhances the capacity of only $\mathcal{S}$-$\mathcal{U}$ link under jamming mode while   $P_{\mathcal{J}}$ controls the  $\mathcal{A}$-$\mathcal{U}$ link.
   Consequently, for a given  $P_{\mathcal{J}}$, SC increases with $P_{\mathcal{S}}$ and  $\hat{p}^{\mathcal{J}}_{o}$ in \eqref{eq:outage_J_Th3} becomes monotonically decreasing function of $P_{\mathcal{S}}$. However, large  $P_{\mathcal{S}}$ causes high energy cost at ${\mathcal{S}}$, and it is also constrained by energy budget of  $\mathcal{S}$. Hence, to attain an acceptable SOP, the minimum required $P_{\mathcal{S}}$ defined as  $P_{{\mathcal{S}}_{th}}$ can be found as $ P_{{\mathcal{S}}_{th}}=\left\lbrace P_{\mathcal{S}}\mathrel{}\middle|\mathrel{} \left( \hat{p}^\mathcal{J}_{o}={{p}}_{o_{_{th}}}\right)\wedge\left(0\le P_{\mathcal{S}}\le P_{{\mathcal{S}}_{\max}}\right)\right \rbrace$ from  $\eqref{eq:outage_J_Th3}$  using GSS within lower and upper bounds $(0<P_{\mathcal{S}} \leq P_{{\mathcal{S}}_{\max}}) $ where $ P_{{\mathcal{S}}_{\max}} $ depends upon energy budget of $\mathcal{S}$.
  \item {$P_{\mathcal{J}} :$} The jamming power $P_{\mathcal{J}}$, represents the attacker capability of  $\mathcal{A}$ in  jamming mode. Though a larger value of $P_{\mathcal{J}}$ is favourable for $\mathcal{A}$, it is also constrained by energy budget of $\mathcal{A}$. Further, $P_{\mathcal{J}}$ increases the interference to reception  of legitimate signal. Therefore, $\hat{p}^{\mathcal{J}}_{o}$ in  \eqref{eq:outage_J_Th3}, is a monotonically increasing function with    ${P_{\mathcal{J}}}$ for a given  $P_{\mathcal{S}}$. Thus, to yield  a desired SOP ${{p}}_{o_{_{th}}}$ to legitimate nodes, the minimum required $P_{\mathcal{J}}$ defined as  $P_{{\mathcal{J}}_{th}}$ can be found as $P_{{\mathcal{J}}_{th}}=\left\lbrace P_{\mathcal{J}}\mathrel{}\middle|\mathrel{} \left( \hat{p}^\mathcal{J}_{o}={{p}}_{o_{_{th}}}\right)\wedge\left(0\le P_{\mathcal{J}}\le P_{{\mathcal{J}}_{\max}}\right)\right \rbrace$ from  $\eqref{eq:outage_J_Th3}$  using GSS within lower and upper bounds $(0<P_{\mathcal{J}} \leq P_{{\mathcal{J}}_{\max}}) $ where $ P_{{\mathcal{J}}_{\max}} $ depends upon energy budget of $\mathcal{A}$.
           More specifically 
              $\hat{p}^{\mathcal{J}}_{o}$ is increasing function of $\frac{P_{\mathcal{J}}}{P_{\mathcal{S}}}$.
              However, the rate of increment varies for different value of $\frac{P_{\mathcal{J}}}{P_{\mathcal{S}}}$.
              A subsequent investigation is done in the result section.

    \end{itemize}
   
\section{Numerical Results}
Here, we numerically investigate the secrecy  performance of  the proposed QoS-aware system model. {Unless explicitly mentioned, we have considered:  $P_{{\mathcal{S}}}=P_{{\mathcal{J}}}=100$ mW, $\sigma^2=-90$ dBm, $\theta=3$  
$a_{_{\mathcal{S}\mathcal{U}}}=a_{_{\mathcal{S}\mathcal{A}}}=a_{_{\mathcal{A}\mathcal{U}}} = 10^{-5}$,  
 $C_{st}=1$ bps/Hz,  $r=50$ m, $R=100$ m referring IoT networks  \cite{IoTsecurity, White}. We consider that $\mathcal{S}$ and $\mathcal{U}$ are deployed stochastically uniformly within
a circle of radius $R$ with attacker $\mathcal{A}$  at its centre. The proposed configuration represents  a practical  ad-hoc network scenario  where   nodes can be placed randomly in the deployment region \cite{D2D}, and an attacker invades the network with a malicious intention of eavesdropping or jamming. The centre location of $\mathcal{A}$ will offer leverage  to it the maximal possible attacking capability. In other words, it depicts the worst-case scenario in terms of secrecy performance. To deal with all possible deployment cases, we have considered that legitimate nodes may lie inside the eavesdropping zone of radius $r$  or outside of  it.  The maximum distance between $\mathcal{S}$ and $\mathcal{U}$ is restricted to $D$.}
We have  considered  $D=r$ under eavesdropping and  $D=R$ under jamming in validation with the purpose of fair comparison of other parameters by keeping maximum distance for legitimate and attacker links same. For the given system parameters, the value of distance  threshold parameter $D_o$ has been determined to $100$ m  following discussion in Section 5. 
 \subsection{Validation of analysis}
The analytical formulations  carried out in Section 3   lay the foundation for secrecy analysis performed under Section 4.
{ Figs. \ref{fig:evepdf} and \ref{fig:Jampdf}  validate these analytical formulations through extensive Monte-Carlo simulations. Analytical results are represented by different line styles for different parameters while marker `o' is used to represent corresponding simulation results. For generating simulation results,  we have used $10^6$ random realisations of distances  ${d_{_{\mathcal{S}\mathcal{U}}}}$, ${d_{_{\mathcal{S}\mathcal{A}}}}$ and ${d_{_{\mathcal{A}\mathcal{U}}}}$ by creating random locations of $\mathcal{S}$ and $\mathcal{U}$.}\begin{figure}[!h]
\centering\includegraphics[width=2.8in]{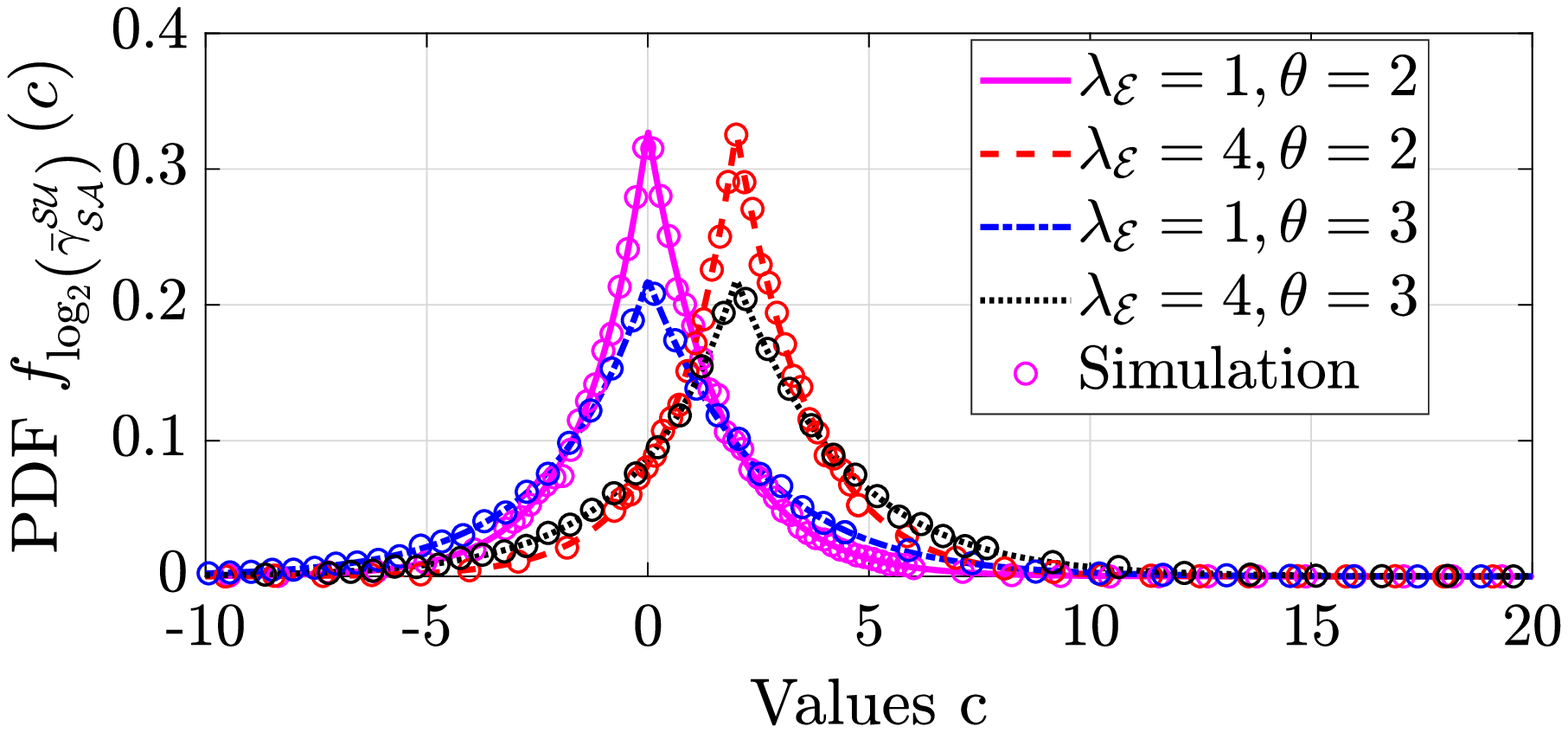}
\caption{ \footnotesize \textit{Validation  for  PDF of  ${\log_2\left({\bar{\gamma}^{^ {\mathcal{S}\mathcal{U}}}_{_{\mathcal{S}\mathcal{A}}}}\right)}$  with  $D=r$ under eavesdropping.}}
\label{fig:evepdf} 
\vspace{-2mm}
\end{figure}
\begin{figure}[!h]
\centering\includegraphics[width=2.8in]{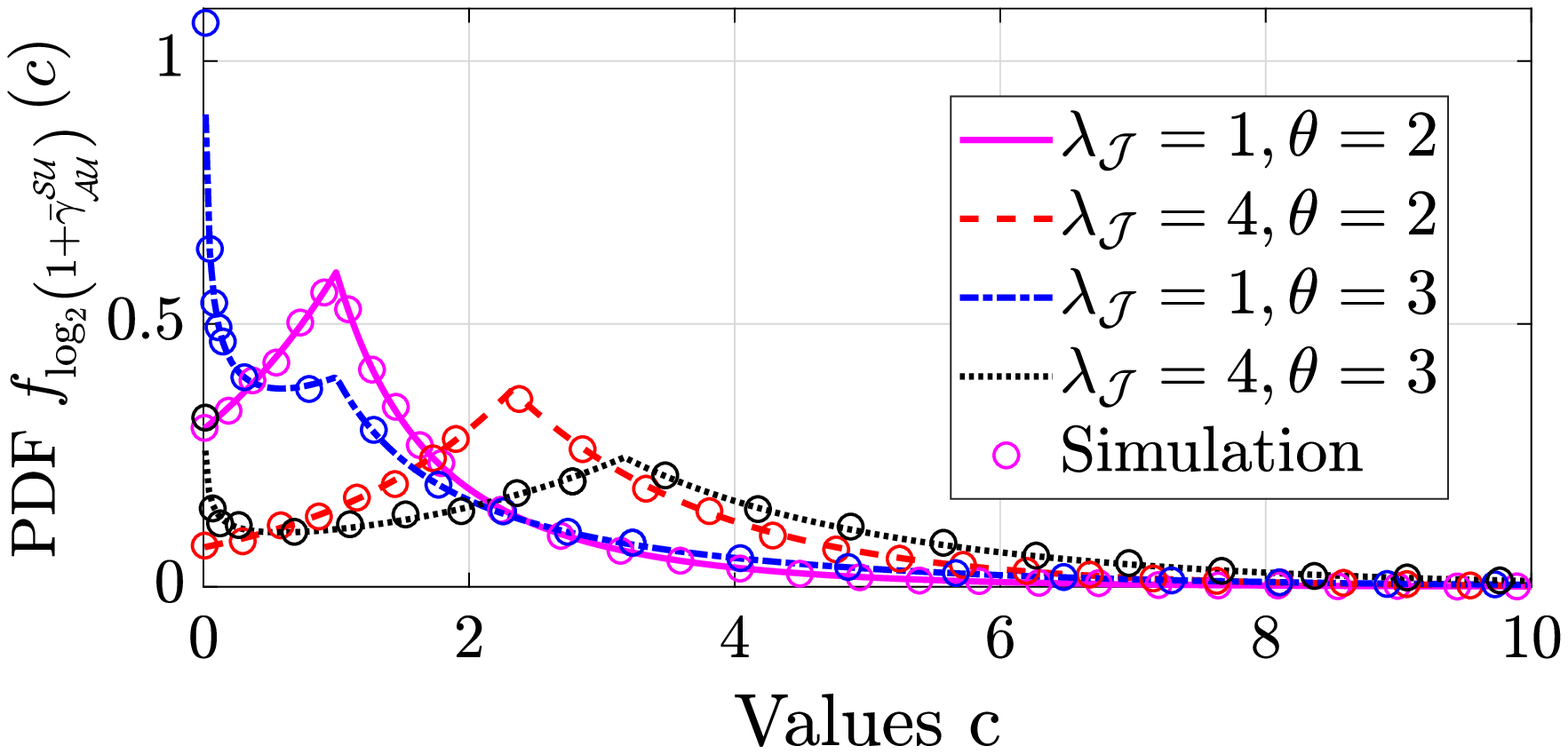}
\caption{\footnotesize \textit{Validation  for  PDF of  ${\log_2\left(1+{\bar{\gamma}^{^ {\mathcal{S}\mathcal{U}}}_{_{\mathcal{A}\mathcal{U}}}}\right)}$   with $D=R$   under jamming.}}
\label{fig:Jampdf}
\vspace{-2mm}
\end{figure}
Firstly, the expression \eqref{eq:PDFSC1} for PDF  ${{f}}_{\log_2\left({\bar{\gamma}^{^ {\mathcal{S}\mathcal{U}}}_{_{\mathcal{S}\mathcal{A}}}}\right)}\left(c\right) $ is  validated in Fig. {\ref{fig:evepdf}}. It also represents the PDF of  SC $ {\hat{C}}^{\mathcal{E}}_{_{1,2}}$  for the cases 1 and 2 under eavesdropping mode under high SNR regime     as observed from  \eqref{eq:C_e_hi_total} provided  ${\log_2\left({\bar{\gamma}^{^ {\mathcal{S}\mathcal{U}}}_{_{\mathcal{S}\mathcal{A}}}}\right)\geq 0}$. Hence, it is represented as   ${{f}}_{\hat{C}^{\mathcal{E}}_{_{1,2}}}\left(c\right) $ for further discussion. 
We have used $10^6 $ random realisations of distances $d_{_{\mathcal{S}\mathcal{U}}}$ and $d_{_{\mathcal{S}\mathcal{A}}} $ to generate 
simulation results.  Analytical and simulation results are found to be  in close agreement  with  root mean square error (RMSE) $ < 10^{-4}$.
 Similarly,  Fig. {\ref{fig:Jampdf}}, validates the   analytical expression for PDF  ${f}_{\log_2\left(1+{\bar{\gamma}^{^ {\mathcal{S}\mathcal{U}}}_{_{\mathcal{A}\mathcal{U}}}}\right)}\left(c\right) $ given by \eqref{eq:PDFratio_jamming_SC} which also represents  PDF of SC ${\hat{C}^{\mathcal{J}}}$ under jamming mode in high SNR regime, as explained in \eqref{eq:C_J_asym2}. Therefore, equivalent representation of PDF can be expressed as ${{f}}_{\hat{C}^\mathcal{J}}\left(c\right) $. A close match is observed between analytical   and simulation results 
 with  RMSE  $ <  10^{-4}$. 
  We have validated the PDFs ${{f}}_{\hat{C}^{\mathcal{E}}_{_{1,2}}}\left(c\right) $ and $f_{{\hat{C}}^{\mathcal{J}}} $ are  for various values of  ratio $\lambda_{\mathcal{E}}$, $\lambda_{\mathcal{J}}$  respectively and analysed the impact of path-loss exponent $\theta$ also.  Here, $\lambda_{\mathcal{E}}$ reflects the  ratio of channel parameters $a_{_{\mathcal{S}\mathcal{U}}}$ to $a_{_{\mathcal{S}\mathcal{A}}}$ and  $\lambda_{\mathcal{J}}$ represents ratio of channel parameters $a_{_{\mathcal{S}\mathcal{U}}}$ to $a_{_{\mathcal{A}\mathcal{U}}}$ for $P_{{\mathcal{S}}}=P_{{\mathcal{J}}}$.
The effect of relative channel conditions of legitimate and both attacker links are revealed in 
Figs. {\ref{fig:evepdf}} and {\ref{fig:Jampdf}} 
Higher channel gain of legitimate link compared to that for legitimate link is found to yield larger realisations of  ${\hat{C}}^{\mathcal{E}}_{_{1,2}}$ and ${\hat{C}}^{\mathcal{J}}$  in eavesdropping and jamming respectively.
 Larger spread in  both distributions is also observed for a larger value of path-loss exponent $\theta $.
Since asymptotic SOP $\hat{p}^\mathcal{E}_{o}$ and $\hat{p}^\mathcal{J}_{o}$ in $\eqref{eq:outage_E_Th2_highsnr}$ and  $\eqref{eq:outage_J_Th3}$ respectively are  functions of asymptotic SC, SOP analysis carried out in Section 4 is also verified by 
analytical validation of SC.
\begin{figure}[!h]
\centering\includegraphics[width=2.8in]{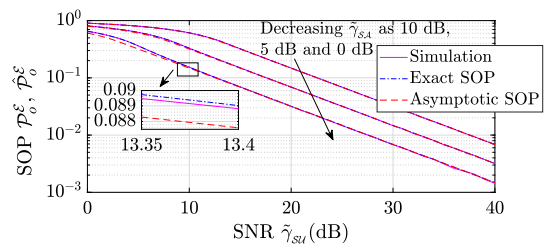}
\caption{\footnotesize{\textit{ {SOP validation for $R\!=\! r\!=\!D$  under eavesdropping.}}}}
\label{fig:exact_eavesdropping} 
\vspace{-2mm}
\end{figure}

\begin{figure}[!h]
\centering
\includegraphics[width=2.8in]{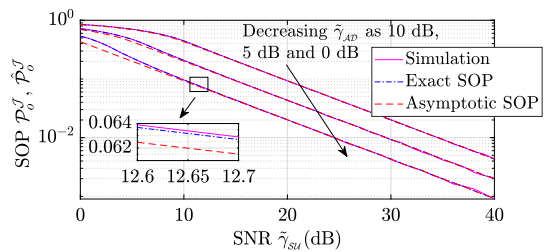}
\caption{SOP validation  $R=D$  under jamming.}
\label{fig:exact_jamming} 
\vspace{-2mm}
\end{figure}
{Figs. \ref{fig:exact_eavesdropping} and \ref{fig:exact_jamming} verify the accuracy of closed-form solutions given by \eqref{eq:outage_E_Th2_highsnr} and \eqref{eq:outage_J_Th3} for SOPs derived through asymptotic analysis. They are compared    with  the exact SOPs given by  \eqref{eq:outage_E_Th1} and \eqref{eq:outage_J_exact} in Section 4 under eavesdropping and jamming modes respectively. Analytical results for the exact SOP are also compared with Monte-Carlo simulation results and found in close agreement.} Fig. \ref{fig:exact_eavesdropping} shows variation of SOP ${p}^\mathcal{E}_{o}$ and asymptotic SOP $\hat{p}^\mathcal{E}_{o}$  against worst case SNRs  as denoted by ${\tilde{\gamma}_{_{\mathcal{S}\mathcal{U}}} }$ and  $\tilde{\gamma}_{_{\mathcal{S}\mathcal{A}}}$. Similarly, variation of SOPs  ${p}^\mathcal{J}_{o}$ and asymptotic SOP $\hat{p}^\mathcal{J}_{o}$  are plotted in Fig. \ref{fig:exact_jamming}  against worst case SNRs  $ {\tilde{\gamma}_{_{\mathcal{S}\mathcal{U}}} }$ and   $\tilde{\gamma}_{_{\mathcal{A}\mathcal{U}}}$. Hereby,   $\tilde{\gamma}_{_{\mathcal{S}\mathcal{U}}}$ = $\frac{\kappa_{_{\mathcal{S}\mathcal{U}}}} {D^{\theta}}$,  $\tilde{\gamma}_{_{\mathcal{S}\mathcal{A}}} $ = $\frac{\kappa_{_{\mathcal{S}\mathcal{A}}}} {r^{\theta}} $, and $\tilde{\gamma}_{_{\mathcal{A}\mathcal{U}}} $ = $\frac{\kappa_{_{\mathcal{A}\mathcal{U}}}} {R^{\theta}} $. 
 For $\tilde{\gamma}_{_{\mathcal{S}\mathcal{A}}}=\tilde{\gamma}_{_{\mathcal{A}\mathcal{U}}}=0$ dB, we observe that results are closely matching for  $\tilde{\gamma}_{_{\mathcal{S}\mathcal{D}}}>7$ dB under eavesdropping and $\tilde{\gamma}_{_{\mathcal{S}\mathcal{D}}}>9$ dB under jamming. It is also revealed that under the condition $\tilde{\gamma}_{_{\mathcal{S}\mathcal{A}}}=\tilde{\gamma}_{_{\mathcal{A}\mathcal{U}}}=5$ dB,  results are  in good agreement  under eavesdropping while very low deviations are observed for   $\tilde{\gamma}_{_{\mathcal{S}\mathcal{D}}}<5$ dB under jamming. Hence,  closed-form results derived for asymptotic SOP  holds good for evaluating SOP for wide  range of practical values of SNR for legitimate as well as both attacker channels. Henceforth,  we   refer the asymptotic SOP as simply SOP.

\subsection{Insight about design parameters}
Figs. \ref{fig:r_SOP} to \ref{fig:DVsRjam1} deliver the valuable insights about deigning of system  parameters for achieving intended secrecy performance in  accordance with  the  discussion  carried out  in  Section  5.   The impact  of  maximum  distance $D$ between legitimate nodes  on  SOP is analysed in Fig \ref{fig:r_SOP} with varying eavesdropping range $r$ of $\mathcal{A}$. {Since $r$ and $D $  reflect the restriction on random  inter-node distances respectively  ${d_{_{\mathcal{S}\mathcal{A}}}}$ for effective eavesdropping and  ${d_{_{\mathcal{S}\mathcal{U}}}}$ for maintaining QoS under eavesdropping as well as jamming, we may take extreme values of $r=r_{max}=R$ and $D=D_{max}=2R$   to represent a traditional scenario when no restriction is put on corresponding distances in terms of these variables.}
  A significant rise in SOP $ \hat{p}^\mathcal{E}_{o}$ is recorded for $D > D_o$.  
  The results also reveal that $D$ influence SOP $ \hat{p}^\mathcal{E}_{o}$ more adversely as compared to $r$.
 We also note that, satisfying the condition $r$ > $r_{sat}$ results in constant SOP for $D \leq D_o$ where $r_{sat}$ is defined in Section 5.
  For a typical value of $D$= 60 m, SOP saturates to 0.11 for $r> r_{sat}=75$ m and  becomes independent of a further increase in eavesdropping range.  Fig. \ref{fig:r_SOP} also exhibits the uni-modal nature of $\hat{p}^\mathcal{E}_{o}$ in $r$.   It shows that there exists a value of $ r$    that gives optimal performance. 
    {This optimal value of $r$  can be utilised by $\mathcal{A}$ for posing maximum adverse impact on the secrecy performance of legitimate communication in an efficient manner as compared to its extreme value.}
    \begin{figure}[!h]
\centering
\includegraphics[width=2.5in]{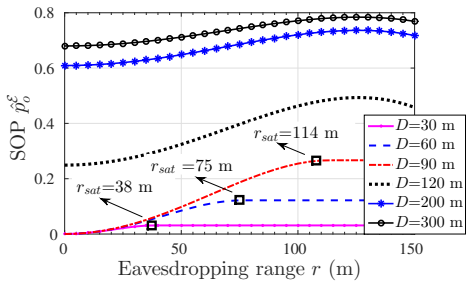}
\caption{\footnotesize \textit{Insights about $r$ with $R\!=\!150$ m, $D_o\!=\!100$ m under eavesdropping.}}
\label{fig:r_SOP} 
\vspace{-8mm}
\end{figure}
\begin{figure}[!h]
\vspace{-2mm}
\centering
\includegraphics[width=2.7in]{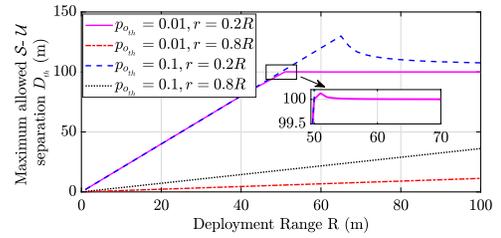}
\caption{\footnotesize \textit{Maximum allowed $D$ under  eavesdropping.}}
\label{fig:D_R} 
\end{figure}
\\
\begin{remark}\label{rem:constSOP}
 \textit{ {The SOP $ \hat{p}^\mathcal{E}_{o}$  under  eavesdropping can be retained at a remarkably lower value by meeting the conditions $D \leq D_o$. Furthermore,  from attacker point of view, eavesdropping range of $\mathcal{A}$ can be restricted to $ r_{sat}$ instead of its extreme value
 for $ D< D_o$ as no further gain is attained by  $\mathcal{A}$ towards  secrecy outage.} Equivalently, $\mathcal{S}$-$\mathcal{U}$ pair can restrict the maximum separation between them to $D_{_{sat}}$ so as  to neutralise the impact of eavesdropping range on   $\hat{p}^\mathcal{E}_{o}$  beyond a given value of $r$ provided  $D_{_{sat}} \leq D_{o}$.    It may, therefore,  be recognised that for  technologies used for short-range communication like D2D and IoT, the effect of $r$ gets saturated beyond a certain value  while an important role is played by $r$ in secrecy performance for technologies used for long range communications. }
  \end{remark}
 Fig. \ref{fig:D_R} provides the insights about $D_{_{{th}}}$, the maximum value of allowed $D$  to achieve an acceptable value of SOP ${{p}}_{o_{_{th}}}$  for various values of ${{p}}_{o_{_{th}}}$ and  $r$ with varying $R$. 
 It is observed that $D_{_{{th}}}$ exhibits increasing nature  with ${{p}}_{o_{_{th}}}$ and  decreasing trend  with $r$. 
 We also find that the value of   $D_{_{{th}}}$ linearly increases  with increased $ R$ for $D_{_{{th}}}<D_o$ or $D_{_{{th}}}=D_{max}$.  The former condition happens because of the fact that for $D_{_{{th}}}<D_o$, ${{p}}_{o_{_{th}}}$ is a function of $\frac{D_{_{{th}}}}{R}$. The latter condition is based on the fact that acceptable value of  SOP ${{p}}_{o_{_{th}}}$ is high enough that  maximum communication range can be enjoyed by legitimate nodes for $D_{_{{th}}}=D_{max}$ which is found to be a linear function of $R$.
 Under the former condition, when $D_{_{{th}}}$ exceeds $D_o$ with increasing $R$, the behaviour of $D_{_{{th}}}$ is illustrated for practical range of $\hat{{p}}_{o_{_{th}}}$  in Fig. \ref{fig:D_R}. It is noticed that, $D_{_{{th}}}$  fastly decreases first to meet acceptable SOP ${{p}}_{o_{_{th}}}$ and then gradually decreases with $R$.  This illustration, thus, can be used as a reference for the design of secrecy performance aware  $D_{_{{th}}}$.
   \begin{remark}\label{rem:deployementRangeE}
 \textit{  Under eavesdropping, the large deployment range supports in the greater   maximum allowed $\mathcal{S}- \mathcal{U}$ separation $D_{_{{th}}}$ for a given acceptable SOP ${{p}}_{o_{_{th}}}$ provided   $D_{_{{th}}}<D_o$ or $D_{_{{th}}}=D_{max}$.  For $ D_o\leq D_{_{{th}}}<D_{max}$, there is no advantage in increasing  $R$ further for the purpose of enhancing secrecy QoS aware  $D_{_{{th}}}$ for a practical range of acceptable SOP.
   }
   \vspace{-2mm}
  \end{remark}
  \begin{figure}[!h]
\centering
{{\includegraphics[width=2.7in]{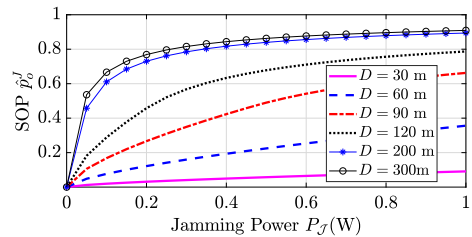}} }
\caption{\footnotesize \textit{SOP under jamming  for  $R\!=\!150$ m,  $D_o=100$ m.}}
\label{fig:PjVsSOP} 
\vspace{-2mm}
\end{figure}

   \begin{figure}[!h]
   \vspace{-2mm}
\centering
\includegraphics[width=2.7in]{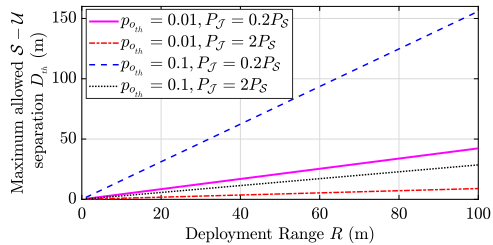}
\caption{\footnotesize \textit{Maximum   allowed $D$   under jamming.}}
\label{fig:DVsRjam1} 
\end{figure}
  
   In Fig. \ref{fig:PjVsSOP} we plot  variation of SOP  $\hat{p}^\mathcal{J}_{o}$ with $\mathcal{A}$'s jamming power $ P_{\mathcal{J}}$  and maximum distance between legitimate nodes $D$. 
  Results show that for $D \leq D_o$, $\hat{p}^\mathcal{J}_{o}$   increases almost linearly with increased $ P_{\mathcal{J}}$. For $D>D_o$,  there is a sharp increase in $\hat{p}^\mathcal{J}_{o}$. After a point,  $\hat{p}^\mathcal{J}_{o}$ increases gradually and gets saturated with $P_{\mathcal{J}}$. 
  For a particular value of  $D> 200$ m and $ P_{\mathcal{J}}>  0.5$ W, $\hat{p}^\mathcal{J}_{o}$ does not increase significantly and 
  almost gets saturated at $ P_{\mathcal{J}}>  1.5$ W which implies that $\mathcal{A}$ gets diminished returns by increasing $P_{\mathcal{J}}$ beyond this. Moreover, the impact of $D$  is found more detrimental on $ \hat{p}^\mathcal{J}_{o}$ relative to $P_{\mathcal{J}}.$
\begin{remark}\label{rem:Jam}
 \textit{   { The setting of $D \leq D_o$ enables us  to retain a lower level of SOP $ \hat{p}^\mathcal{J}_{o}$ in the presence of $\mathcal{A}$ in jamming mode.} 
 Though, degradation is significant for $D > D_o$,  $\mathcal{A}$ gets diminishing returns  by increasing $P_{\mathcal{J}}$ beyond a certain point. Therefore, in short-range  communication technologies like D2D, jamming power plays a more important role in degrading secrecy performance. Whereas in long-range communication technologies, the role of jamming power diminishes after a certain point.}
\end{remark}
 {These observations reflect that proposed model aids to analyse secrecy performance in a better way  as compared to the traditional scenario when  real-life constraints in terms of distance thresholds are not considered.}
Then the maximum allowed $\mathcal{S}- \mathcal{U}$ separation ${D_{_{th}}}$,  for the acceptable  SOP ${{p}}_{o_{_{th}}}= 0.01$ and $0.1$ are plotted  with different values of $P_{\mathcal{J}}$ in  Fig. \ref{fig:DVsRjam1} for jamming mode.
 It is observed that  ${D_{_{th}}}$  increases linearly with $ R$ for a given acceptable  SOP ${{p}}_{o_{_{th}}}$ because 
 ${{p}}_{o_{_{th}}}$ is a function of $\frac{D_{_{{th}}}}{R}$. This ${D_{_{th}}}$ is found as an  increasing function of ${{p}}_{o_{_{th}}}$  and a decreasing function of  $P_{\mathcal{J}}$. 
 The impact of ${{p}}_{o_{_{th}}}$ on $D_{_{{th}}}$ gets reduced with increased $P_{\mathcal{J}}$.
 \begin{remark}\label{rem:deployementRangeJ}
 \textit{  The larger deployment range results in  large value of maximum allowed  $\mathcal{S}- \mathcal{U}$ separation $D_{_{{th}}}$ for an acceptable SOP ${{p}}_{o_{_{th}}}$  under jamming. }  
  \end{remark}
 
{To analyse the impact of randomness in channel fading gain on  designing of $D_{th}$, we plot its PDF  $f_{D_{th}}(x)$ against its values $x$  under eavesdropping and jamming    in respectively Figs. 10 (a) and (b).
We consider the random channel power gains due to Rayleigh fading with the same mean value as in Figs. \ref{fig:D_R} and \ref{fig:DVsRjam1}. Channel fading gains $\left|h_{_{ij}}\right|^2, \; \forall\; i,j=\left\lbrace\mathcal{S},\mathcal{U},\mathcal{A}\right\rbrace$ 
follow exponential distribution. We also find that, $D_{th}$ increases with an increased ratio of fading gains  $\frac{|{h}_{_{\mathcal{S}\mathcal{U}}}|^2}{|{h}_{_{\mathcal{S}\mathcal{A}}}|^2}$ for eavesdropping and  $\frac{|{h}_{_{\mathcal{S}\mathcal{U}}}|^2}{|{h}_{_{\mathcal{A}\mathcal{U}}}|^2}$ for jamming. 
This can be explained by the fact that SOP decreases with an increased ratio of channel gains $\frac{|\mathrm{g}_{_{\mathcal{S}\mathcal{U}}}|^2}{|\mathrm{g}_{_{\mathcal{S}\mathcal{A}}}|^2}$ and $\frac{|\mathrm{g}_{_{\mathcal{S}\mathcal{U}}}|^2}{|\mathrm{g}_{_{\mathcal{A}\mathcal{U}}}|^2}$ for eavesdropping and jamming respectively.  Channel gains consist of a ratio of fading gain to the path loss, i.e., $\left|\mathrm{g}_{_{ij}}\right|^2= \frac{\left|h_{_{ij}}\right|^2}{\left(d_{_{ij}}\right)^\theta},\;\forall \; i,j=\left\lbrace\mathcal{S},\mathcal{U},\mathcal{A}\right\rbrace$ where ${d_{_{\mathcal{S}\mathcal{U}}}} \leq D,  {d_{_{\mathcal{S}\mathcal{A}}}} \leq r,   {d_{_{\mathcal{A}\mathcal{U}}}} \leq R$.  
Therefore,  for a given acceptable SOP ${{p}}_{o_{_{th}}}$, an increase in fading gain ratio leads to an increase in acceptable value of path loss in $\mathcal{S}$-$\mathcal{U}$ link due to ${d_{_{\mathcal{S}\mathcal{U}}}}$ for the given parameters. This, in turn,  increases the maximum allowed $\mathcal{S}$-$\mathcal{U}$ separation $D_{th}$. These findings lead to the corresponding pdf  of $D_{th}$  as observed in Figs. 10 (a) and 10 (b).
Increasing values of ${{p}}_{o_{_{th}}}$ shift the curve towards right in the sense that the larger realisations of $D_{th}$ become more likely and vice versa.}

	\begin{figure}[!t]
  \centering
  \vspace{-1mm}
  \subfigure[\textit{\footnotesize Eavesdropping with $r$ = $0.8R$}]
   { \centering\includegraphics[width=1.6in]{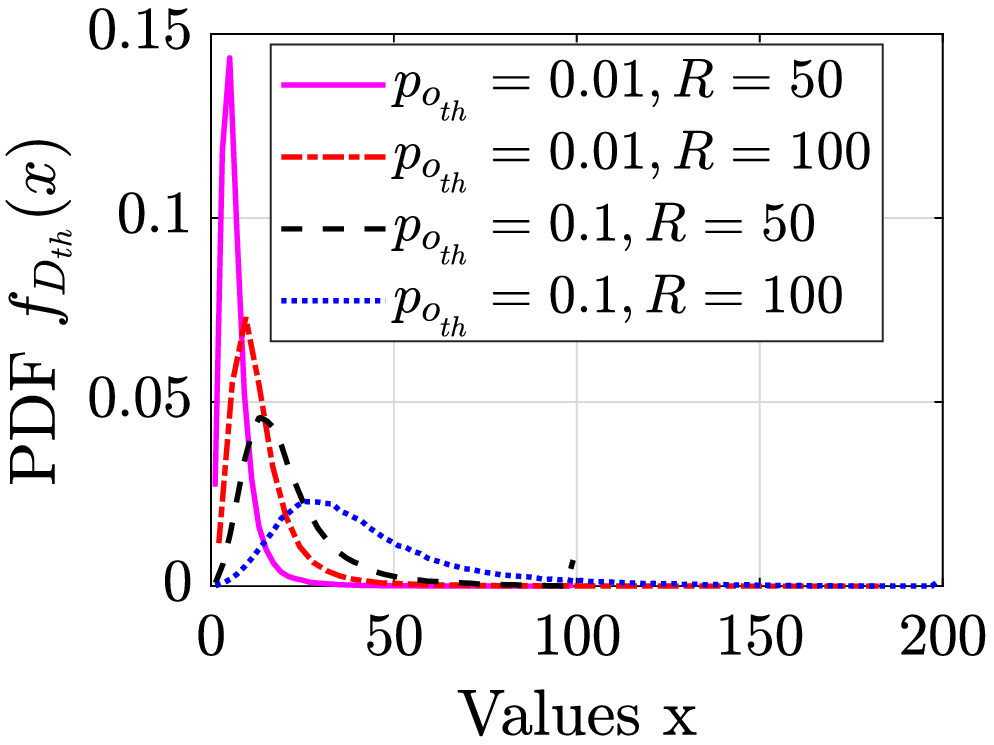}}
    \label{fig:10a}
 \vspace{-1mm} 
 \subfigure[\textit{ Jamming with $P_{\mathcal{J}}= 2 P_{\mathcal{S}}$}] 
 { \centering\includegraphics[width=1.6in]{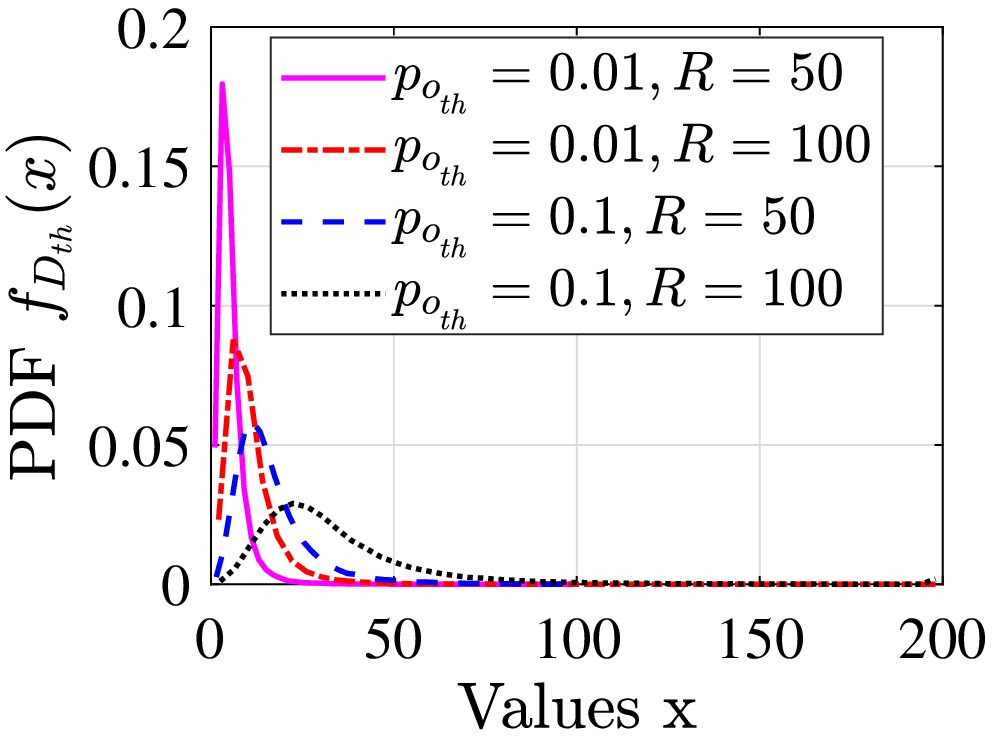}}
 \label{fig:10b}
       \caption{ \footnotesize \textit{ Impact of channel fading randomness on  $D_{th}$.} }%
\end{figure}
\subsection{Relative Severity of Eavesdropping and Jamming}
As discussed in Section 5, for a given secrecy rate threshold $C_{st}$ and channel parameters, there exists a corresponding threshold   for distance threshold  $D_o$ such that for  $C_{st}\leq \log_2({1+\frac{\kappa_{_{_{\mathcal{S}\mathcal{U}}}}}{{D}^{\theta}}})$ or equivalently $D \leq D_o$, outage is caused by   $\mathcal{A}$ only. For this scenario, under eavesdropping   $p_o^\mathcal{E}$ initially increases  with  increased $r$ and 
saturates after reaching $\alpha=r^2/R^2$. For $C_{st}>\log_2({1+\frac{\kappa_{_{_{\mathcal{S}\mathcal{U}}}}}{{D}^{\theta}}})$ or equivalently $D>D_o$,  there is a sharp increase in 
SOP 
caused by randomness in propagation losses along with security breach due to presence of $\mathcal{A}$ because there is no guarantee to  support the high secrecy rate  $C_{st}$ by legitimate link.
SOPs $\hat{p}^\mathcal{E}_{o}$ and $\hat{p}^\mathcal{J}_{o}$  increases with $r$ and $P_{\mathcal{J}}$ in  eavesdropping and jamming respectively. 

{We observe that under the same channel conditions for eavesdropping and jamming link, $\mathcal{A}$ in jamming mode with $P_{\mathcal{J}} < P_{\mathcal{S}}$  yields  SOP $\hat{p}^\mathcal{J}_{o}$ less than $\hat{p}^\mathcal{E}_{o}$   in eavesdropping mode with $r=R$. Hence, under these conditions, $\mathcal{A}$ prefers  eavesdropping relative to jamming  to cause the maximum secrecy outage to the legitimate nodes. However, for $r<R$, jammer becomes more harmful than eavesdropper beyond a definite $ C_{st}$.}
\begin{figure}[!t]
\centering
{{\includegraphics[width=2.5in]{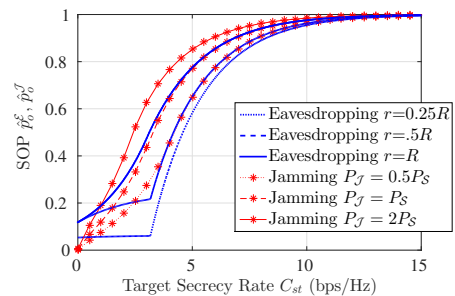}} }

\caption{\footnotesize \textit{ Variation in SOP  under eavesdropping  and jamming with  $R =100$  m, $D=50$ m.}} 
    \label{fig: Eaves_jam}
    \vspace{-2mm}
\end{figure}
This condition is true for both cases: with $P_{\mathcal{J}}<P_{\mathcal{S}}$ and $P_{\mathcal{J}}\geq P_{\mathcal{S}}$.
 At a very low value of $C_{st}$ (near to zero), SOP is higher under eavesdropping  because there may be a probability of zero SC which includes the probability of negative secrecy  capacity as defined by \eqref{eq: C_E}, whereas there is always positive SC under jamming.
  \begin{remark}\label{rem:severity}
 \textit{ By maintaining $D \leq D_o$, SOP $\hat{p}^\mathcal{E}_{o}$ in eavesdropping can be limited to $\alpha=r^2/R^2$. Furthermore, eavesdropping has more severe impact on secrecy performance  than jamming for the same channel conditions for eavesdropping and jamming link unless $r<R$ or $P_{\mathcal{S}}<P_{\mathcal{J}}$. }  
  \end{remark}
 
\section{Concluding Remarks
}\label{sec:conclusion}
 This paper  proposes a novel QoS-aware stochastic system model and investigates the impact of random inter-node distances  under eavesdropping as well as jamming. Development of  ratio distribution of legitimate to attacker link SNR for the proposed system in  closed-form makes an important and significant contribution in the knowledge domain. This has enabled us to derive novel closed-form expressions for SOP. Analytical expressions have been validated against simulation results with almost perfect match. New insights are drawn for designing of system parameters in order to achieve the desired SOP. The proposed model has been shown to design maximum $\mathcal{S}$-$\mathcal{U}$ separation for achieving desired secrecy performance. From attacker point of view, it is shown through numerical results that an  optimal value of eavesdropping range can be designed to cause maximum secrecy outage. It is also shown that no gain is achieved by $\mathcal{A}$ beyond a threshold value of eavesdropping range for given conditions. In the case of jamming, $\mathcal{A}$ gets a diminishing returns beyond a threshold value of jamming power. Also, the conditions are discussed under which there is a linear increment in the maximum allowed separation between legitimate nodes  with increased deployment range for an acceptable SOP. Finally, we compare the severity of eavesdropping and jamming in terms of secrecy outage. It is observed that  for the same set of channel conditions for  $\mathcal{S}$-$\mathcal{A}$ and $\mathcal{A}$-$\mathcal{U}$ links, $\mathcal{A}$ must have more transmit power  than $\mathcal{S}$ to pose more secrecy outage in jamming than eavesdropping provided that it is capable to eavesdrop in entire deployment range.  It is, therefore, established that the proposed analytical model offers a reference framework for inferring deep insights into functioning of eavesdropping and jamming. It can also be leveraged as an analytical design tool for determining system parameters to achieve the required secrecy performance.

\begin{thebibliography}{10}
\providecommand{\url}[1]{#1}
\csname url@samestyle\endcsname
\providecommand{\newblock}{\relax}
\providecommand{\bibinfo}[2]{#2}
\providecommand{\BIBentrySTDinterwordspacing}{\spaceskip=0pt\relax}
\providecommand{\BIBentryALTinterwordstretchfactor}{4}
\providecommand{\BIBentryALTinterwordspacing}{\spaceskip=\fontdimen2\font plus
\BIBentryALTinterwordstretchfactor\fontdimen3\font minus
  \fontdimen4\font\relax}
\providecommand{\BIBforeignlanguage}[2]{{%
\expandafter\ifx\csname l@#1\endcsname\relax
\typeout{** WARNING: IEEEtran.bst: No hyphenation pattern has been}%
\typeout{** loaded for the language `#1'. Using the pattern for}%
\typeout{** the default language instead.}%
\else
\language=\csname l@#1\endcsname
\fi
#2}}
\providecommand{\BIBdecl}{\relax}
\BIBdecl

\bibitem{myICC}
B.~{Ahuja}, D.~{Mishra}, and R.~{Bose}, ``Novel {QoS}-aware physical layer
  security analysis,'' in \emph{Proc. IEEE ICC}, Shanghai, China, May 2019, pp.
  1--6.

\bibitem{IoTPLS15}
A.~Mukherjee, ``Physical-layer security in the {Internet} of {Things}: Sensing
  and communication confidentiality under resource constraints,'' \emph{Proc.
  IEEE}, vol. 103, no.~10, pp. 1747--1761, Oct. 2015.

\bibitem{E1_TWC}
R.~{Zhang}, X.~{Cheng}, and L.~{Yang}, ``Cooperation via spectrum sharing for
  physical layer security in {Device-to-Device} communications underlaying
  cellular networks,'' \emph{IEEE Trans. Wireless Commun.}, vol.~15, no.~8, pp.
  5651--5663, Aug. 2016.

\bibitem{E1_TVT}
R.~{Zhang}, L.~{Song}, Z.~{Han}, and B.~{Jiao}, ``Physical layer security for
  two-way untrusted relaying with friendly jammers,'' \emph{IEEE Trans. Veh.
  Technol.}, vol.~61, no.~8, pp. 3693--3704, Oct. 2012.

\bibitem{R1_TcomJam}
B.~{Li}, Y.~{Zou}, J.~{Zhou}, F.~{Wang}, W.~{Cao}, and Y.~{Yao}, ``Secrecy
  outage probability analysis of friendly jammer selection aided multiuser
  scheduling for wireless networks,'' \emph{IEEE Trans. Commun.}, vol.~67,
  no.~5, pp. 3482--3495, May 2019.

\bibitem{R1_IET}
------, ``Secrecy outage probability analysis of friendly jammer selection
  aided multiuser scheduling for wireless networks,'' \emph{IEEE Trans.
  Commun.}, vol.~67, no.~5, pp. 3482--3495, May 2019.

\bibitem{R1_TWC2PLsurvey}
Y.~{Zou}, ``Physical-layer security for spectrum sharing systems,'' \emph{IEEE
  Trans. Wireless Commun.}, vol.~16, no.~2, pp. 1319--1329, Feb. 2017.

\bibitem{Liu2013}
X.~Liu, ``Outage probability of secrecy capacity over correlated log-normal
  fading channels,'' \emph{IEEE Commun. Lett.}, vol.~17, no.~2, pp. 289--292,
  Feb. 2013.

\bibitem{R32_noma}
L.~{Lv}, Z.~{Ding}, Q.~{Ni}, and J.~{Chen}, ``Secure {MISO-NOMA} transmission
  with artificial noise,'' \emph{IEEE Trans. Veh. Technol.}, vol.~67, no.~7,
  pp. 6700--6705, July 2018.

\bibitem{IET_R21}
N.~S. {Ferdinand}, D.~B. {da Costa}, and M.~{Latva-aho}, ``Effects of outdated
  {CSI} on the secrecy performance of {MISO} wiretap channels with transmit
  antenna selection,'' \emph{IEEE Commun. Lett.}, vol.~17, no.~5, pp. 864--867,
  May 2013.

\bibitem{IET_R22}
X.~{Li}, J.~{Li}, and L.~{Li}, ``Performance analysis of impaired {SWIPT NOMA}
  relaying networks over imperfect {Weibull} channels,'' \emph{IEEE Syst. J.},
  pp. 1--4, 2019, early access.

\bibitem{Pinto08}
P.~C. Pinto, J.~Barros, and M.~Z. Win, ``Physical-layer security in stochastic
  wireless networks,'' in \emph{Proc. IEEE Int. Conf. Commun. (ICC)},
  Guangzhou, China, Nov. 2008, pp. 974--979.

\bibitem{Pinto1}
------, ``Secure communication in stochastic wireless networks part {I}:
  Connectivity,'' \emph{IEEE Trans. Inf. Forensics Security}, vol.~7, no.~1,
  pp. 125--138, Feb. 2012.

\bibitem{distancefading2016}
W.~Liu, Z.~Ding, T.~Ratnarajah, and J.~Xue, ``On ergodic secrecy capacity of
  random wireless networks with protected zones,'' \emph{IEEE Trans. Veh.
  Technol.}, vol.~65, no.~8, pp. 6146--6158, Aug. 2016.

\bibitem{secrecyguard18}
M.~A. {Kishk} and H.~S. {Dhillon}, ``Coexistence of {RF}-powered {IoT} and a
  primary wireless network with secrecy guard zones,'' \emph{IEEE Trans.
  Wireless Commun.}, vol.~17, no.~3, pp. 1460--1473, Mar. 2018.

\bibitem{finite}
S.~Srinivasa and M.~Haenggi, ``Distance distributions in finite uniformly
  random networks: Theory and applications,'' \emph{IEEE Trans. Veh. Technol.},
  vol.~59, no.~2, pp. 940--949, Feb. 2010.

\bibitem{nodaldistance}
F.~Tong and J.~Pan, ``Random-to-random nodal distance distributions in finite
  wireless networks,'' \emph{IEEE Trans. Veh. Technol.}, vol.~66, no.~11, pp.
  10\,070--10\,083, Nov. 2017.

\bibitem{D2D}
F.~Tong, Y.~Wan, L.~Zheng, J.~Pan, and L.~Cai, ``A probabilistic distance-based
  modeling and analysis for cellular networks with underlaying device-to-device
  communications,'' \emph{IEEE Trans. Wireless Commun.}, vol.~16, no.~1, pp.
  451--463, Jan. 2017.

\bibitem{Wyner_Encoding}
G.~T. Amariucai and S.~Wei, ``Half-duplex active eavesdropping in fast-fading
  channels: A block-markov wyner secrecy encoding scheme,'' \emph{IEEE Trans.
  Inf. Theory}, vol.~58, no.~7, pp. 4660--4677, Jul. 2012.

\bibitem{Garnaev}
A.~Garnaev, M.~Baykal-Gursoy, and H.~V. Poor, ``A game theoretic analysis of
  secret and reliable communication with active and passive adversarial
  modes,'' \emph{IEEE Trans. Wireless Commun.}, vol.~15, no.~3, pp. 2155--2163,
  Mar. 2016.

\bibitem{MyWCL}
B.~{Ahuja}, D.~{Mishra}, and R.~{Bose}, ``Optimal green hybrid attacks in
  secure {IoT},'' \emph{IEEE Wireless Commun. Lett.}, pp. 1--1, 2019, early
  access.

\bibitem{R31_noma}
L.~{Lv}, Z.~{Ding}, J.~{Chen}, and N.~{Al-Dhahir}, ``Design of secure {NOMA}
  against full-duplex proactive eavesdropping,'' \emph{IEEE Wireless Commun.
  Lett.}, vol.~8, no.~4, pp. 1090--1094, Aug. 2019.

\bibitem{MyICC20}
B.~{Ahuja}, D.~{Mishra}, and R.~{Bose}, ``Fairness-aware subcarrier allocation
  to combat full duplex eavesdropping and jamming attacks in {IoT},'' in
  \emph{Proc. IEEE ICC}, Dublin, Ireland, 2020, pp. 1--6, accepted.

\bibitem{TWC19Hybrid}
H.~{Chen}, X.~{Tao}, N.~{Li}, Y.~{Hou}, J.~{Xu}, and Z.~{Han}, ``Secrecy
  performance analysis for hybrid wiretapping systems using random matrix
  theory,'' \emph{IEEE Trans. Wireless Commun.}, vol.~18, no.~2, pp.
  1101--1114, Feb. 2019.

\bibitem{HW_R23}
H.~{Masoumi} and M.~J. {Emadi}, ``Performance analysis of cell-free massive
  {MIMO} system with limited fronthaul capacity and hardware impairments,''
  \emph{IEEE Trans. Wireless Commun.}, pp. 1--1, 2019, early access.

\bibitem{HW_R24}
X.~{Li}, J.~{Li}, Y.~{Liu}, Z.~{Ding}, and A.~{Nallanathan}, ``Residual
  transceiver hardware impairments on cooperative {NOMA} networks,'' \emph{IEEE
  Trans. Wireless Commun.}, pp. 1--1, 2019, early access.

\bibitem{HW_R25}
X.~{Li}, M.~{Liu}, C.~{Deng}, P.~T. {Mathiopoulos}, Z.~{Ding}, and Y.~{Liu},
  ``Full-duplex cooperative {NOMA} relaying systems with {I/Q} imbalance and
  imperfect {SIC},'' \emph{IEEE Wireless Commun. Lett.}, pp. 1--1, 2019, early
  access.

\bibitem{HW_R26}
E.~{Ahmed} and A.~M. {Eltawil}, ``On phase noise suppression in full-duplex
  systems,'' \emph{IEEE Trans. Wireless Commun.}, vol.~14, no.~3, pp.
  1237--1251, Mar. 2015.

\bibitem{Goldsmith}
A.~Goldsmith, \emph{Wireless Communications}, 1st~ed.\hskip 1em plus 0.5em
  minus 0.4em\relax Cambridge University Press, 2004.

\bibitem{Wyner}
A.~D. Wyner, ``The wire-tap channel,'' \emph{The Bell System Technical
  Journal}, vol.~54, no.~8, pp. 1355--1387, Oct. 1975.

\bibitem{Poor11}
Q.~Zhu, W.~Saad, Z.~Han, H.~V. Poor, and T.~BaÅŸar, ``Eavesdropping and jamming
  in next-generation wireless networks: A game-theoretic approach,'' in
  \emph{Proc. IEEE Military Commun. Conf. (MILCOM)}, Baltimore, USA, Nov. 2011,
  pp. 119--124.

\bibitem{Ryu}
J.~Y. Ryu, J.~Lee, and T.~Q.~S. Quek, ``Transmission strategy against
  opportunistic attack for {MISO} secure channels,'' \emph{IEEE Commun. Lett.},
  vol.~20, no.~11, pp. 2304--2307, Nov. 2016.

\bibitem{Fischbach2000}
S.-J. Tu and E.~Fischbach, \emph{A New Geometric Probability Technique for an
  N-Dimensional Sphere and Its Applications}, math-ph/0004021~ed.\hskip 1em
  plus 0.5em minus 0.4em\relax arxiv.org, 2000, vol. arXiv preprint.

\bibitem{betaintegral}
E.~W. {Weisstein}, ``{Incomplete Beta Function},'' accessed Sep. 25, 2018,
  \url{http://bias.csr.unibo.it/fvc2002/databases.asp}.

\bibitem{Omiyi}
P.~Omiyi, H.~Haas, and G.~Auer, ``Analysis of {TDD} cellular interference
  mitigation using busy-bursts,'' \emph{IEEE Trans. Wireless Commun.}, vol.~6,
  no.~7, pp. 2721--2731, Jul. 2007.

\bibitem{populis}
A.~{Papoulis} and S.~U. {Pillai}, \emph{{Probability, Random Variables, and
  Stochastic Processes}}, 4th~ed.\hskip 1em plus 0.5em minus 0.4em\relax
  McGraw-Hill Higher Education, 2002.

\bibitem{Bloch}
M.~Bloch, J.~Barros, M.~R.~D. Rodrigues, and S.~W. McLaughlin, ``Wireless
  information-theoretic security,'' \emph{IEEE Trans. Inf. Theory}, vol.~54,
  no.~6, pp. 2515--2534, June 2008.

\bibitem{Gradshteyn}
I.~R. I.S.~Gradshteyn, \emph{Table of Integrals, series and products},
  7th~ed.\hskip 1em plus 0.5em minus 0.4em\relax Elseveir, 2007.

\bibitem{IoTsecurity}
R.~{Rezaei}, S.~{Sun}, X.~{Kang}, Y.~L. {Guan}, and M.~R. {Pakravan}, ``Secrecy
  throughput maximization for full-duplex wireless powered {IoT} networks under
  fairness constraints,'' \emph{IEEE Internet Things J.}, vol.~6, no.~4, pp.
  6964--6976, Aug. 2019.

\bibitem{White}
M.~{Andersson}, ``{Short range low power wireless devices and {Internet} of
  {Things} {(IoT)}},'' U-{Blox}, Tech. Report,, Tech. Rep., 2015.

\end{thebibliography}


\end{document}